\theoremstyle{plain}
\newtheorem{theorem}{Theorem}
\newtheorem{lemma}{Lemma}
\newtheorem{corollary}{Corollary}
\DeclareMathOperator{\rot}{rot}
\newcommand{\cupdot}{\mathaccent\cdot\cup}
\newcommand{\eps}{\varepsilon}
\newcommand{\inedge}{{\mathrm{in}}}
\newcommand{\outedge}{{\mathrm{out}}}
\newcommand{\true}{\texttt{true}\xspace}
\DeclareRobustCommand{\bfseries}{%
   \not@math@alphabet\bfseries\mathbf
   \fontseries\bfdefault\selectfont
   \boldmath
}
\title{Complexity of Higher-Degree Orthogonal Graph Embedding in the
  Kandinsky Model}
\author{Thomas Bläsius\thanks{Faculty of Informatics, Karlsruhe
    Institute of Technology, Karlsruhe, \texttt{blaesius@kit.edu,
      guido.brueckner@student.kit.edu}} \and Guido Brückner$^*$ \and
  Ignaz Rutter\thanks{Faculty of Informatics, Karlsruhe Institute of
    Technology, Karlsruhe and Department of Applied Mathematics,
    Faculty of Mathematics and Physics, Charles University, Prague,
    \texttt{rutter@kit.edu}}}
 \date{}
\begin{document}

\maketitle

\begin{abstract}
  We show that finding orthogonal grid-embeddings of plane graphs
  (planar with fixed combinatorial embedding) with the minimum number
  of bends in the so-called Kandinsky model (which allows vertices of
  degree $> 4$) is NP-complete, thus solving a long-standing open
  problem.  On the positive side, we give an efficient algorithm for
  several restricted variants, such as graphs of bounded branch width
  and a subexponential exact algorithm for general plane graphs.
\end{abstract}

\section{Introduction}
\label{sec:introduction}

Orthogonal grid embeddings are a fundamental topic in computer science
and the problem of finding suitable grid embeddings of planar graphs
is a subproblem in many applications, such as graph
visualization~\cite{t-hgdv-13} and VLSI
design~\cite{l-aeglv-80,v-ucvc-81}.  Aside from the area requirement,
the typical optimization goal is to minimize the number of bends on
the edges (which also heuristically minimizes the area).
Traditionally, grid embeddings have been studied for graphs with
maximum degree~4, which is natural since it allows to represent
vertices by grid points and edges by internally disjoint chains of
horizontal and vertical segments on the grid.  For a fixed
combinatorial embedding Tamassia showed that the number of bends in
the grid embedding can be efficiently minimized~\cite{gt-ccurpt-01};
the running time was recently reduced to $O(n^{1.5})$~\cite{ck-am-12}.
In contrast, if the combinatorial embedding is not fixed, it is
NP-complete to decide whether a 0-embedding (a \emph{$k$-embedding} is
a planar orthogonal grid embedding with at most $k$ bends per edge)
exists~\cite{gt-ccurpt-01}, thus also showing that bend minimization
is NP-complete and hard to approximate within a factor of
$O(n^{1-\eps})$.  In contrast, a 2-embedding exists for every graph
except the octahedron~\cite{bk-bhod-98}.  Recently it was shown that
the existence of a 1-embedding can be tested
efficiently~\cite{bkrw-odfc-14}.  The problem is FPT if some subset of
size $k$ has to have $0$ bends~\cite{blr-ogdie-14}.  If there are no
$0$-bend edges, it is even possible to minimize the number of bends in
the embedding, where the first bend on each edge is not
counted~\cite{brw-oodc-13}.

The main drawback of all these results is that they only apply to
graphs of maximum degree~4.  There have been several suggestions for
possible generalizations to allow vertices of higher
degree~\cite{km-qodpg-98,tbb-agdrd-88}.  For example, it is possible
to model higher-degree vertices by boxes whose shape is restricted to
a rectanlge.  The disadvantage is that, in this way, the vertices may
be stretched arbitrarily in order to avoid bends.  In particular, a
visibility representation of a graph can be interpreted as a
$0$-embedding in this model (and such a representation exists for
every planar graph).  It is thus natural to forbid stretching of
vertices.

Fößmeier and Kaufmann~\cite{fk-ddgn-95} proposed a generalization of
planar orthogonal grid embeddings, the so-called Kandinsky model
(originally called podevsnef), that overcomes this problem and
guarantees that vertices are represented by boxes of uniform size.
Essentially their model allows to map vertices to grid points on a
coarse grid, while routing the edges on a much finer grid.  The
vertices are then interpreted as boxes on the finer grid, thus
allowing several edges to emanate from the same side of a vertex; see
Section~\ref{sec:preliminaries} for a precise definition.  Fößmeier
and Kaufmann model the bend minimization in the fixed combinatorial
embedding setting by a flow network similar to the work of
Tamassia~\cite{t-emn-87} but with additional constraints that limit
the total amount of flow on some pairs of edges.  Fößmeier et
al.~\cite{fkk-2dpg-97} later showed that every planar graph admits a
1-embedding in this model.  Concerning bend minimization, reductions
of the mentioned flow networks to ordinary minimum cost flows have
been claimed both for general bend minimization~\cite{fk-ddgn-95} and
for bend minimization when every edge may have at most one
bend~\cite{fkk-2dpg-97}. 

However, Eiglsperger~\cite{e-aldta-03} pointed out that the reductions
to minimum cost flow is flawed and gave an efficient
$2$-approximation.  Bertolazzi et al.~\cite{bdd-codmnb-00} introduced
a restricted variant of the Kandinsky model (which in general requires
more bends), for which bend minimization can be done in polynomial
time.  Although the Kandinsky model has been later vastly generalized,
e.g., to apply to the layout of UML class
diagrams~\cite{egkkj-aldo-04}, the fundamental question about the
complexity of the bend minimization problem in the Kandinsky model has
remained open for almost two decades.

\paragraph{Contribution and Outline.} 

In this work, we show that the bend minimization problem in the
Kandinsky model is NP-complete even for graphs with a fixed
combinatorial embedding (no matter if we allow or forbid so called
empty faces).  This also holds if each edge may have at most one bend;
see Section~\ref{sec:hardness}.  As an intermediate step, we show
NP-hardness of the problem \textsc{Orthogonal 01-Embeddability}, which
asks whether a plane graph (with maximum degree~4) admits a grid
embedding when requiring some edges to have exactly one and the
remaining edges to have zero bends.  This is an interesting result on
its own, as it can serve as tool to show hardness of other grid
embedding problems.  In particular, it gives a simpler proof for the
hardness of deciding~$0$-embeddability in classic grid embeddings for
graphs with a variable combinatorial embedding.

We then study the complexity of the problem subject to structural
graph parameters in Section~\ref{sec:subexp-algor}.  For graphs with
branch width~$k$, we obtain an algorithm with running time~$2^{O(k
  \log n)}$.  For fixed branch width this yields a polynomial-time
algorithm (running time $O(n^3)$ for series-parallel graphs), for
general plane graphs the result is an exact algorithm with
subexponential running time $2^{O(\sqrt{n}\log n)}$.

\section{Preliminaries}
\label{sec:preliminaries}

The graphs we consider are always \emph{plane}, i.e., they are planar
and have a fixed combinatorial embedding.  A planar graph is 4-planar
if it has maximum degree~4.  It is 4-plane, if it has a fixed
combinatorial embedding.

\subsection{Kandinsky Embedding}
\label{sec:kandinsky-embedding}

Let $G$ be a plane graph.  An \emph{orthogonal embedding} of $G$ maps
each vertex to a grid point and each edge to a path in the grid such
that the resulting drawing is planar and respects the combinatorial
embedding of $G$; see Figure~\ref{fig:kandinsky-embedding}a for an
example.  Clearly, $G$ admits an orthogonal embedding if and only if
no vertex has degree larger than~$4$.  The Kandinsky model introduced
by Fößmeier and Kaufmann~\cite{fk-ddgn-95} is a way to overcome this
limitation.  A \emph{Kandinsky embedding} of $G$ maps each vertex to a
box of constant size centered at a grid point and each edge to a path
in a finer grid such that the resulting drawing is planar and respects
the combinatorial embedding of $G$; see
Figure~\ref{fig:kandinsky-embedding}b for an example.  In a
Kandinsky embedding, a face is \emph{empty} if it does not include a
grid cell of the coarser grid; see
Figure~\ref{fig:kandinsky-embedding}c.  Empty faces are empty in the
sense that there is not enough space to add a vertex inside.  Usually,
one forbids empty faces in Kandinsky embeddings as allowing empty
faces requires a special treatment for faces of size~$3$ compared to
larger faces and cycles that are no faces.  In the following, we
always assume that empty faces are forbidden except when explicitly
allowing them.

\begin{figure}
  \centering
  \includegraphics[page=1]{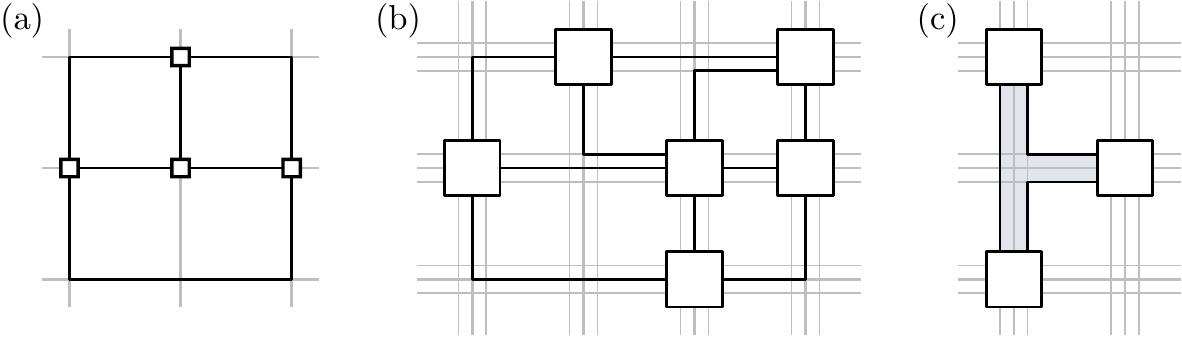}
  \caption{(a)~An orthogonal embedding of the $K_4$.  (b)~A Kandinsky
    embedding of the wheel of size~$5$.  (c)~A Kandinsky embedding
    with an empty face.}
  \label{fig:kandinsky-embedding}
\end{figure}

Every Kandinsky embedding has the so called \emph{bend-or-end
  property}, which can be stated as follows.  One can declare a bend
on an edge $e = uv$ to be \emph{close} to $v$ if it is the first bend
when traversing $e$ from $v$ to $u$ with the additional requirement
that a bend cannot be close to both endpoints $u$ and $v$.  The
bend-or-end property requires that an angle of $0^\circ$ between edges
$uv$ and $vw$ in the face $f$ implies that at least one of the edges
$uv$ and $vw$ has a bend close to $v$ that is concave in $f$
($270^\circ$ angle).  Note that the triangle in
Figure~\ref{fig:kandinsky-embedding}c does not have this property as
the two concave bends cannot be close to all three vertices with
$0^\circ$ angles.

\subsection{Kandinsky Representation}
\label{sec:kand-repr}

A Kandinsky embedding of a planar graph $G$ can be specified in three
stages.  First, its topology is fixed by choosing a combinatorial
embedding of $G$ (which we assume to get with the input).  Second, its
shape in terms of angles between edges and sequences of bends on edges
is fixed.  Third, the geometry is fixed by specifying integer
coordinates for all vertices and bend points.  In analogy to the
definition of combinatorial embeddings as equivalence classes of
planar drawings with the same topology, one can define \emph{Kandinsky
  representations} as equivalence classes of Kandinsky embeddings with
the same topology and the same shape.  As the number of bends (and
thus the cost of an embedding) depends only on the shape and not on
the geometry, we can focus on finding Kandinsky representations and
thus neglect the geometry (at least if we make sure that every
Kandinsky representation has a geometric realization as a Kandinsky
embedding).  For orthogonal embeddings, this approach was introduced
by Tamassia~\cite{t-emn-87}.  It was extended to Kandinsky embeddings
by Fößmeier and Kaufmann~\cite{fk-ddgn-95}.

Let $G$ be a planar graph with the Kandinsky embedding $\Gamma$.  Let
$f$ be a face with the edge $e_1$ in its boundary and let $e_2$ be the
successor of $e_1$ in clockwise direction (counter-clockwise if $f$ is
the outer face).  Let further $v$ be the vertex between $e_1$ and
$e_2$ and let $\alpha$ be the angle at $v$ in $f$.  We define the
\emph{rotation} $\rot_f(e_1, e_2)$ between $e_1$ and $e_2$ to be
$\rot_f(e_1, e_2) = 2-\alpha/90^\circ$; see
Figure~\ref{fig:ortho-rep-1}a.  The rotation $\rot_f(e_1,e_2)$ can
be interpreted as the number of right turns between the edges $e_1$
and $e_2$ at the vertex $v$ in the face $f$.  Note that $e_1 = e_2$ if
$v$ has degree~1, which yields $\rot_f(e_1,e_2) = -2$.  In case it is
clear from the context which two edges are meant when referring to the
vertex $v$ in the face $f$, we also write $\rot_f(v)$ instead of
$\rot_f(e_1, e_2)$ and call it the \emph{rotation of $v$ in $f$}.

\begin{figure}
  \centering
  \includegraphics[page=1]{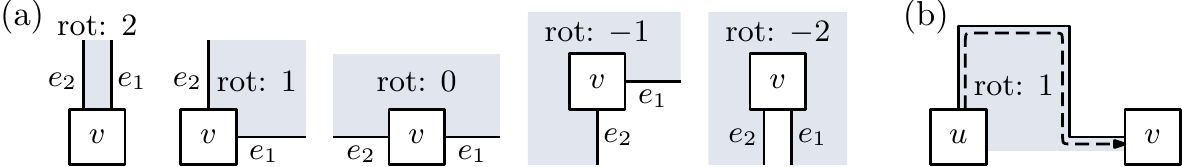}
  \caption{(a)~The possible rotations at a vertex in the face $f$
    (shaded blue). (b)~The rotation of an edge.}
  \label{fig:ortho-rep-1}
\end{figure}

The shape of every edge can also be described in terms of its
rotation.  Let $u$ be a vertex in the boundary of the face $f$ and let
$v$ be its successor in clockwise direction (counter-clockwise if $f$
is the outer face).  Let further $e = uv$ be the corresponding edge.
The \emph{rotation} $\rot_f(e)$ of $e$ in $f$ is the number of right
bends minus the number of left bends one encounters, when traversing
$e$ from $u$ to~$v$; see Figure~\ref{fig:ortho-rep-1}b.  Note that
every edge has two rotations, one in each face it bounds.  Note
further, that our notation is not precise for bridges, as a bridge is
incident to the same face twice.  However, it will always be clear
from the context which incidence is meant, hence there is no need to
complicate the notation.

Let $uv, vw$ be a path of length~$2$ in the face $f$.  If the two
edges form an angle of $0^\circ$ (i.e., $\rot_f(v) = 2$), the
bend-or-end property of Kandinsky drawings ensures that at least one
of the two edges $uv$ or $vw$ has a bend close to $v$ that forms an
angle of $270^\circ$ in $f$.  To represent this information of which
bends are declared to be close to vertices we introduce some
additional rotations.  Consider the edge $uv$ and let $f$ be an
incident face.  If $uv$ has a bend close to $v$ we define the
\emph{rotation $\rot_f(uv[v])$ at the end} $v$ of $uv$ to be $1$ if it
is a right bend and $-1$ if it is a left bend.  If $uv$ has no bend
close to $v$, we set $\rot_f(uv[v]) = 0$.

It is easy to see, that every Kandinsky representation satisfies the
following properties.  Moreover, it is known that a set of values for
the rotations is a Kandinsky representation if it satisfies these
properties~\cite{fk-ddgn-95} (i.e., there exists a Kandinsky embedding
with these rotation values).
\begin{compactenum}[(1)]
\item \label{item:rotation-of-face}The sum over all rotations in a face
  is~$4$ ($-4$ for the outer face).
\item \label{item:consistent-edges}For every edge $uv$ with incident
  face $f_\ell$ and $f_r$, we have $\rot_{f_\ell}(uv) + \rot_{f_r}(uv)
  = 0$, $\rot_{f_\ell}(uv[u]) + \rot_{f_r}(uv[u]) = 0$, and
  $\rot_{f_\ell}(uv[v]) + \rot_{f_r}(uv[v]) = 0$.
\item \label{item:rotation-around-vertex}The sum of rotations around a
  vertex $v$ is $2\cdot \deg(v) - 4$.
\item \label{item:rotation-range-at-vertex}The rotations at vertices
  lie in the range $[-2, 2]$.
\item \label{item:0-deg-bend-assignment}If $\rot_f(uv, vw) = 2$ then
  $\rot_f(uv[v]) = -1$ or $\rot_f(vw[v]) = -1$.
\end{compactenum}

If the face is clear from the context, we often omit the subscript in
$\rot_f$.  Note that the rotation of an edge $uv$ is split into three
parts; the rotations $\rot(uv[u])$ and $\rot(uv[v])$ at the ends of
$uv$ and a \emph{rotation $\rot(uv[-])$ in the center} of $uv$.  It
holds $\rot(uv) = \rot(uv[u]) + \rot(uv[-]) + \rot(uv[v])$ (thus it is
not necessary to have $\rot(uv[-])$ contained in the representation).
We can assume without loss of generality that all bends accounting for
the rotation in the center bend in the same direction, thus the edge
$uv$ has $|\rot(uv[u])| + |\rot(uv[-])| + |\rot(uv[v])|$ bends.
Hence, the number of bends depend only on the Kandinsky representation
and not on the actual embedding.

Let $f$ be a face of $G$ and let $u$ and $v$ be two vertices on the
boundary of $f$.  By $\pi_f(u,v)$ we denote the path from $u$ to $v$
on the boundary of $f$ in clockwise direction (counter-clockwise for
the outer face).  The rotation $\rot_f(\pi)$ of a path $\pi$ in the
face $f$ is defined as the sum of all rotations of edges and inner
vertices of $\pi$ in $f$.

Note that an orthogonal embedding (of a graph with maximum degree~$4$)
is basically a Kandinsky embedding without $0^\circ$ angles at
vertices.  Thus, we can define \emph{orthogonal representations},
representing an equivalence class of orthogonal embeddings, as
Kandinsky representations where rotation~$2$ at vertices is not
allowed (the resulting notion, although it differs slightly, is
equivalent to the one introduced by Tamassia~\cite{t-emn-87}).

\subsection{Network Flows}

We will need the following result on the existence of feasible flows
in flow networks where the capacity of edges is large compared to the
absolute demands of the nodes in network.

\begin{lemma}
  \label{lem:existence-feasible-flow}
  Let~$N=(V,A)$ be a flow network with demands~$d \colon V \to
  \mathbb{R}$ (with $\sum_{v\in V} d(v) = 0$) and capacities~$c \colon
  A \to \mathbb{R}$ such that $c(e) \ge \sum_{v \in V} |d(v)|$ for
  all~$e \in E$.  Then there exists a feasible flow in~$N$.
\end{lemma}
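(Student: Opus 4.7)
The plan is to reduce to the max-flow min-cut theorem and then observe that the capacity hypothesis makes the min-cut bound nearly trivial. Introduce a super-source $s^*$ and a super-sink $t^*$, add an arc $(s^*,v)$ of capacity $d(v)$ for every $v$ with $d(v)>0$, and an arc $(v,t^*)$ of capacity $|d(v)|$ for every $v$ with $d(v)<0$. Setting $D:=\sum_{v\in V}|d(v)|$, a feasible flow in $N$ exists if and only if the augmented network admits an $s^*$--$t^*$ flow of value $D/2$ (equivalently, one that saturates every super-source arc).

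By max-flow min-cut, it then suffices to prove that every $(s^*,t^*)$-cut has capacity at least $D/2$. Such a cut is induced by a partition $V = S \cupdot \bar S$ with $s^*\in S$, $t^*\in\bar S$, and its capacity equals
\[
  c(S,\bar S) \;+\; \sum_{v \in \bar S,\, d(v)>0} d(v) \;+\; \sum_{v \in S,\, d(v)<0} |d(v)|.
\]
Subtracting $D/2 = \sum_{v:\,d(v)>0} d(v)$ from both sides reduces the desired inequality to the much cleaner statement $c(S,\bar S) \geq \sum_{v\in S} d(v)$. Since $\sum_{v\in S} d(v) \leq \sum_{v\in S} |d(v)| \leq D$, the hypothesis $c(e)\geq D$ immediately yields $c(S,\bar S)\geq D$ whenever at least one arc crosses the cut, which is the entire content of the lemma.

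The only real obstacle is the degenerate case in which no arc crosses $(S,\bar S)$, so that $c(S,\bar S)=0$ and the needed bound becomes $\sum_{v\in S} d(v)\leq 0$. This is handled by the (implicit) assumption that $N$ is strongly connected, which forces $S=\emptyset$ or $S=V$; in both cases the sum is $0$ by the balance condition $\sum_v d(v)=0$. I would state this connectivity assumption explicitly at the outset of the proof, since in every application of the lemma in this paper the flow network arises from a planar Kandinsky representation and is strongly connected by construction.
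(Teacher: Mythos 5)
Your proof is correct and takes a genuinely different route from the paper's. You reduce to max-flow min-cut via the standard super-source/super-sink construction and check that every $(s^*,t^*)$-cut has capacity at least $D/2$, which after cancellation amounts to $c(S,\bar S)\ge\sum_{v\in S}d(v)$ and follows at once from $c(e)\ge D$ as soon as a single arc crosses the cut. The paper instead argues directly on the flow: it starts from an arbitrary flow satisfying the demands while ignoring capacities and repeatedly cancels flow along a directed cycle through any over-saturated arc; if no cancelling cycle exists, the set of vertices reachable along positive-flow arcs yields a cut into which more than $c(e)\ge\sum_v|d(v)|$ units enter and none leave, contradicting the demand balance across that cut. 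The two arguments are essentially dual uses of the same cut inequality; yours buys a clean appeal to a black-box theorem, the paper's is self-contained and closer to an algorithm for actually producing the flow. Notably, both need an unstated hypothesis to dispose of degenerate cuts: you require strong connectivity (and rightly flag it), while the paper silently assumes that a demand-satisfying flow ignoring capacities exists at all --- the lemma as literally stated fails for, say, a two-vertex network with no arcs and demands $1$ and $-1$, where the capacity condition is vacuous. Your explicit connectivity assumption holds in the one place the lemma is applied (the dual of the variable-clause graph, with both arc directions available on every dual edge since each pipe may bend either way), so making it explicit is, if anything, an improvement on the paper's treatment.
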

\begin{proof}
  Let~$\Phi$ be an arbitrary flow satisfying the demands at all
  vertices, but possibly violating the capacity constraints.  Let~$a =
  (u,v) \in A$ with~$\Phi(a) > c(e)$.  If there exists a directed path
  from~$v$ to~$u$ all whose arcs have positive flow, we can decrease
  the amount of flow on this cycle by~1.  After finitely many such
  steps, we then obtain the desired flow.  Hence, assume for the sake
  of contradiction that such a path does not exist.  Let~$S \subseteq
  V$ be the vertices that can be reached from $v$.  Note that~$v \in
  S$ and $u \notin S$.  Hence,~$S$ defines a cut in~$N$ whose outgoing
  arcs have flow~$0$.  In any valid flow the amount of flow entering
  $S$ minus the flow leaving $S$ must equal~$\sum_{v \in S} d(v) \le
  \sum_{v \in S} |d(v)| \le \sum_{v \in V} |d(v)| \le c(e)$.  On the
  other hand, the flow entering $S$ is at least $\Phi(a) > c(e)$ while
  no flow is leaving $S$, a contradiction.
\end{proof}

\section{Complexity}
\label{sec:hardness}

Let $\mathcal S = (\mathcal X, \mathcal C)$ be an instance of
\textsc{3-Sat} with variables $\mathcal X = \{x_1, \dots, x_n\}$ and
clauses $\mathcal C = \{c_1, \dots, c_m\}$.  A clause is a
\emph{positive clause} if it contains only positive literals, a
\emph{negative clause} if it contains only negative literals, and a
\emph{mixed clause} otherwise.  In the \emph{variable-clause graph},
every variable and ever clause is a vertex and there is an edge $xc$
connecting a variable $x \in \mathcal X$ with a clause $c \in \mathcal
C$ if and only if $x \in c$ or $\neg{x} \in c$.

In a \emph{monotone rectilinear representation} of the variable-clause
graph, the variables are represented as horizontal line segments on
the $x$-axis, the positive and negative clauses are represented as
horizontal line segments below and above the $x$-axis, respectively,
and a variable is connected to an adjacent clause by a vertical line
segment such that no two line segments cross.  Note that an instance
admitting a monotone rectilinear representation cannot contain mixed
clauses.  An instance of \textsc{Planar Monotone 3-Sat} is an instance
$S = (\mathcal X, \mathcal C)$ of \textsc{3-Sat} together with a
monotone rectilinear representation of its variable-clause graph; see
Figure~\ref{fig:3-sat-example-1} for an example.  De Berg and
Khosravi~\cite{dk-obsps-12} show that \textsc{Planar Monotone 3-Sat}
is NP-hard.

\begin{figure}
  \centering
  \includegraphics[page=1]{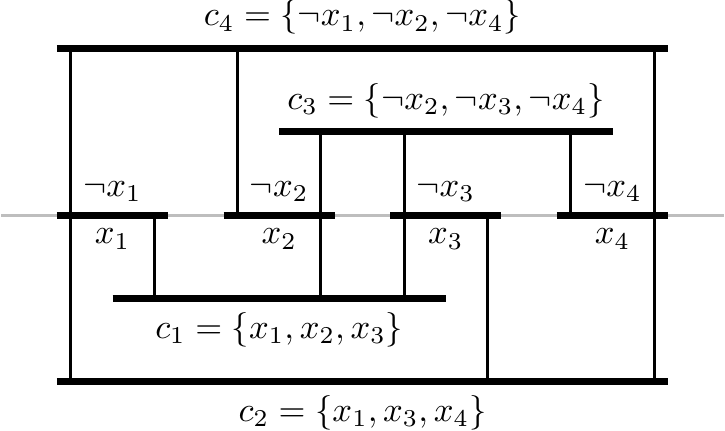}
  \caption{The instance of \textsc{Planar Monotone 3-Sat} with
    variables $x_1, \dots, x_4$ and clauses $\{x_1, x_2, x_3\}$,
    $\{x_1, x_3, x_4\}$, $\{\neg x_2, \neg x_3, \neg x_4\}$, and
    $\{\neg x_1, \neg x_2, \neg x_4\}$.}
  \label{fig:3-sat-example-1}
\end{figure}

The problem \textsc{Orthogonal 01-Embeddability} is defined as
follows.  Given a 4-plane graph $G = (V, E)$ and with partitioned edge
set $E = E_0 \cupdot E_1$, test whether $G$ admits an orthogonal
drawing such that every edge in $E_i$ has exactly $i$ bends.  We also
refer to the edges in~$E_0$ and~$E_1$ as~$0$- and $1$-edges,
respectively.  In the following, we always consider the variant of
\textsc{Orthogonal 01-Embeddability} where we allow to \emph{fix
  angles at vertices}, that is the value of $\rot_f(v)$ for a vertex
$v$ with incident face $f$ might be given with the input.  Fixing the
angles at vertices does not make the problem harder since augmenting a
vertex $v$ to have degree~4 by adding degree-1 vertices incident to
$v$ has the same effect as fixing the angles at $v$ (when choosing the
combinatorial embedding appropriately).  Note that this reduces the
case with fixed angles at vertices to the one without fixed angles.
In the following we implicitly allow angles at vertices to be fixed.

In this section we first show that \textsc{Orthogonal
  01-Embeddability} is NP-hard by a reduction from \textsc{Planar
  Monotone 3-Sat}.  Afterwards, we show that \textsc{Kandinsky Bend
  Minimization} is NP-hard by a reduction from \textsc{Orthogonal
  01-Embeddability}.

\subsection{Orthogonal 01-Embeddability}
\label{sec:orth-01-embedd}

Consider a single $1$-edge $e$.  When drawing it, we have to make the
decision to either bend it in one or the other direction.  In the
reduction from \textsc{Planar Monotone 3-Sat}, this basic decision
will encode the decision to set a variable either to \texttt{true} or
to \texttt{false}.  In addition to that, the construction consists of
several building blocks.  For every variable, we need a gadget that
outputs its positive and its negative literal.  Moreover, we build
gadgets representing clauses that admit a correct drawing if and only
if at least one out of three edges that require one bend is bent in
the desired direction.  Since the same literal usually occurs in
several clauses, we need to copy the decision made for one edge to
several edges.  Finally, we need to bring the decisions of the
variables to the clauses without restricting the possible drawings of
the clauses too much.

In the following we first present some simple gadgets that are used as
building blocks in the following constructions.  Then we start with
the variable gadget that outputs the positive and negative literal of
a variable.  Afterwards, we show how to duplicate literals and then
present the so called bendable pipes that are used to bring the value
of a literal to the clauses.  Finally, we present the clause gadget.
In the end, we put these building blocks together and show the
correctness of the construction.

\subsubsection{Building Blocks}
\label{sec:building-blocks}

An interval gadget is a small graph~$G[\rho_1,\rho_2]$ with two
designated degree-1 vertices (its \emph{endpoints}) $s$ and $t$ on the
outer face.  It has the property that the rotation of~$\pi(s,t)$ is in
the interval~$[\rho_1,\rho_2]$ for any orthogonal embedding.  The
construction is similar to the tendrils used by Garg and
Tamassia~\cite{gt-ccurpt-01}; see Figure~\ref{fig:interval-gadget} for
some examples.

\begin{figure}
  \centering
  \includegraphics[page=1]{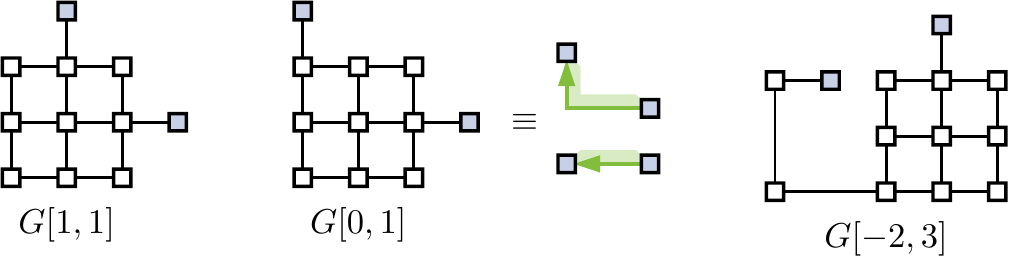}
  \caption{Three different interval gadgets.  The vertices $s$ and $t$
    are marked blue.}
  \label{fig:interval-gadget}
\end{figure}

\begin{lemma}
  \label{lem:interval-gadget}
  The interval gadget $G[\rho_1,\rho_2]$ admits an orthogonal 0-bend
  drawing with rotation $\rho$ if and only if $\rho \in [\rho_1,
  \rho_2]$.
\end{lemma}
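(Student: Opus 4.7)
The plan is to treat each interval gadget as a path $\pi = v_0 v_1 \cdots v_k$ with $v_0 = s$, $v_k = t$, together with small subgraphs (``tendrils'') attached at each internal vertex $v_i$, and to compute $\rot_f(\pi_f(s,t))$ in the outer face $f$ in terms of the rotations at the $v_i$'s. Since by assumption no edge has a bend, property (\ref{item:rotation-of-face}) and the definition of the path rotation give $\rot_f(\pi_f(s,t)) = \sum_{i=1}^{k-1}\rot_f(v_i)$, so the question reduces to bounding the attainable sum of vertex rotations in the outer face.

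First I would handle the necessity ($\rho \in [\rho_1,\rho_2]$ is forced). For each internal vertex $v_i$, I would use property (\ref{item:rotation-around-vertex}) to write $\rot_f(v_i) = 2\deg(v_i) - 4 - \sum_{f' \ne f} \rot_{f'}(v_i)$ and bound the inner contributions by combining property (\ref{item:rotation-range-at-vertex}) with the rigid constraints that the attached tendril imposes on its side of the path. Concretely, each tendril is a subtree all of whose leaves are degree-1; the sum of rotations around its boundary in any face it bounds is forced by property (\ref{item:rotation-of-face}), and this pins down $\sum_{f' \ne f}\rot_{f'}(v_i)$ to a specific interval whose width is the ``flexibility'' the tendril grants at $v_i$. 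Summing these per-vertex intervals yields an interval containing every realizable rotation, and by construction this equals $[\rho_1,\rho_2]$.

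For sufficiency I would give an explicit construction: I would realize the endpoints $\rho_1$ and $\rho_2$ by choosing at each internal vertex the extreme angle in the outer face consistent with its tendril, and then interpolate by locally flipping a single unit of rotation from one internal vertex to its neighbor (each such local modification changes $\rot_f(\pi_f(s,t))$ by exactly $1$), arriving at every integer value in $[\rho_1,\rho_2]$. Checking that each intermediate assignment is still a valid orthogonal representation reduces to verifying properties (\ref{item:rotation-of-face})--(\ref{item:rotation-range-at-vertex}); this is automatic because the change is local to two adjacent vertices and the tendrils themselves are unaffected.

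The main obstacle is purely bookkeeping rather than conceptual: one has to describe the family of gadgets (indexed by pairs $\rho_1 \le \rho_2$) precisely enough that both the upper/lower bounds and the interpolation argument become routine applications of properties (\ref{item:rotation-of-face})--(\ref{item:rotation-range-at-vertex}). I would therefore parametrize the construction by the number of ``left-turn tendrils'' and ``right-turn tendrils'' attached along $\pi$, verify the contribution of each tendril type to the range of $\rot_f(v_i)$ once, and then obtain the lemma as a direct sum over internal vertices.
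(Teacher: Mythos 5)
The paper itself offers no proof of this lemma: it is asserted with a pointer to Figure~\ref{fig:interval-gadget} and to the tendril construction of Garg and Tamassia, so there is no argument of the authors' to compare yours against. Your overall strategy --- necessity by summing per-vertex rotation ranges via Properties~(\ref{item:rotation-around-vertex}) and~(\ref{item:rotation-range-at-vertex}), sufficiency by exhibiting the two extremal drawings and interpolating in unit steps --- is the natural formalization and is the right shape of argument.

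There is, however, one concrete flaw in the bookkeeping that you should repair before the plan becomes a proof. You identify $\pi_f(s,t)$ with the backbone path $v_0\cdots v_k$ and write $\rot_f(\pi_f(s,t)) = \sum_{i}\rot_f(v_i)$ with a single rotation per backbone vertex. This is only valid if every tendril is embedded on the side of the backbone facing away from $f$. But a degree-3 backbone vertex whose pendant lies on the far side contributes a rotation in $\{0,1\}$ to $f$ (the two far-side angles sum to $2-\rot_f(v_i)$ and are each at most $1$, forcing $\rot_f(v_i)\ge 0$), and a bare degree-2 vertex contributes anything in $[-1,1]$; neither lets you pin down a negative lower endpoint such as the $\rho_1=-2$ of the gadget $G[-2,3]$ used in the clause construction. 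To enforce negative rotations the gadget must carry structure on the $f$-side, and then the boundary walk of $f$ from $s$ to $t$ enters and leaves those tendrils, so $\pi_f(s,t)$ contains their vertices and edges as well --- for instance a degree-1 tendril vertex on that walk contributes $-2$, and the attachment vertex contributes two angles to $f$ rather than one. Your per-vertex interval analysis and your ``flip one unit between adjacent vertices'' interpolation both need to be restated over the full boundary walk, with the tendril contributions accounted for explicitly; once the gadget family is fixed this is routine, but as written the central identity is false for exactly the gadgets with $\rho_1<0$ that the reduction actually uses.
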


The interval gadget we use most frequently in the following is
$G[0,1]$, which behaves like an edge that may have one bend, but only
into a fixed direction (recall that the combinatorial embedding of our
graph is fixed).  To simplify the illustrations, we draw $G[0,1]$ as
shown in Figure~\ref{fig:interval-gadget} and we refer to them
as~$01$-edges.

To simplify the description of the hardness proof, we next describe a
number of basic building blocks, which we combine in different ways to
obtain the gadgets for our construction.  The building blocks are
shown in Figure~\ref{fig:building-blocks}.

Except for the last of the building blocks, each of them consists of a
4-cycle~$s,t,s',t'$.  They only differ in the types of edges.  In the
\emph{box} $st$ and~$s't'$ are $1$-edges and the other edges are
$0$-edges; see Figure~\ref{fig:building-blocks}a.  In a \emph{bendable box} the
two zero-bend edges of a box are replaced by~$01$-edges directed
from~$t$ to~$s'$ and from~$t'$ to~$s$, respectively; see
Figure~\ref{fig:building-blocks}b.  In a \emph{merger} the edge $st$ is
a $1$-edge, $s't'$ and~$st'$ are $01$-edges (with this orientation)
and~$ts'$ is a $0$-edge; see Figure~\ref{fig:building-blocks}c.
Finally, a \emph{splitter} is a 3-cycle~$s,t,s'$, where~$ss'$ is a
$1$-edge and~$s't$ and~$ts$ are $01$-edges (with this orientation);
see Figure~\ref{fig:building-blocks}d.

\begin{figure}[tb]
  \centering
  \includegraphics[page=2]{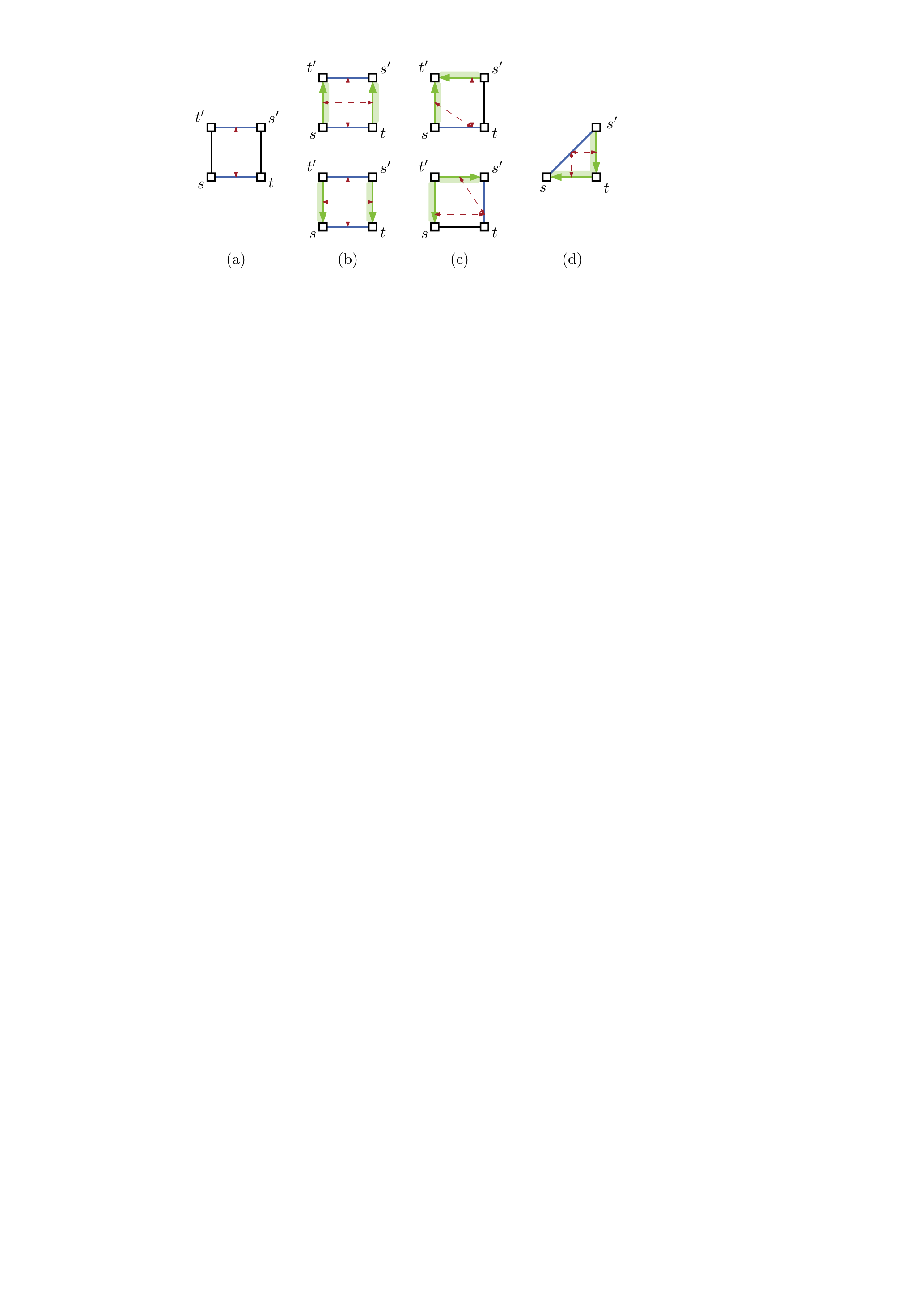}
  \caption{Building blocks for gadgets.  The edges are color-coded;
    $0$-edges are black, $1$-edges are blue and $01$-edges are green
    and directed such that they may bend right but not left.  The
    building blocks are (a)~the box; (b)~the bendable box; (c)~the
    merger; (d)~the splitter.}
  \label{fig:building-blocks}
\end{figure}

Symmetric versions of the bendable box and the splitter can be
obtained by reversing the directions of both~$01$-edges, as shown in
Figure~\ref{fig:building-blocks}b,c.  Since they differ from the
original only by exchanging the inner and outer face and mirroring the
instance, their behavior is completely symmetric.  Note that, apart
from the $0$-edges, all edges of the building blocks admit precisely
two possible rotation values in each face.  Thus, each edge attains
its maximum rotation value in one of its incident faces and the
minimum rotation in the other one.  We call an
orthogonal~$01$-representation of a building block \emph{right-angled}
if all inner angles at vertices are $90^\circ$.  The following lemma
states the functionality of these building blocks, which is
essentially that in a right-angled orthogonal~$01$-representation the
rotation values of some of the edges are not independent of one
another but are linked in the sense that exactly one of them must
attain its minimum (maximum) rotation value in~$f$.  In
Figure~\ref{fig:building-blocks} such dependencies are displayed as
red dashed arrows.  We will later interpret the rotation values as an
encoding of truth values.  The red dashed arrows then correspond to a
transmission of the encoded information.

\begin{lemma}
  \label{lem:building-block-properties}
  Consider a building block~$B$ and assume that we are given rotation
  values for each of the edges incident to the inner face~$f$ of~$B$
  that respect the bend constraints of the edges.  The following
  conditions for each of the building blocks are necessary and
  sufficient for the existence of a right-angled orthogonal
  $01$-representation of~$B$ respecting the given rotation values.
  \begin{compactenum}
  \item Box: Exactly one of~$\{st,s't'\}$ attains its minimum
    (maximum) rotation in~$f$.
  \item Bendable box: Exactly one of~$\{st,s't'\}$ and exactly one
    of~$\{st',ts'\}$ attains its minimum (maximum) rotation in~$f$.
  \item Merger: $st$ attains its minimum (maximum) rotation in~$f$ if
    and only if $st'$ and~$s't'$ attain their maximum (minimum)
    rotation in~$f$.
  \item Splitter: $st$ attains its minimum (maximum) rotation in~$f$
    if and only if~$s't$ and~$t's$ attain their maximum (minimum)
    rotation in~$f$.
  \end{compactenum}
\end{lemma}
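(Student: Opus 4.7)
The plan is to treat the four building blocks uniformly via a rotation-counting argument around the inner face $f$, followed by a sufficiency argument that a consistent assignment extends to a full right-angled $01$-representation. Because the representation is right-angled, every inner vertex contributes rotation $+1$ to $f$, so by property~(\ref{item:rotation-of-face}) the sum of edge rotations in $f$ must equal $4 - k$, where $k$ is the length of the boundary cycle. This gives target sum $0$ for the box, bendable box and merger (all $k=4$) and $1$ for the splitter ($k=3$).

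Next I would catalogue, for each edge on the boundary of $f$, the two possible values of $\rot_f$. A $0$-edge contributes $0$; a $1$-edge contributes $\pm 1$; a $01$-edge, being an interval gadget $G[0,1]$ by Lemma~\ref{lem:interval-gadget}, contributes a value in $\{0,1\}$ in the face to the right of its orientation arrow and in $\{-1,0\}$ in the face to the left. Reading the arrow directions from Figure~\ref{fig:building-blocks} determines which of $\{0,1\}$ or $\{-1,0\}$ applies to each edge in $f$. With these per-edge options in hand, the necessity half of each item reduces to a short enumeration of those tuples whose entries sum to the target $4-k$, and in each case one verifies that the only surviving tuples are precisely those described in the lemma: for the box, the two $1$-edges must have opposite signs; for the bendable box, the same holds on each of the two pairs independently; for the merger and splitter, the three non-$0$ edges are forced into the two coupled extremes described.

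For sufficiency, given an assignment that satisfies the stated condition, I would use property~(\ref{item:consistent-edges}) to read off rotations on the outer side of each boundary edge by negation, and use the right-angled assumption to fix the inner vertex angles; the complementary outer angles at these vertices are then determined by property~(\ref{item:rotation-around-vertex}). Each $01$-edge is realized by an actual $G[0,1]$ subdrawing with the prescribed net rotation (possible by Lemma~\ref{lem:interval-gadget}, since our case analysis only ever prescribes a rotation in $[0,1]$ from the ``right'' face of the arrow). Property~(\ref{item:rotation-of-face}) on the outer face of the block is then automatic: summing the local rotations inside and outside and applying~(\ref{item:consistent-edges}) yields $-4$ on the outer face whenever it yields $4$ on the inner face. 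Properties~(\ref{item:rotation-range-at-vertex}) and~(\ref{item:0-deg-bend-assignment}) are immediate since we never introduce $0^\circ$ angles in a right-angled representation.

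The only real obstacle is the min/max bookkeeping: the lemma is stated symmetrically in ``minimum'' and ``maximum'' because exchanging the roles of the two faces, or applying the explicit mirror symmetry noted just before the lemma (reversing the orientation of both $01$-edges in a bendable box or splitter), swaps the ``min face'' and ``max face'' of each $01$-edge. I would therefore fix a single convention from the figure, prove the ``minimum'' version of each item, and invoke this symmetry to obtain the ``maximum'' version at no extra cost. Beyond this accounting, no genuine combinatorial difficulty arises; the entire proof is a handful of short case distinctions driven by the single equation $\sum_{e} \rot_f(e) = 4-k$.
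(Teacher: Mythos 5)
Your proposal is correct and follows essentially the same route as the paper's proof: both reduce each building block to the single equation that the edge rotations in the inner face sum to $4-k$ (since each right-angled vertex contributes $+1$), then enumerate the admissible per-edge values ($\pm1$ for $1$-edges, $\{0,1\}$ or $\{-1,0\}$ for $01$-edges depending on orientation, $0$ for $0$-edges) to force the coupled extremes. Your additional remarks on verifying sufficiency via properties (1)--(5) and on the min/max symmetry are slightly more explicit than the paper, which simply asserts that the sum condition is necessary and sufficient, but they do not change the argument.
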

\begin{proof}
  We first treat the building blocks that consist of a 4-cycle.
  Denote the rotation values of~$ts,st',t's'$ and~$s't$
  by~$\rho_1,\rho_2,\rho_3,\rho_4$, respectively.  Note that, in any
  valid drawing, each of the vertices contributes a rotation of~1 to
  the inner face~$f$.  Since the total rotation around~$f$ must be~4,
  this implies~$\rho_1 + \rho_2 + \rho_3 + \rho_4 = 0$ is necessary
  and sufficient for the existence of a valid drawing.

  For the box, we have~$\rho_2 = \rho_4 = 0$, and thus~$\rho_1 =
  -\rho_3$ is necessary and sufficient, which implies the claim.

  For the bendable box, observe that~$\rho_2 \in \{0,1\}$ and~$\rho_4
  \in \{-1,0\}$, and thus~$\rho_2 + \rho_4 \in \{-1,0,1\}$.
  Similarly,~$\rho_1, \rho_3 \in \{-1,1\}$, and thus~$\rho_1 + \rho_3
  \in \{-2,0,2\}$.  To achieve a total sum of~$0$, it follows
  that~$\rho_1 + \rho_3 = 0$ and~$\rho_2 + \rho_4=0$ is necessary and
  sufficient.  The claim follows.

  For the merger observe that~$\rho_4=0$.  Moreover, we have~$\rho_2
  \in \{0,1\}$ and~$\rho_3 \in \{-1,0\}$, and thus~$\rho_2 + \rho_3
  \in \{-1,0,1\}$.  Since~$\rho_1 \in \{-1,1\}$, it follows
  that~$\rho_2 + \rho_3 = 0$ can be excluded.  This together with the
  fact that~$\rho_1 = - \rho_2 - \rho_3$ is necessary and sufficient
  proves the claim.

  Finally, we consider the splitter.  We denote the rotations
  of~$ss'$, $s't$ and~$ts$ in~$f$
  by~$\rho_1,\rho_2$ and~$\rho_3$, respectively.  Since each of the
  three vertices incident to~$f$ supplies a rotation of~1, the
  existence of a valid drawing is equivalent to~$\rho_1 + \rho_2 +
  \rho_3 = 1$.  Note that~$\rho_1 \in \{-1,1\}$, whereas~$\rho_2,
  \rho_3 \in \{0,1\}$, and thus~$\rho_2 + \rho_3 \in \{0,1,2\}$.  It
  follows immediately that~$\rho_2 + \rho_3 = 1$ is not possible, and
  thus~$\rho_2 = \rho_3$ is necessary.  Then~$\rho_1 = 1-2\rho_2$
  follows, showing the claim.
\end{proof}

We will now construct our gadgets from these building blocks.  To this
end, we take copies of building blocks and glue them together by
identifying certain edges (together with their endpoints).  As
mentioned above, we will use rotations of the $1$-edges to encode
certain information.  Thus, our gadgets will always have such edges on
the boundary of the outer face.  In the figures, we will again
indicate the necessary conditions from
Lemma~\ref{lem:building-block-properties} by red dashed edges as in
Figure~\ref{fig:building-blocks}.  It follows from
Lemma~\ref{lem:building-block-properties} that when there is a path of
such red edges from one edge to another edge, then they are
synchronized.  In particular, if both are incident to the outer face
than exactly one of them attains the minimum and one of them attains
the maximum rotation there in any valid drawing.

\subsubsection{Gadget Constructions}
\label{sec:gadget-constructions}

\paragraph{Variable Gadget}
\label{sec:variable-gadget}

The variable gadget for a variable~$x$ consists of a single box with
vertices~$s,t,s',t'$.  The two 1-bend edges~$st$ and~$s't'$ are called
the \emph{positive} and \emph{negative output}, respectively.  It
immediately follows from Lemma~\ref{lem:building-block-properties}
that it has exactly two different valid drawings.  We use the
interpretation that~$x$ has value \true if the rotation of the the
positive output in the outer face is maximum, and false otherwise; see
Figure~\ref{fig:variable-gadget}.  The following lemma summarizes the
properties; it follows immediately from
Lemma~\ref{lem:building-block-properties}.

\begin{figure}
  \centering
  \includegraphics[page=1]{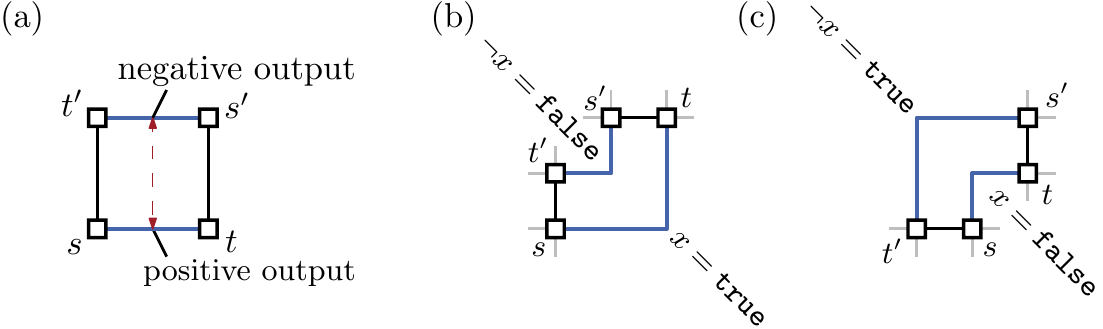}
  \caption{(a)~The variable gadget.  (b--c)~The two possible
    orthogonal representations corresponding to $x = \mathtt{true}$
    and $x = \mathtt{false}$, respectively.}
  \label{fig:variable-gadget}
\end{figure}

\begin{lemma}
  \label{lem:variable-gadget}
  Assume the rotations~$\rho_p$ and~$\rho_n$ of the positive and
  negative output edges in the outer face are fixed.  There is a
  right-angled orthogonal~$01$-representations of the variable gadget
  respecting~$\rho_p$ and~$\rho_n$ if and only if~$\rho_p = -\rho_n
  \in \{-1,1\}$.
\end{lemma}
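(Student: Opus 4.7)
The plan is to obtain this lemma as an essentially immediate corollary of Lemma~\ref{lem:building-block-properties} applied to the box, combined with property~(\ref{item:consistent-edges}) of Kandinsky representations, which lets me translate between rotations in the inner face and in the outer face.

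First I would observe that the variable gadget is, by construction, a single box, whose inner face I call $f$, and whose outer face I call $f_{\mathrm{out}}$. The positive and negative outputs are the $1$-edges $st$ and $s't'$, both of which lie on the boundary of $f_{\mathrm{out}}$. Since each output is a $1$-edge, its rotation in any incident face lies in $\{-1,1\}$, with $-1$ being the minimum and $1$ the maximum; in particular $\rho_p,\rho_n\in\{-1,1\}$ is necessary.

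Next I would invoke Lemma~\ref{lem:building-block-properties}(1): in any right-angled orthogonal $01$-representation of the box, exactly one of $st$ and $s't'$ attains its minimum rotation in $f$, and the other attains its maximum. Together with property~(\ref{item:consistent-edges}), which forces $\rot_{f_{\mathrm{out}}}(e) = -\rot_f(e)$ for $e\in\{st,s't'\}$, this is equivalent to saying that exactly one of $\rho_p,\rho_n$ equals $+1$ and the other equals $-1$, i.e.\ $\rho_p = -\rho_n \in \{-1,1\}$. This establishes necessity.

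For sufficiency, given any such pair $(\rho_p,\rho_n)$ I note that the constraint of Lemma~\ref{lem:building-block-properties}(1) is satisfied, so a right-angled orthogonal $01$-representation realizing these rotations exists. Since the argument is purely a restatement of the box case of Lemma~\ref{lem:building-block-properties} together with the sign flip coming from property~(\ref{item:consistent-edges}), I do not expect any real obstacle; the only thing to be careful about is being explicit that the rotations specified on the outer face uniquely determine (via consistency) the rotations on the inner face, so that the hypotheses of Lemma~\ref{lem:building-block-properties} are truly met.
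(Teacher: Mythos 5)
Your proposal is correct and matches the paper, which simply states that the lemma follows immediately from Lemma~\ref{lem:building-block-properties} applied to the box; you have merely spelled out the sign flip between the inner and outer face via property~(\ref{item:consistent-edges}), which the paper leaves implicit. No issues.
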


\paragraph{Literal Duplicator}
\label{sec:literal-duplicator}

A duplicator is a structure that has three $1$-bend edges on the outer
face, one of which is the \emph{input edge}, the other two are the
\emph{output edges}.  The key property is that the structure is such
that the state of the inputs is transferred to both outputs in any
right-angled orthogonal~$01$-representation, i.e., the input attains
its maximum (minimum) rotation in the outer face if and only if the
outputs attains their minimum (maximum) rotation in the outer face.
The duplicator is formed by a splitter, which is glued to two mergers
via its $\{0,1\}$-edges; see Figure~\ref{fig:literal-duplicator}.  The
fact that indeed the information encoded in the input edge is copied
to the output edges follows from the red dashed paths connecting the
input to the outputs and Lemma~\ref{lem:building-block-properties}.

\begin{lemma}
  \label{lem:literal-duplicator-duplicates}
  Assume the rotations~$\rho_i$ of the input edge and the
  rotations~$\rho_o$ and~$\rho_o'$ of the two output edges in the
  outer face are fixed.  There is a right-angled
  orthogonal~$01$-representations of the variable gadget respecting
  these rotations if and only if~$\rho_i = -\rho_o = -\rho_o' \in
  \{-1,1\}$.
\end{lemma}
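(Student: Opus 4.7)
The plan is to apply Lemma~\ref{lem:building-block-properties} to each of the three constituent building blocks of the duplicator (the central splitter together with the two mergers attached to its $01$-edges) and to chain the resulting rotation constraints along the shared edges using property~(\ref{item:consistent-edges}) of Kandinsky representations.

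More concretely, for the only-if direction I would assume a right-angled orthogonal $01$-representation of the duplicator is given. The input edge is the $1$-edge of the splitter, incident to the outer face on one side and to the splitter's inner face $f_s$ on the other. By the splitter clause of Lemma~\ref{lem:building-block-properties}, the rotation of the input in $f_s$ forces both of the splitter's $01$-edges to attain the opposite extremal rotation in $f_s$. Each such $01$-edge is identified with one of the two $01$-edges of the adjacent merger; by property~(\ref{item:consistent-edges}) its rotation in the merger's inner face $f_m$ flips sign. Applying the merger clause of Lemma~\ref{lem:building-block-properties} in $f_m$ then fixes the rotation of the merger's output $1$-edge in $f_m$, and a final application of property~(\ref{item:consistent-edges}) transfers this to the outer face. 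Tracking the signs through the three stages (splitter, consistency across shared edge, merger, consistency across output edge) delivers precisely $\rho_o=\rho_o'=-\rho_i\in\{-1,1\}$, matching the ``red dashed path'' picture in Figure~\ref{fig:literal-duplicator}.

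For the if-direction I would reverse this propagation: given $\rho_i=-\rho_o=-\rho_o'\in\{-1,1\}$, assign the unique compatible rotation values to the two shared $01$-edges (and to any remaining internal $01$- or $0$-edges) dictated by Lemma~\ref{lem:building-block-properties}. Because that lemma's conditions are necessary \emph{and} sufficient, each building block individually admits a right-angled orthogonal $01$-representation respecting its prescribed boundary rotations; and because the rotations on each shared edge are negatives on its two sides by construction, property~(\ref{item:consistent-edges}) is satisfied and the local representations glue together into a valid representation of the whole duplicator.

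I expect the main obstacle to be purely bookkeeping: keeping the signs straight as one moves across shared edges, and checking that the two mergers' non-shared $01$-edges (which lie in the duplicator's interior) can be assigned rotations compatibly with both the splitter and merger constraints simultaneously. No new geometric idea is required beyond the iff statement of Lemma~\ref{lem:building-block-properties}; once the chain of constraints is written down, both directions follow by inspection.
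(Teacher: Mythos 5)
Your proposal is correct and follows essentially the same route as the paper: the paper gives no separate written proof for this lemma, but derives it exactly as you do, by chaining the splitter and merger clauses of Lemma~\ref{lem:building-block-properties} along the red dashed paths and flipping signs across the shared $01$-edges via Property~(\ref{item:consistent-edges}). Your write-up merely makes explicit the sign bookkeeping and the gluing argument for the if-direction that the paper leaves implicit.
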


\begin{figure}
  \centering
  \includegraphics[page=1]{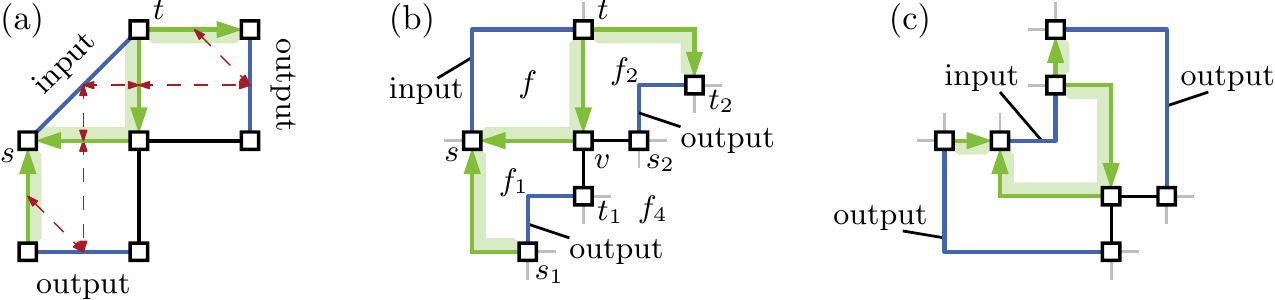}
  \caption{(a) The literal duplicator.  (b--c) The two possible
    orthogonal representations corresponding to the values
    \texttt{false} and \texttt{true}, respectively.}
  \label{fig:literal-duplicator}
\end{figure}

By concatenating several duplicators in a tree-like fashion, we can of
course take as many copies of the state of a literal as there are
clauses containing that literal.  We make this more precise later.

\paragraph{Bendable Pipes}
\label{sec:bendable-pipes}

The bendable pipe gadget is used for transmitting the information
about a literal to a clause.  It has an input and an output edge, and
has the property that in any valid drawing the information encoded in
the input is transmitted to the output.  To remedy the fact that the
duplicators change their shape depending on the state of the literal
they copy, we allow some flexibility of the pipes, allowing them to
change how strongly the pipe is bent.  This is achieved as follows.

A \emph{zig-zag} consists of a bendable box and a bendable box where
the $01$-edges are reversed, such that two of their 1-edges are
identified.  One of the 1-bend edges on the outer face is the input,
the other is the output; see Figure~\ref{fig:zig-zag}.  It follows
immediately from Lemma~\ref{lem:building-block-properties} that the
information from the input is transferred to the output.  Moreover, it
also follows from Lemma~\ref{lem:building-block-properties} that the
decision which of the bendable boxes bend their $01$-edges can be
taken independently.  Thus, the zig-zag allows to choose the
rotation~$\rho,\rho'$ of the paths between the input and the output
edge with~$\rho = - \rho'$ for each~$\rho \in \{-1,0,1\}$.

\begin{figure}[tb]
  \centering
  \includegraphics[page=3]{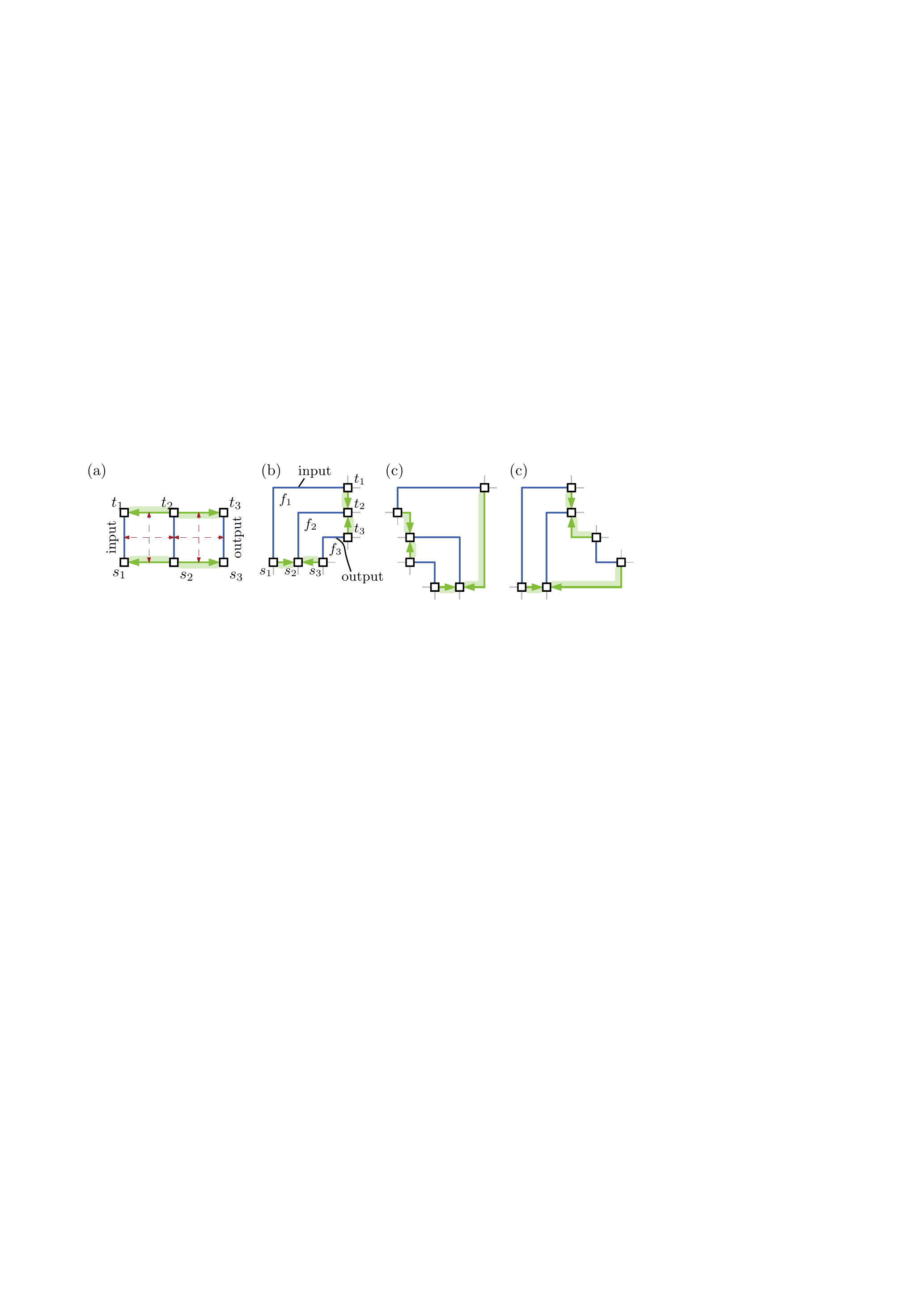}
  \caption{(a)~The zig-zag.  (b--d) Embeddings of the zig-zag with
    different rotations $0$, $1$, and $-1$ when the input edge has
    rotation~$-1$ in the outer face.  Corresponding drawings where the
    rotation of the input edge is~$+1$ are symmetric.  (e)~The
    $k$-bendable pipe.
  }
  \label{fig:zig-zag}
\end{figure}

A \emph{$k$-bendable pipe} is obtained by concatenating~$k$ zig-zags;
see Figure~\ref{fig:zig-zag}e.  Again
Lemma~\ref{lem:building-block-properties} easily implies that the
information is transmitted from the input to the output, and moreover,
by concatenating suitable drawings of the zig-zags, for each
rotation~$\rho \in \{-k,\dots,k\}$, the paths between the input and
the output edge along the outer face can have rotation~$\rho$
and~$-\rho$, respectively.  In a high-level view, a $k$-bendable pipe
looks like an edge that transfers information between its endpoints
and can be bent up to $k$ times either to the left or to the right.
The following lemma summarizes the properties of~$k$-bendable pipes.

\begin{lemma}
  \label{lem:1-bendable-pipe-is-pipe}
  Assume the rotations~$\rho_i$ and~$\rho_o$ of the input edge and the
  output edge as well as the rotations~$\rho$ and~$\rho'$ of the two
  counterclockwise paths on the outer face connecting the input and
  the output edge are fixed.

  There is a right-angled orthogonal~$01$-representations of the
  $k$-bendable pipe if and only if~$\rho_i = -\rho_o \in \{-1,1\}$
  and~$\rho = -\rho' \in \{-k,\dots,k\}$.
\end{lemma}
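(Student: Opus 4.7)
The plan is to reduce everything to the zig-zag analysis from the paragraph preceding the lemma and then propagate it along the $k$ zig-zags that make up the pipe. Necessity will follow by combining the local constraints obtained from each single zig-zag, and sufficiency by decomposing the prescribed outer-path rotation $\rho$ into per-zig-zag contributions and composing the corresponding local drawings.

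The key local statement, already sketched above the lemma and obtained by applying Lemma~\ref{lem:building-block-properties} to the two bendable boxes inside a single zig-zag, is this: a zig-zag admits a right-angled orthogonal $01$-representation exactly when $\rho_i = -\rho_o \in \{-1,1\}$ for its input and output $1$-edges, and for each such choice the rotations of its two outer-face subpaths between input and output can be realized as any pair $(\rho,-\rho)$ with $\rho \in \{-1,0,1\}$, independently of $\rho_i$. This is the building block for the chain argument.

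For necessity, observe that in a $k$-bendable pipe the output $1$-edge of the $j$-th zig-zag is identified with the input $1$-edge of the $(j+1)$-th. Chaining the equality $\rho_i^{(j)} = -\rho_o^{(j)}$ for $j = 1,\dots,k$ forces $\rho_i = -\rho_o \in \{-1,1\}$ for the whole pipe. The two counterclockwise outer-face paths between the input and output edges decompose into the subpaths of the individual zig-zags, whose rotations are $\rho_j$ and $-\rho_j$ with $\rho_j \in \{-1,0,1\}$ by the local statement. Hence $\rho = \sum_j \rho_j$ and $\rho' = -\sum_j \rho_j$ lie in $\{-k,\dots,k\}$ with $\rho = -\rho'$. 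For sufficiency, given $\rho_i = -\rho_o \in \{-1,1\}$ and $\rho = -\rho' \in \{-k,\dots,k\}$, I would write $\rho = \rho_1 + \dots + \rho_k$ with $\rho_j \in \{-1,0,1\}$; this is possible since $|\rho| \le k$. For each zig-zag $j$, choose a right-angled orthogonal $01$-representation whose input and output carry the rotations propagated from $\rho_i$ and whose outer subpaths have rotations $\rho_j$ and $-\rho_j$.

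The main point to verify, and the only step that could be called an obstacle, is that these per-zig-zag drawings actually glue into a valid drawing of the full pipe. This reduces to the observation that the only data shared between consecutive zig-zags is a single $1$-edge, whose rotation is already fixed consistently by the propagation from $\rho_i$, and that Lemma~\ref{lem:building-block-properties} leaves the bending choice of each bendable box's $01$-edges as an independent degree of freedom once the $1$-edge rotations are fixed; hence no hidden interaction between the zig-zags can arise.
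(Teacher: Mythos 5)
Your proposal is correct and follows essentially the same route as the paper, which does not give a separate formal proof of this lemma but instead relies on exactly the argument you spell out: apply Lemma~\ref{lem:building-block-properties} to the two bendable boxes of each zig-zag to get the local statement ($\rho_i=-\rho_o\in\{-1,1\}$ with outer subpath rotations $(\rho_j,-\rho_j)$ for any $\rho_j\in\{-1,0,1\}$, chosen independently of the transmitted value), then chain the $k$ zig-zags through the identified $1$-edges and decompose $\rho=\rho_1+\dots+\rho_k$. Your explicit check that the per-zig-zag drawings glue along the single shared $1$-edge is a welcome addition that the paper leaves implicit.
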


\paragraph{Clause Gadget}
\label{sec:clause-gadget}

The \emph{clause gadget} is a cycle $C$ of length~4, consisting of
three~$1$-edges, the \emph{input edges}, and the interval gadget
$G[-2,3]$; see Figure~\ref{fig:clause-gadget}a.  The embedding is
fixed such that the inner face of the clause lies to the right of the
interval gadget $G[-2,3]$ (that is the rotation of $G[-2,3]$ in the
inner face lies in the interval $[-2,3]$).  Again we only consider
right-angled drawings, where the rotations at the vertices in the
internal face are all fixed to~$1$.

The clause gadget interprets a rotation of $-1$ for an input edge in
the inner face as \texttt{true} and a rotation of $1$ as
\texttt{false}.  In Figure~\ref{fig:clause-gadget}a all three input
edges are set to \texttt{true}.  In Figure~\ref{fig:clause-gadget}b
two of the three input edges represent the value \texttt{false}.  In
Figure~\ref{fig:clause-gadget}c all input edges are \texttt{false},
thus $G[-2,3]$ would need to have a rotation of $-3$ in the inner
face, which is not possible.  The following lemma states more
precisely that the clause gadget admits a valid drawing with the given
rotations of the input edges if and only if at least one of the input
edges represents the value \texttt{true}.

\begin{figure}
  \centering
  \includegraphics[page=1]{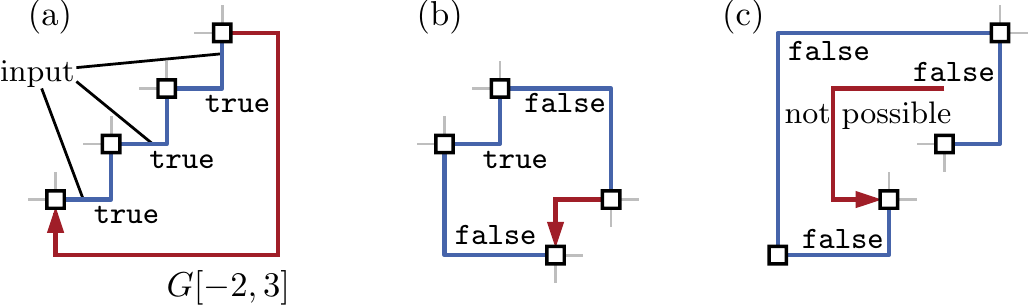}
  \caption{(a--c) The clause gadget with three different values on the
    input edges.}
  \label{fig:clause-gadget}
\end{figure}

\begin{lemma}
  \label{lem:clause-gadget}
  Assume that the rotation~$\rho_1,\rho_2,\rho_3$ of the input edges
  in the inner face are fixed.  There exists a right-angled orthogonal
  01-representation of the clause gadget respecting these rotations if
  and only if~$\rho_i \in \{-1,1\}$ for~$i \in \{1,2,3\}$ and~$\rho_i =
-1$ for at least one~$i \in \{1,2,3\}$.
\end{lemma}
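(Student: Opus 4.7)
The plan is to apply the face-rotation property (property~(\ref{item:rotation-of-face}) from Section~\ref{sec:kand-repr}): the sum of all rotations around the inner face of the $4$-cycle $C$ must equal $4$. Since the right-angled restriction fixes the rotation at each of the four vertices of $C$ to $1$, the four edge rotations in the inner face must satisfy
\[
\rho_1 + \rho_2 + \rho_3 + \rho_G = 0,
\]
where $\rho_G$ denotes the rotation of the interval gadget $G[-2,3]$ in the inner face.

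Each input is a $1$-edge, forcing $\rho_i \in \{-1,1\}$, and by Lemma~\ref{lem:interval-gadget} the interval gadget realizes precisely the values $\rho_G \in [-2,3]$. Hence a valid representation exists only if $\rho_G = -(\rho_1+\rho_2+\rho_3)$ lies in $\{-3,-1,1,3\} \cap [-2,3] = \{-1,1,3\}$, which excludes exactly the configuration $(\rho_1,\rho_2,\rho_3) = (1,1,1)$. This is precisely the statement that at least one $\rho_i$ must equal $-1$, establishing necessity.

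For sufficiency, given input rotations satisfying the condition, I would realize $C$ as a rectangle with $90^\circ$ corner angles, route each input edge as a $1$-bend path in the direction prescribed by $\rho_i$, and instantiate $G[-2,3]$ with the required rotation $-(\rho_1+\rho_2+\rho_3)$ using Lemma~\ref{lem:interval-gadget}. The remaining preliminary properties (\ref{item:consistent-edges})--(\ref{item:0-deg-bend-assignment}) follow routinely: setting the outer rotation of each edge as the negative of the inner one gives~(\ref{item:consistent-edges}); the degree-$2$ vertex constraint~(\ref{item:rotation-around-vertex}) is then forced (each vertex must have outer rotation $-1$ to match $1$ inside); property~(\ref{item:rotation-range-at-vertex}) is immediate; and~(\ref{item:0-deg-bend-assignment}) is vacuous because no $0^\circ$ angles occur in a right-angled drawing. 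The outer face sum $-4$ in property~(\ref{item:rotation-of-face}) follows from the inner face sum together with~(\ref{item:consistent-edges}).

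I expect the only mild subtlety to be the bookkeeping that the single inner-face equation, together with Lemma~\ref{lem:interval-gadget}, already yields a globally consistent Kandinsky representation. This is unproblematic here because the clause gadget is structurally just a $4$-cycle with one complex ``edge''; all remaining constraints are local and are satisfied automatically by choosing outer rotations as negatives of inner ones.
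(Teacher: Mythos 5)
Your proof is correct and follows essentially the same route as the paper: fix the vertex rotations at $1$, use the face-rotation sum to reduce everything to $\rho_1+\rho_2+\rho_3+\rho_G=0$, and invoke Lemma~\ref{lem:interval-gadget} to see that exactly the all-\texttt{false} configuration is excluded. The extra bookkeeping you add for sufficiency is fine but not needed beyond what the paper states.
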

\begin{proof}
  Each of the four vertices of the clause gadget $C$ has rotation~$1$
  in the inner face of $C$.  Thus the sum of the rotations $\rho_1$,
  $\rho_2$, $\rho_3$, and the rotation of $G[-2,3]$ in the inner face
  of $C$ must be~$0$.  The possible rotation of $G[-2,3]$ are exactly
  the integers in the interval $[-2, 3]$
  (Lemma~\ref{lem:interval-gadget}).  Thus, we get an orthogonal
  01-representation if and only if $\rho_1 + \rho_2 + \rho_3 \in [-3,
  2]$, which is the case if and only if not all three rotations
  are~$1$.
\end{proof}

\subsubsection{Putting Things Together}

Let $S = (\mathcal X, \mathcal C)$ together with a monotone
rectilinear representation be an instance of \textsc{Planar Monotone
  3-Sat}.  The plan is to create a variable gadget for every variable
and a clause gadget for every clause, duplicate the literals (using
the literal duplicator) outputted by the variable gadget as many times
as they occur in clauses, and bring the values of the duplicated
literals to the input of the clauses using bendable pipes.

Thus, if we have two gadgets $A$ and $B$, we want to use an output
edge of $A$ as the input edge of $B$.  To make the description
simpler, we assume each input edge and each output edge of the gadgets
to be oriented such that the outer face lies to its left and to its
right, respectively.  We can combine $A$ and $B$ by identifying an
output edge $e_A$ of $A$ with an input edge $e_B$ of $B$ such that
their sources and targets coincide.  All input and output edges of the
two gadgets remain input and output edges in the resulting graph,
except for $e_A$ and $e_B$.

Let $x \in \mathcal X$ be a variable.  We take one variable gadget $X$
representing the decision made for $x$.  Let $k$ be the number of
clauses containing the literal $x$.  We successively add $k-1$ literal
duplicators.  The input edge of the first literal duplicator is
identified with the positive output edge of $X$.  The input edge of
every following literal duplicator is identified with an output edge
of a previously added literal duplicator.  The graph we get has the
negative output edge at $X$ and $k$ output edges belonging to literal
duplicators.  To each of these $k$ output edges we add a $K$-bendable
pipe for a suitably large~$K$ by identifying the output edge with the
input edge of the bendable pipe.  We choose $K=3m^2 + 4m$, where $m$
is the number of edges in the variable-clause graph of $S$.  Let $k'$
be the number of clauses containing the literal $\neg x$.  As for the
positive literal, we add $k'-1$ literal duplicators, this time
identifying the input edge of the first literal duplicator with the
negative output edge of $X$.  As before, we also add $K$-bendable
pipes to each of the $k'$ output edges.  We call the resulting graph
\emph{variable tree of $x$} and denote it by $T_x$.  We call the $k$
output edges of the bendable pipes attached to literal duplicators
attached to the positive output edge of the variable gadget $X$ the
\emph{positive output edges} of $T_x$.  The $k'$ other output edges
are \emph{negative output edges} of $T_x$.  The variable tree for the
case $k = 5$ and $k' = 2$ is illustrated in
Figure~\ref{fig:variable-tree}.

\begin{figure}
  \centering
  \includegraphics[page=2]{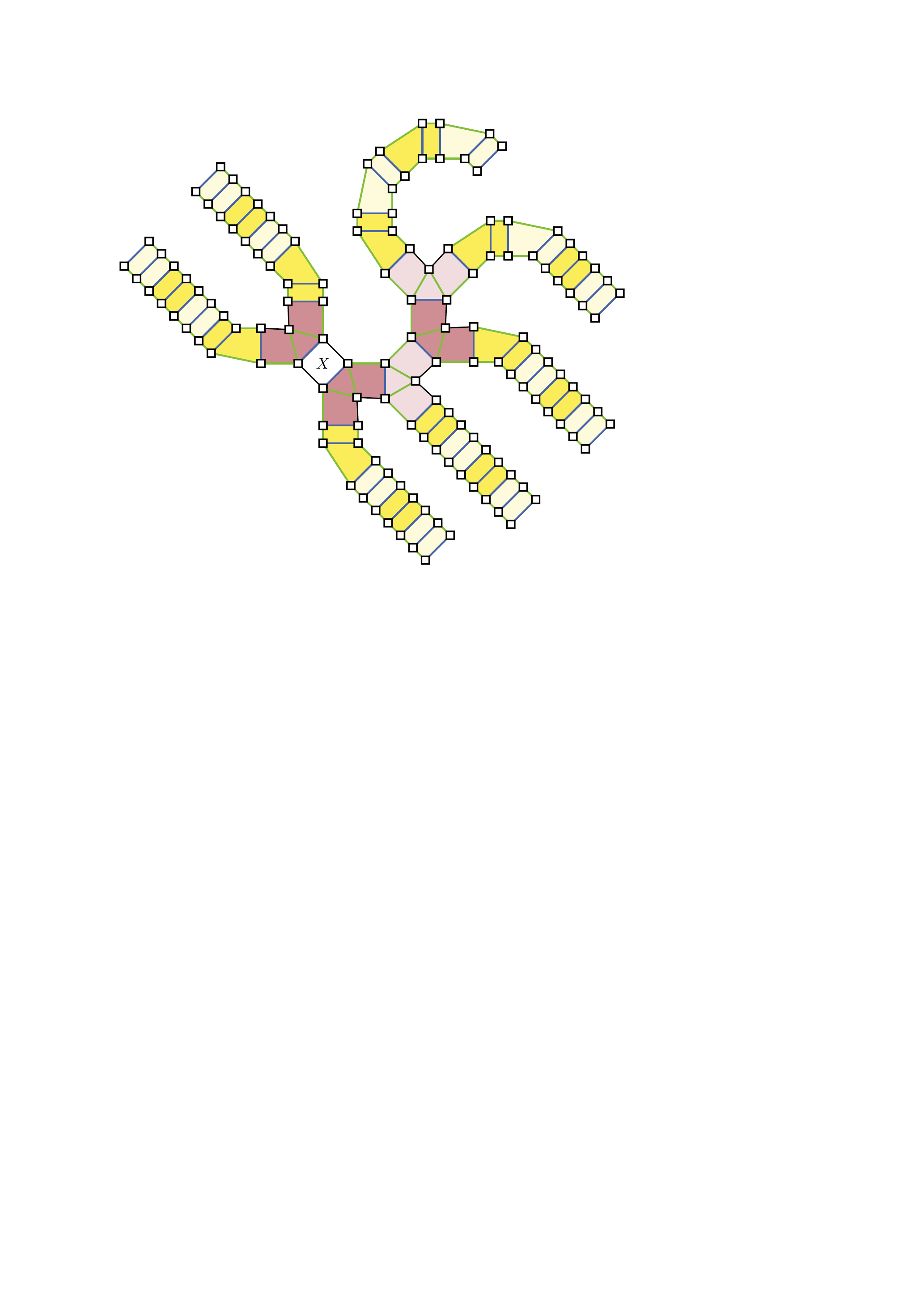}
  \caption{Variable tree~$T_x$ of a variable~$x$ whose positive and
    negative literal have five and two occurrences, respectively.  The
    variable gadget is shaded white, duplicators are shaded in red and
    zig-zags (forming bendable pipes) are shaded yellow.  Adjacent
    gadgets of the same type are shaded with different saturations.}
  \label{fig:variable-tree}
\end{figure}

For the instance $S = (\mathcal X, \mathcal C)$ of \textsc{Planar
  Monotone 3-Sat} we create the following instance of
\textsc{Orthogonal 01-Embeddability}.  For every variable $x \in
\mathcal X$, we take the variable tree $T_x$.  For every clause $c \in
\mathcal C$, we add a copy of the clause gadget.  We connect them by
identifying the output edges of the variable trees with the input
edges of the clause gadget in the following way.  

Consider a variable $x$ and a positive clause $c$ with $x \in c$ in
the monotone rectilinear representation of $S$.  We say that $c$ is
the \emph{$i$th positive clause} of $x$ if the edge connecting $c$ and
$x$ is the $i$th edge incident to $x$ (ordered from left to right).
Analogously, $x$ is the \emph{$j$th variable} of $c$ if this edge is
the $j$th edge incident to $c$.  In the instance shown in
Figure~\ref{fig:3-sat-example-1} and Figure~\ref{fig:3-sat-example-2},
the clause $c_1$ is the first positive clause of $x_2$ and $x_2$ is
the second variable of $c_1$.  Analogously, we define the \emph{$i$th
  negative clause}.

Let $c$ be the $i$th positive clause of $x$ and let $x$ be the $j$th
variable of $c$.  Let further $C$ be the clause gadget corresponding
to $c$.  Traversing the outer face of $C$ in counter-clockwise order
starting with the interval gadget defines an order on the input edges
of $C$.  Moreover, traversing the variable tree $T_x$ in
counter-clockwise order starting with an edge incident to the variable
gadget defines an order on the positive output edges of $T_x$.  We
identify the $i$th positive output edge of $T_x$ with the $j$th input
edge of $C$.  For a negative clause containing $\neg x$, we do exactly
the same except for defining the order of the negative output edges by
traversing the outer face of $T_x$ in clockwise order.  This
identification of input with output edges is done for every edge in
the variable-clause graph.  We denote the resulting graph by $G(S)$.
Figure~\ref{fig:3-sat-example-2} shows the monotone rectilinear
representation (rotated by $45^\circ$) of an example instance $S$ and
the graph $G(S)$.  The graph $G(S)$ has two kinds of faces.  Faces that
are inner faces in the variable tree or in the clause gadget are
called \emph{small faces}.  The other faces are \emph{large faces}.
Note that there is a one-to-one correspondence between the large faces
of $G(S)$ and the faces of the variable-clause graph of $S$.
We obtain the following theorem by proving that $S$ admits a
satisfying truth assignment if and only if $G(S)$ admits an orthogonal
01-representation.

\begin{figure}[t]
  \centering
  \includegraphics[width=1\linewidth,page=2]{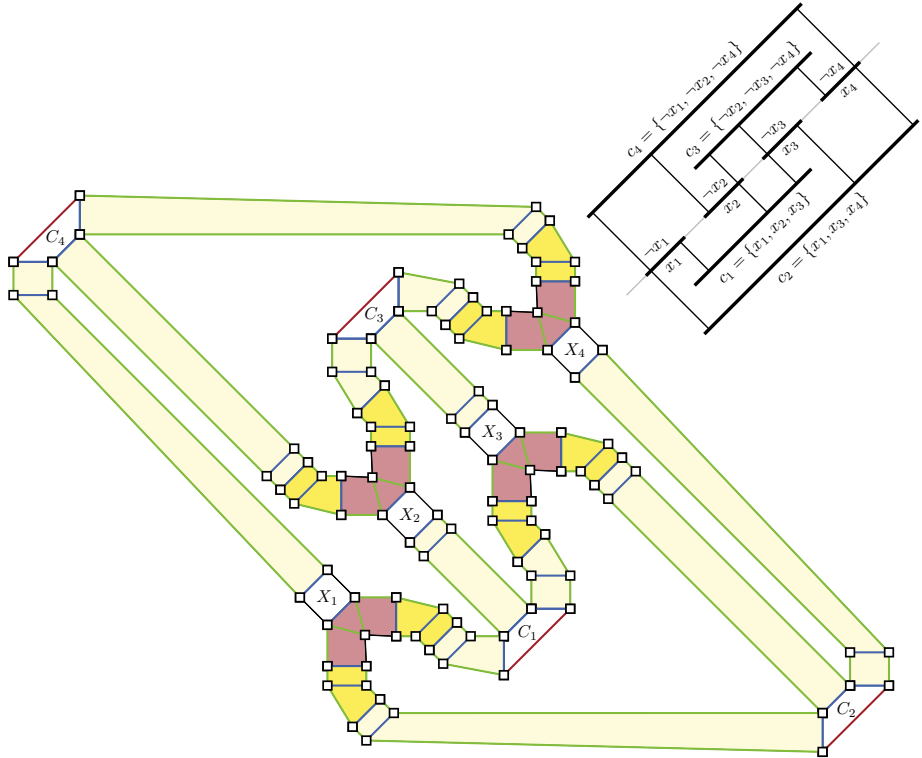}
  \caption{Example reduction of {\sc Planar Monotone 3-Sat} to {\sc
      Orthogonal 01-Embeddability}.  The bendable pipes have been
    shortened for clarity.}
  \label{fig:3-sat-example-2}
\end{figure}

\begin{theorem}
  \label{thm:np-hardness}
  \textsc{Orthogonal 01-Embeddability} is NP-complete.
\end{theorem}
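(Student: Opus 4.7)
The plan is to establish both NP membership and NP-hardness via the reduction constructed above. Membership follows from the standard guess-and-verify argument using the conditions of Section~\ref{sec:kand-repr}, together with the observation that $G(S)$ has polynomial size (since $K = 3m^2 + 4m$ is polynomial in $m$). Thus the heart of the proof is to show that $S$ is satisfiable if and only if $G(S)$ admits an orthogonal $01$-representation.

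For the backward direction, I would start from a valid orthogonal $01$-representation $\Gamma$ of $G(S)$ and extract a truth assignment for $\mathcal{X}$. The small-face constraints force every variable gadget to be in one of the two drawings of Lemma~\ref{lem:variable-gadget}; set $x = \true$ iff the positive output edge of its variable gadget attains its maximum rotation in the outer face of $\Gamma$, and $\false$ otherwise. Applying Lemmas~\ref{lem:literal-duplicator-duplicates} and~\ref{lem:1-bendable-pipe-is-pipe} inductively along each variable tree $T_x$ shows that every input edge of every incident clause gadget is forced into the rotation encoding the correct literal (with polarity handled by the clockwise/counter-clockwise convention used when identifying output with input edges). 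Lemma~\ref{lem:clause-gadget} then requires at least one input edge of each clause gadget to encode \true, which is exactly the statement that the corresponding clause is satisfied.

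For the forward direction, I would begin from a satisfying assignment, fix every variable gadget, literal duplicator, and clause gadget in its right-angled drawing consistent with the assignment (using the corresponding lemmas), and only then choose the zig-zag shapes inside each bendable pipe. By Lemma~\ref{lem:1-bendable-pipe-is-pipe}, each $K$-bendable pipe can independently be assigned any rotation $\rho \in \{-K,\dots,K\}$ contributed to one of its two incident large faces (the other side receiving $-\rho$); once all pipes are fixed, every large face must have rotations summing to $4$ (or $-4$ for the outer face). I would encode this last task as a flow problem on the graph whose nodes are the large faces of $G(S)$, whose arcs correspond to the $K$-bendable pipes with capacity $K$, and whose demand at each node $f$ equals the difference between the required face rotation and the rotation $r_f$ already contributed by the fixed parts on $\partial f$. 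The global property~\eqref{item:rotation-of-face} ensures the demands sum to zero, and since each $|d(f)|$ is bounded linearly in the number of edges on $\partial f$, the total absolute demand is $O(m^2)$, which is dominated by $K$. Lemma~\ref{lem:existence-feasible-flow} then yields a feasible integral flow, which I translate back into concrete pipe rotations to obtain the required representation.

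The main obstacle is the forward direction: bounding $\sum_f |d(f)|$ sharply enough to justify the specific choice $K = 3m^2 + 4m$, and verifying that the rotation values obtained by combining the local lemmas on small faces with the global flow on pipes jointly satisfy the Kandinsky/orthogonal properties~\eqref{item:rotation-of-face}--\eqref{item:0-deg-bend-assignment} everywhere. The backward direction should reduce mechanically to chaining the building-block lemmas along the variable trees.
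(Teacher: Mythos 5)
Your proposal follows essentially the same route as the paper: extract the assignment by reading off the variable-gadget rotations and chaining Lemmas~\ref{lem:literal-duplicator-duplicates}, \ref{lem:1-bendable-pipe-is-pipe}, and~\ref{lem:clause-gadget}, and for the converse fix all gadgets first and then resolve the large-face rotation deficits via a flow network on the dual of the variable-clause graph, invoking Lemma~\ref{lem:existence-feasible-flow}. The bounding step you flag as the main obstacle is exactly the computation the paper carries out (each large face deviates by at most $3m+4$, there are at most $m$ dual nodes, hence total absolute demand at most $3m^2+4m = K$), and your sketch already contains the right estimate.
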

\begin{proof}
  Let $S = (\mathcal X, \mathcal C)$ be an instance of
  \textsc{Monotone Planar 3-Sat} and let $G(S)$ be the graph
  constructed from $S$ as defined above.  We first show that the
  existence of an orthogonal 01-representation of $G(S)$ implies the
  existence of a satisfying truth assignment for $S$.

  Let $\mathcal O$ be an orthogonal 01-representation of $G(S)$.  Let
  $x \in \mathcal X$ be a variable and let $X$ be the corresponding
  variable gadget in $G(S)$.  If the positive output edge of $X$ has
  rotation $-1$ in its outer face, we set $x = \mathtt{true}$ (as
  illustrated in Figure~\ref{fig:variable-gadget}b).  Otherwise, we
  set $x = \mathtt{false}$ (as illustrated in
  Figure~\ref{fig:variable-gadget}c).  We claim that this gives a
  satisfying truth assignment for $S$.  Let $c \in \mathcal C$ be a
  positive clause and let $C$ be the corresponding clause gadget in
  $G(S)$.  By Lemma~\ref{lem:clause-gadget}, at least one of the input
  edges of $C$ has rotation~$-1$ in its inner face.  By construction
  of $G(S)$, this input edge is identified with an positive output
  edge of the variable tree $T_x$ for a variable $x$.  Let $X$ be the
  corresponding variable gadget.  As there is a path of literal
  duplicators and bendable pipes from the positive output edge of $X$
  to every positive output edge of $T_x$, it follows from
  Lemma~\ref{lem:literal-duplicator-duplicates} and
  Lemma~\ref{lem:1-bendable-pipe-is-pipe} that the positive output
  edge of $X$ has rotation~$-1$ in its outer face if and only if any
  positive output edge of $T_x$ has rotation~$-1$ in the outer face of
  $T_x$.  Thus, it follows that the positive output edge of $X$ has
  rotation~$-1$ in its outer face and thus $x = \mathtt{true}$, which
  satisfies the clause $c$.

  If $c$ is a negative clause, we find a variable $x$ such that the
  negative output edge of the corresponding variable gadget $X$ has
  rotation~$-1$ in its outer face.  By
  Lemma~\ref{lem:variable-gadget}, the positive output edge of $X$ has
  rotation~$1$ in its outer face, thus $x = \mathtt{false}$ holds,
  which satisfies the negative clause $c$ containing $\neg x$.

  It remains to show the opposite direction.  Assume we have a
  satisfying truth assignment for $S$.  We show how to construct an
  orthogonal 01-representation of~$G(S)$.  As $G(S)$ consists of
  gadgets for which the rotations around every vertex are fixed, it
  remains to specify a rotation for every edge such that the rotation
  around every inner face is~$4$.  We start with the small faces.
  Consider the variable tree $T_x$ of a variable $x$ containing the
  variable gadget $X$.  If $x = \mathtt{true}$, we choose the
  orthogonal 01-representation of $X$ where the positive output edge
  has rotation~$-1$.  This yields a feasible representation by
  Lemma~\ref{lem:variable-gadget}; see
  Figure~\ref{fig:variable-gadget}b--c.

  This already fixes the rotation of the literal duplicators in $T_x$
  that are directly attached to the output edges of $X$.  By
  Lemma~\ref{lem:literal-duplicator-duplicates} this fixes the
  rotation of the corresponding output edge (to the same behavior as
  the input edge) and a corresponding orthogonal $01$-representation
  of the duplicator exists; see
  Figure~\ref{fig:literal-duplicator}b--c.  Applying this procedure
  iteratively to every literal duplicator whose input edge has a fixed
  rotation fixes the orthogonal representation of every literal
  duplicator in $T_x$.

  Similarly, we (partially) fix the orthogonal 01-representation of
  the bendable pipes contained in $T_x$ iteratively according to
  Lemma~\ref{lem:1-bendable-pipe-is-pipe}.  More precisely, the
  rotation of the 1-edges is fixed according to the rotation of the
  input edge; see Figure~\ref{fig:zig-zag}e.  However, we do not
  fix the rotation of the bendable pipes.  Recall that, by
  Lemma~\ref{lem:1-bendable-pipe-is-pipe} this rotation can be
  anything in $\{-K,\dots,K\}$.  We will need the flexibility of
  choosing this rotation to get the rotations in the large faces
  right.

  Note that the resulting orthogonal 01-representations of the
  variable tree have the following properties.  The positive output
  edges of $T_x$ have rotation~$-1$ if $x = \mathtt{true}$ and
  rotation~$1$ otherwise.  The negative output edges have
  rotation~$-1$ if $\neg x = \mathtt{true}$ and rotation~$1$
  otherwise.  By fixing the orthogonal representations of the variable
  trees in this way, we already fix the orthogonal representation of
  the input edges of the clause gadgets in $G(S)$.  Let $C$ be a
  clause gadget in $G(S)$.  Since $S$ is a satisfying truth
  assignment, it follows that the rotation of at least one input edge
  of $C$ in the inner face of $C$ is~$-1$.  Thus, $C$ admits an
  orthogonal 01-representation by Lemma~\ref{lem:clause-gadget}.  The
  choices made so far imply that every small face in our orthogonal
  01-representation has rotation~4, as required.

  It remains to choose the rotations of the bendable pipes such that
  the rotation in the large inner faces is~$4$.  Initially, assume
  that the rotation of every bendable pipe is~$0$.  We first bound the
  maximum deviation from a rotation of~4 around large faces.

  Let $f$ be a large face and let $f_S$ be the corresponding face in
  the variable-clause graph of $S$.  The boundary of $f$ can be
  naturally subdivided into paths belonging to different variable
  trees and paths on the outer face of clause gadgets.  Let $x$ be a
  variable on the boundary of $f_S$.  A path between two output edges
  of $T_x$ consists of three subpaths.  Two paths with rotation~$0$
  consisting of edges belonging to bendable pipes and, in between, one
  path of edges belonging to literal duplicators.  Clearly, this path
  has length at most $\deg(x)$ and since the absolute value of the
  rotation at edges and vertices is at most~$1$, we get a total
  rotation between $-2\deg(x)$ and~$2\deg(x)$ in the large face.
  Summing over all variables incident to $f_S$ gives us a rotation
  between~$-2m$ and~$2m$, where $m$ is the number of edges in the
  variable-clause graph.  Moreover, for each clause incident to $f_S$
  the boundary of $f$ contains a path having absolute rotation at
  most~$3$.  As there are $m/3$ clauses, the total rotation around the
  large face $f$ is between $-3m$ and~$3m$.

  Changing the rotation of a bendable pipe increases the rotation of
  one incident large face by~$1$ and decreases it in the other
  incident large face by~$-1$.  (Note that this does not affect the
  rotations at small faces.)  Thus, choosing the rotations of the
  bendable pipes such that the rotation in every large face is~$4$
  (except for the outer face with rotation~$-4$) is equivalent to
  finding a flow in the flow network $N$ defined as follows.  The
  underlying graph of $N$ is the dual graph of the variable-clause
  graph of $S$.  The demand of the node corresponding to the face
  $f_S$ is the difference between the rotation in the corresponding
  large face $f$ of $G(S)$ and~$4$ ($-4$ if $f$ is the outer face).
  Note that the demands sum up to~$0$.  The capacity on an edge
  connecting $f_S$ and $f_S'$ is equal to the total length of the
  bendable pipes incident to the corresponding faces $f$ and $f'$ in
  $G(S)$, and thus at least~$K$.  As shown above, the absolute value
  of the demand of each node in the flow network is at most $3m + 4$.
  As the flow network contains at most $m$ nodes (otherwise it would
  be a tree or disconnected), the sum of the absolute values of the
  demands is bounded by $3m^2 + 4m$.  The capacity of every edge in
  $N$ is at least $K=3m^2 + 4m$ by the construction of the variable
  tree.  By Lemma~\ref{lem:existence-feasible-flow} the network $N$
  has a solution.
\end{proof}

\begin{theorem}
  \label{thm:orthogonal-01-emb-hard-variants}
  \textsc{Orthogonal 01-Embeddability} is NP-hard for all combinations
  of the following variations.
  \begin{compactitem}
  \item The input has a fixed planar embedding or a fixed planar
    embedding up to the choice of an outer face.
  \item The angles at vertices incident to~$1$-edges are fixed or
    variable, while angles at vertices incident to $0$-edges are
    variable.
  \end{compactitem}
\end{theorem}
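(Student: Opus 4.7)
The plan is to extend the reduction of Theorem~\ref{thm:np-hardness} to all four variants by two independent modifications of the constructed graph $G(S)$: one to handle variable angles at 1-edge vertices, and one to handle a variable outer face.

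\emph{Variable angles at 1-edge vertices.} For each vertex $v$ of $G(S)$ I would attach enough new degree-1 pendants via 0-edges to bring $\deg(v)$ up to exactly four. At a degree-4 vertex of any orthogonal embedding, the four angles must all equal $90^\circ$, since positive multiples of $90^\circ$ can sum to $360^\circ$ only in that way. Thus the angle assignment implicitly used by the construction in Theorem~\ref{thm:np-hardness} is enforced structurally, without declaring any angles in the input. Since only 0-edges are added, the modification is compatible with the variants where angles at 1-edge vertices are declared variable, and it is equivalent to the original instance.

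\emph{Variable outer face.} After the augmentation above, every vertex contributes rotation exactly $+1$ to each of its incident faces. Now consider any small face $f$ of $G(S)$, i.e., the bounded inner face of a building block, a zig-zag, a clause 4-cycle, or one of the interval gadgets $G[0,1]$ and $G[-2,3]$. Writing $k$ for the number of vertices on $\partial f$ and noting that the rotation of each boundary edge lies in $\{-1,0,1\}$, the total boundary rotation equals $k$ plus a quantity in $[-k,k]$, hence lies in $[0,2k]$. In particular it is not $-4$, so no small face can serve as the outer face; the outer face must be a large face. For any such choice, the local gadget constraints of Section~\ref{sec:gadget-constructions} are unaffected, as they depend only on the rotations inside the (small) gadget faces. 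Moreover, the flow-network step in the proof of Theorem~\ref{thm:np-hardness} still succeeds: the capacity $K = 3m^2 + 4m$ of each bendable pipe dominates the total absolute demand, no matter which large face is the one that absorbs rotation $-4$ instead of $+4$. Consequently $G(S)$ admits an orthogonal 01-embedding under some choice of outer face if and only if $S$ is satisfiable.

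Combining the two modifications yields NP-hardness for all four variants at once. I expect the main obstacle to be the case analysis underlying the small-face bound, in particular for the many internal small faces inside $G[0,1]$ and $G[-2,3]$, to ensure that no small face anywhere in the augmented $G(S)$ can attain boundary rotation $-4$. Once this routine check is carried out, the argument of Theorem~\ref{thm:np-hardness} transfers verbatim to the augmented instance.
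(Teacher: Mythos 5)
Your overall strategy matches the paper's: realize the fixed angles by attaching degree-1 pendants, observe that any large face works as the outer face, rule out small faces as the outer face via a rotation lower bound, and note that the flow argument is insensitive to which large face carries demand $-4$. However, your first modification contains a genuine error. You pad \emph{every} vertex of $G(S)$ to degree four, which forces all four angles at every vertex to be $90^\circ$ and hence (as you yourself note) makes every vertex contribute rotation exactly $+1$ to each incident face. This rigidifies the interval gadgets: $G[\rho_1,\rho_2]$ consists entirely of $0$-edges, so its flexibility to realize any rotation in $[\rho_1,\rho_2]$ comes precisely from the variable angles at its internal (low-degree) vertices. After your padding, the path from $s$ to $t$ in $G[0,1]$ and in $G[-2,3]$ has a single forced rotation value, so the $01$-edges can no longer choose between rotation $0$ and $1$ (breaking the bendable boxes, mergers, splitters, and hence Lemma~\ref{lem:building-block-properties}), and the clause gadget of Lemma~\ref{lem:clause-gadget} either always or never admits a drawing. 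The reduction collapses. This is exactly why the theorem statement insists that angles at vertices incident only to $0$-edges remain \emph{variable}: the paper pads (conceptually) only the vertices incident to $1$-edges, i.e., the building-block vertices whose angles the construction of Theorem~\ref{thm:np-hardness} already fixes, and leaves the interval-gadget interiors free.

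Once the padding is restricted in this way, your uniform $[0,2k]$ bound for small faces no longer applies, because boundary vertices of degree $1$ or $2$ inside interval gadgets can contribute rotation as low as $-2$ or $-1$. The paper instead argues face by face: each building-block inner face has total rotation at least $0$ (fixed $90^\circ$ corners plus edges with at most one bend), and the clause gadget's inner face has rotation at least $4 - 3 - 2 = -1$ (four corners at $+1$, three input edges at $\ge -1$, and the interval gadget at $\ge -2$ by Lemma~\ref{lem:interval-gadget}); in both cases the value $-4$ required of an outer face is unattainable. Your remaining points — that any large face may serve as outer face and that the flow network of Lemma~\ref{lem:existence-feasible-flow} still has a feasible solution — are correct and coincide with the paper's argument.
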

\begin{proof}
  In the construction showing Theorem~\ref{thm:np-hardness}, we
  already fixed all angles at vertices incident to 1-edges (the only
  vertices whose angles are not fixed lie inside interval gadgets).
  Thus, we already established hardness for the case that all angles
  at vertices incident to 1-edges are fixed.  As mentioned before,
  fixing angles is not a really a restriction, as we can enforce fixed
  angles by attaching degree-1 vertices.

  It remains to show that the problem remains hard when allowing to
  choose a different outer face.  Clearly, when choosing a different
  large face as outer face all arguments leading to a satisfying truth
  assignment remain valid.  Moreover, choosing a small face as outer
  face can never lead to a valid orthogonal 01-representation for the
  following reason.  Each small face is one of the building blocks
  presented in Section~\ref{sec:building-blocks} (see
  Figure~\ref{fig:building-blocks}), or the inner face of a clause
  gadget (Figure~\ref{fig:clause-gadget}).  For the building blocks it
  is easy to see that the total rotation in the inner face is at
  least~$0$ (by the fixed angles and the restriction of bends on the
  edges).  Thus, none of them can be chosen as the outer face (which
  would require a rotation of~$-4$).  Similarly, the rotations at
  every vertex in the clause gadget is~$1$ in its inner face, which
  sums up to a rotation of~$4$.  The three input edges have rotation
  at least~$-1$ in the inner face and the interval gadget has rotation
  at last~$-2$.  Thus, the total rotation is at least~$-1$, which
  makes it impossible to choose it as the outer face.
\end{proof}

By the equivalence of orthogonal representations to flow
networks~\cite{t-emn-87}, it follows that it is NP-hard to test
whether there is a valid flow in a planar flow network with the
properties that
\begin{inparaenum}[(i)]
\item the capacity on every edge is~1 and
\item some undirected edges require to have one unit of flow (no
  matter in which direction).
\end{inparaenum}
Note that Garg and Tamassia~\cite{gt-ccurpt-01} show hardness for the
less restrictive case that the capacities and the lower bounds for
flow on undirected edges is unbounded.  They use this to show
NP-hardness of \textsc{Orthogonal 0-Embeddability} of 4-planar graph
(with variable combinatorial embedding).

\begin{theorem}
  All variants of \textsc{Orthogonal 01-Embeddability} are NP-hard
  even if the input graph is a subdivision of a 3-connected graph.
\end{theorem}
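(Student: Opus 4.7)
The plan is to modify the reduction from the proof of Theorem~\ref{thm:np-hardness} so that the constructed graph $G(S)$ becomes a subdivision of a 3-connected graph, while all reduction arguments carry over. There are two structural obstructions in the original construction: degree-1 vertices used to fix angles at vertices incident to $0$-edges, and 2-vertex cuts at the interfaces where gadgets share an edge (the two endpoints of each identified input/output edge form a 2-separator between the two gadgets). The first obstruction vanishes by appealing to Theorem~\ref{thm:orthogonal-01-emb-hard-variants}, which permits fixing angles at such vertices directly without attaching pendants.

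To handle the second obstruction, I would augment $G(S)$ by adding, for each face $f$ of the current construction, additional subdivided paths — call them \emph{stabilizer pipes} — that cross $f$ and connect vertices of distinct gadgets. These stabilizers are concatenations of bendable boxes as in Section~\ref{sec:bendable-pipes}, so that each carries a freely choosable rotation in a large symmetric range $\{-K,\dots,K\}$ and otherwise imposes no local constraint beyond the face-rotation equations. By choosing the stabilizer endpoints to hit an interior vertex on the boundary of every gadget on $f$, we guarantee that after suppressing all degree-2 subdivision vertices, every pair of essential vertices is joined by at least three internally disjoint paths.

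Correctness in the forward direction is immediate: an orthogonal 01-representation of the augmented graph restricts to one of $G(S)$ and thus yields a satisfying truth assignment via the analysis in the proof of Theorem~\ref{thm:np-hardness}. For the reverse direction, we follow the same scheme as that proof, first fixing rotations inside variable gadgets, duplicators, and clause gadgets from a satisfying assignment, and then solving for the rotations of both the original bendable pipes and the new stabilizer pipes so that the rotation around each refined face equals $4$ (or $-4$ for the outer face). This is again a feasibility instance in a planar flow network on the dual of the refined face structure, and choosing $K$ slightly larger than in the proof of Theorem~\ref{thm:np-hardness} (still polynomial in $|S|$ and the number of stabilizers) lets Lemma~\ref{lem:existence-feasible-flow} deliver the required flow.

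The main obstacle is the connectivity analysis after suppression of degree-2 vertices: we must verify that no 2-cut survives. This amounts to a case distinction over separators consisting of (i) two vertices within a single gadget, (ii) vertices at a gadget--gadget interface, and (iii) vertices on two different pipes. Installing enough stabilizers — for instance one per consecutive pair of gadgets along every face — makes each case trivial to rule out, since each essential vertex then lies on at least three pipes/gadgets leading to distinct places. Because the augmentation introduces no new degree-1 vertices and respects the fixed combinatorial embedding, the fixed-angle conditions, and the outer-face analysis, the hardness for all variants from Theorem~\ref{thm:orthogonal-01-emb-hard-variants} extends verbatim.
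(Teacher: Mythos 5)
There is a genuine gap, and your overall strategy diverges from the paper's in a way that runs into concrete trouble. The paper does not try to patch the construction $G(S)$ from Theorem~\ref{thm:np-hardness}; instead it performs a second, generic reduction from the already-hard fixed-embedding variant: every degree-1 vertex is replaced by a 4-cycle, and every vertex $v$ is replaced by a ring of subdivided edges around $v$, with each original edge $uv$ tripled into three parallel subdivided paths ($u_iv_j$, $u_i^-v_j^+$, $u_i^+v_j^-$, all 1-edges iff $uv$ was). Three-connectivity is then established by explicitly exhibiting three internally disjoint paths between any two branch vertices, routed along the rings and through the parallel edge copies. Your plan instead augments $G(S)$ with ``stabilizer pipes'' attached to gadget-interior vertices, and this fails on at least two counts. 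First, the instance must remain 4-plane with the gadgets' angle analysis intact: most gadget vertices already have degree 3 or 4 and fixed right angles, and the correctness lemmas (e.g.\ Lemma~\ref{lem:building-block-properties}, which relies on each 4-cycle vertex contributing rotation exactly 1 to the inner face) break the moment you insert a new incident edge there. You never identify which vertices can absorb a stabilizer endpoint without violating the degree bound or the rotation bookkeeping.

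Second, and more fundamentally, the stabilizers you propose are themselves concatenations of bendable boxes, i.e.\ 4-cycles glued along shared edges; the two endpoints of every such shared edge form a 2-cut inside the stabilizer. So each stabilizer imports a chain of new 2-separators rather than eliminating the existing ones, and the claim that ``installing enough stabilizers makes each case trivial to rule out'' cannot be made good: no finite number of pipes of this kind yields a subdivision of a 3-connected graph. The connectivity verification is asserted, not proved, whereas it is precisely the hard part of the statement. (Your dismissal of the degree-1 obstruction is also too quick: the interval gadgets, including the $G[-2,3]$ inside every clause gadget, are tendril-like structures that are genuinely part of the graph, which is why the paper's proof begins by replacing degree-1 vertices with 4-cycles rather than arguing them away via the fixed-angle variant.) The forward and backward flow arguments you sketch are plausible in spirit, but they rest on an augmented graph whose existence you have not established.
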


\begin{proof}
  We reduce from \textsc{Orthogonal 01-Embeddability} with fixed
  planar embedding and variable angles.  Let~$G=(V,E_0 \cupdot E_1)$ be a
  connected instance of this problem.  We replace each degree-1
  vertex~$v$ by a cycle~$C$ of four 0-edges such that one vertex
  of~$C$ is adjacent to the neighbor of~$v$.  It is not hard to see
  that the resulting graph has an orthogonal~01-embedding if and only
  if~$G$ has one.  In the following we assume without loss of
  generality that~$G$ has minimum degree~2.

  For each vertex~$v$ with incident edges~$e_1,\dots,e_d$ (in
  clockwise order around~$v$), we make the following construction.
  First, we subdivide its incident edges~$e_i$ with new vertices~$v_i$
  and connect them to form a cycle (in the clockwise ordering
  around~$v$), and subdivide the edges of this cycle five times.  The
  vertices before and after~$v_i$ in clockwise direction are
  denoted~$v_i^-$ and~$v_i^+$.  Afterwards, each edge~$uv$ has been
  subdivided into~$uu_i,u_iv_j,v_jv$.  We now add for each such edge
  the edges~$u_i^-v_j^+$ and~$u_i^+v_j^-$.  The edges~$u_iv_j$,
  $u_i^-v_j^+$, and~$u_i^+v_j^-$ are 1-edges if and only if the
  original edge~$uv$ was a 1-edge.  All other edges are 0-edges.
  Figure~\ref{fig:01-triconnected}a illustrates the construction for
  a vertex of degree~3.

  \begin{figure}[tb]
    \centering
    \includegraphics[page=4]{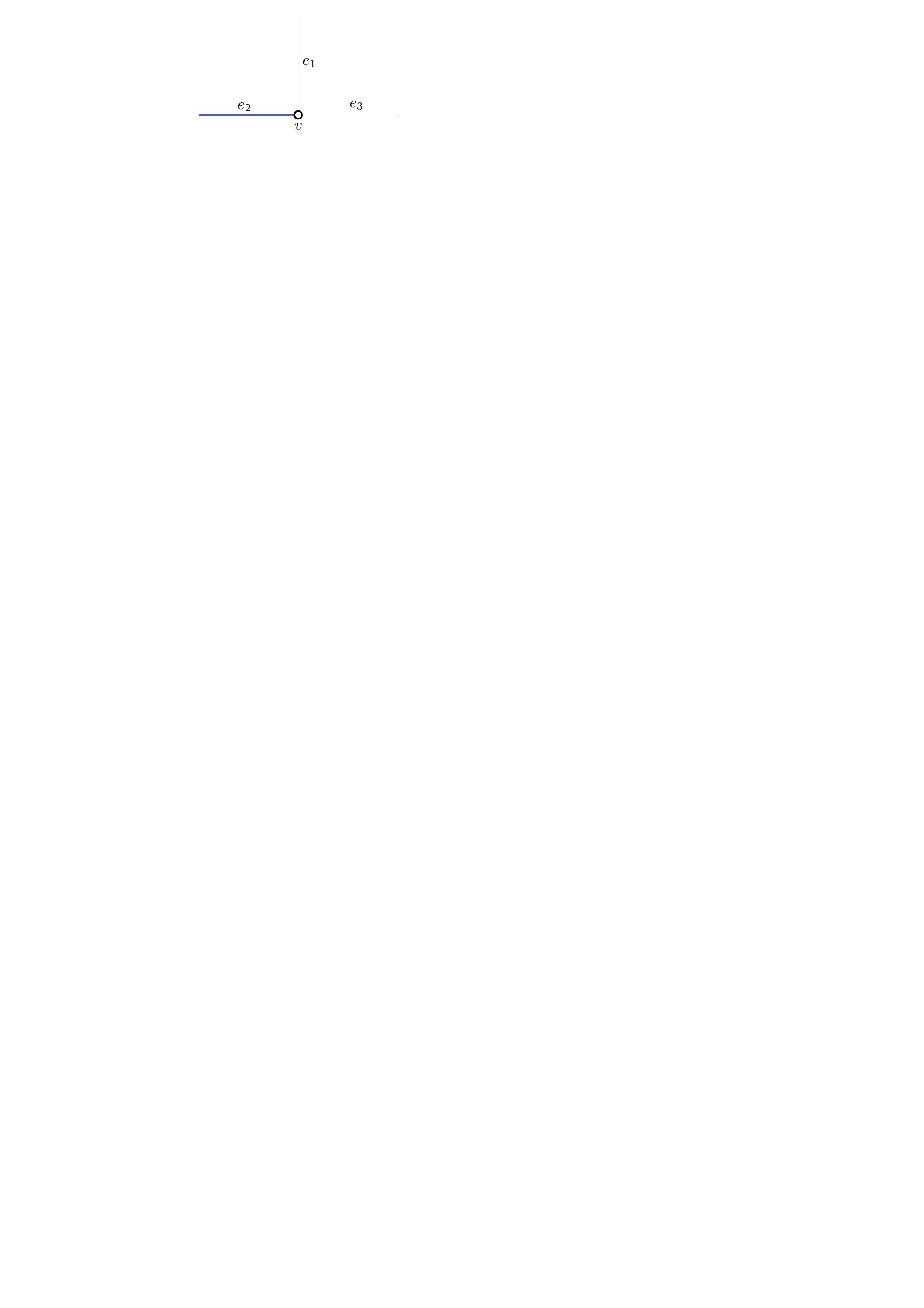}
    \caption{(a)~Construction for transforming an instance of
      \textsc{Orthogonal $01$-Embeddability} into a subdivision of a
      3-connected graph.  A vertex $v$ of degree~3 with variable angles
      (left) and the corresponding gadget for the construction
      (rights).  
      (b)~The routing of three disjoint paths from $a$ to $b$ in $G'$;
      subdivision vertices are omitted.  Vertices~$u$ and $v$ are
      marked green, and the corresponding path in $G$ is bold green.
      The three red paths between $a$ and $b$ follow the bold green
      path.  Note that at the beginning at the end of a path some
      rerouting via vertices not on the path may be necessary,
      however, the rerouting is such that the paths remain disjoint.
    }
    \label{fig:01-triconnected}
  \end{figure}

  We claim that the resulting graph~$G'$
  \begin{inparaenum}[(i)]
  \item admits an orthogonal 01-represen\-tation if and only if~$G$ does, and
  \item is a subdivision of a 3-connected graph.
  \end{inparaenum}
  Once the claim is proved, the statement of the theorem follows since
  the reduction can be performed in polynomial time.
  
  We start with~(i).  First assume that~$G'$ has an orthogonal
  01-representation~$\mathcal O$ and let~$uv$ be an edge of~$G$ that
  is subdivided into~$u_iv_j$.  By construction both~$u_i$ and~$v_j$
  have degree~4 and the edges~$uu_i$ and~$v_jv$ are~$0$-edges.  That
  is all bends of the path~$uu_iv_jv$ lie on the edge~$u_iv_j$.
  Hence, the representation on the subgraph containing the
  vertices~$\{v,v_1,\dots,v_{\deg(v)} \mid v \in V\}$ has all bends on
  the edges~$v_iv_j$.  We can then undo the subdivisions and obtain an
  orthogonal 01-representation of~$G$.  Conversely, if~$\mathcal O$ is
  an orthogonal 01-representation of~$G$, we can first subdivide each
  edge~$uv$ close to vertices~$u$ and~$v$ to obtain vertices~$u_i$
  and~$v_j$ with~$uu_i$ and~$v_jv$ having~$0$ bends.  Then we add the
  edges the edges~$u_i^-v_j^+$ and~$u_i^+,v_j^-$ parallel to~$u_iv_j$.
  Finally, we add the remaining edges of the cycle around~$v$, which
  can be done without bends on the edges since the paths from~$v_i^+$
  to~$v_{i+1}^-$ (indices taken modulo~$\deg(v)$) have sufficiently
  many degree-2 vertices, which can serve as bends.  We have obtained
  an orthogonal 01-representation of~$G'$.

  For~(ii), we show that in~$G'$ any two vertices~$a$ and~$b$ of
  degree~3 or more are connected by three (internally) vertex-disjoint
  paths.  Let~$u$ and~$v$ be the two vertices of~$G$ to whose
  construction~$a$ and~$b$ belong.  If~$u=v$ it is not hard to find
  three disjoint paths; one path goes through the center vertex~$v$,
  the remaining paths are routed clockwise and counterclockwise along
  the cycle around~$v$.  It may be necessary to route through a
  neighboring gadget to get around the attachment vertices of~$v$; see
  Figure~\ref{fig:01-triconnected}b.

  If~$u \ne v$, we pick a shortest path~$u = u_1,\dots,u_k = v$
  from~$u$ to~$v$ in~$G$.  This path corresponds to a
  path~$u_1a_1b_2u_2,\dots,a_{k-1}b_ku_k$, where~$a_i$ and~$b_i$ are
  vertices of the cycle around vertex~$u_i$.  We find three disjoint
  paths from~$a_1^-, a_1$ and~$a_1^+$ to~$b_k^-, b_k$ and~$b_k^+$,
  respectively, simply by taking for each edge~$u_iu_{i+1}$ for~$1 < i
  < k-1$ the path from~$a_i^+$ in clockwise direction along the cycle
  around~$u_i$ via~$b_i^-$ to~$a_{i+1}^+$, from~$a_i$ via the center
  vertex~$u_i$ and~$b_i$ to~$a_{i+1}$, and from~$a_i^+$ in
  counterclockwise direction along the cycle around~$u_i$ via~$b_i^+$
  to~$a_{i+1}^-$; this is illustrated in the middle vertex of the
  green path in Figure~\ref{fig:01-triconnected}b.  We also extend
  these paths by adding edges~$a_{k-1}b_k$, $a_{k-1}^+b_k^-$
  and~$a_{k-1}^-b_k^+$ so that we have disjoint paths from~$a_1^-$
  to~$b_k^+$, from~$a_1$ to~$b_k$ and from~$a_1^+$ to~$b_k^-$.  It
  then remains to find disjoint paths from~$a$ to~$a_1^-,a_1$
  and~$a_1^+$ and from~$b_k^-,b_k$ and~$b_k^+$ to~$b$.  This can be
  done by routing in the gadget around~$u$ and~$v$, respectively.
  Note that it may be necessary to visit the gadget of an adjacent
  vertex.  This does, however, not interfere with the paths
  constructed so far since we assumed that it is a shortest path, and
  hence the corresponding neighbors are not part of the constructed
  paths.  This finishes the proof of the claim.
\end{proof}

\begin{corollary}
  \textsc{Orthogonal 0-Embeddability} is NP-hard for 4-planar graphs
  with a variable combinatorial embedding.
\end{corollary}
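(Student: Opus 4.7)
The plan is to reduce from \textsc{Orthogonal 01-Embeddability} restricted to subdivisions of 3-connected 4-planar graphs, which is NP-hard by the preceding theorem. Given an instance $G=(V,E_0 \cupdot E_1)$, I would construct a 4-planar graph $G'$ by replacing each 1-edge $uv \in E_1$ with the following \emph{bend gadget}: subdivide $uv$ with a new vertex $w$ and glue to $w$ a 4-cycle $w-a-b-c$ of new 0-bend edges (so that $w$ gains two new neighbors $a$ and $c$, while $a,b,c$ have degree $2$ in $G'$). In the resulting graph $G'$, the endpoints $u$ and $v$ keep their original degree and $w$ has degree $4$, so $G'$ is 4-planar.

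The correctness of the reduction hinges on the following rigidity of the gadget. In any 0-embedding of $G'$, the vertex $w$ has degree $4$ and therefore all four of its angles equal $90^\circ$. The 4-cycle $w-a-b-c$ bounds a face (since $a,b,c$ are new vertices without other incidences), which forces $a$ and $c$ to be consecutive in the cyclic order at $w$; the remaining two slots around $w$ are then necessarily occupied by the edges to $u$ and $v$ in consecutive positions, so the path $u-w-v$ turns by exactly $90^\circ$ at $w$ and realises exactly one bend. A rotation-sum argument on the interior face of the 4-cycle (sum $=4$, with $w$ contributing $1$ and each degree-$2$ vertex contributing at most $1$) additionally shows that the 4-cycle embeds as a rectangle. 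This is what we use in the opposite direction: given a 01-embedding of $G$, we place $w$ at the bend point of each 1-edge and tuck a sufficiently small unit square into one of the two faces adjacent to the bend.

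It remains to argue that variable combinatorial embedding does not introduce solutions beyond those corresponding to 01-embeddings of $G$. Since $G$ is a subdivision of a 3-connected graph, its combinatorial embedding is unique up to reflection and choice of outer face, and this rigidity carries over to $G'$ outside the gadgets. Within each gadget, the only freedom is a mirror choice between the two admissible cyclic orders at $w$ (with $a,c$ adjacent), which exactly mirrors the two possible bend directions for the replaced 1-edge. Hence 01-embeddings of $G$ and 0-embeddings of $G'$ correspond bijectively, modulo the global reflection and outer-face symmetry shared by both problems.

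The main technical obstacle is the rigidity claim in the preceding paragraph: one must verify that no variable combinatorial embedding of $G'$ escapes the intended correspondence by, for instance, embedding a gadget into a face that separates $u$ and $v$ in a way incompatible with the 3-connected structure inherited from $G$. Once this bookkeeping is made precise, the reduction is clearly polynomial, and NP-hardness of \textsc{Orthogonal 0-Embeddability} for 4-planar graphs with a variable combinatorial embedding follows.
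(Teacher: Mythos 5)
Your proposal is correct and follows essentially the same route as the paper: reduce from \textsc{Orthogonal 01-Embeddability} on subdivisions of 3-connected graphs and replace each $1$-edge by a rigid gadget that forces exactly one $90^\circ$ turn, so that the only remaining embedding freedom (which side of the path the gadget is placed on) encodes the bend direction. The paper uses its interval gadget $G[1,1]$ where you use a subdivision vertex with a pendant $4$-cycle, but this is only a cosmetic difference in the choice of gadget.
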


\begin{proof}
  We reduce from {\sc Orthogonal 01-Embeddability} where the input
  graph is a subdivision of a 3-connected graph.  Note that the
  embedding is unique up to the choice of the outer face.  We now
  replace each~$1$-edge by a copy of the interval gadget~$G[1,1]$ (see
  Figure~\ref{fig:interval-gadget}).  Changing the embedding of this
  gadget decides the bend direction of the $1$-edge and vice versa.
  It is not hard to see that the resulting graph admits
  a~$0$-embedding if and only if the original instance admits an
  orthogonal~$01$-embedding.  Clearly the reduction runs in polynomial
  time.
\end{proof}

\subsection{Kandinsky Bend Minimization}
\label{sec:kand-bend-minim}

In the following, we show how to reduce \textsc{Orthogonal
  01-Embeddability} to \textsc{Kandinsky Bend Minimization}.  The
reduction consists of two basic building blocks.  In an orthogonal
embedding, every side of a vertex can be occupied by at most one edge.
We show how to enforce this requirement also for Kandinsky embeddings.
Moreover, we construct a subgraph whose Kandinsky embeddings behave
like the embeddings of an edge with exactly one bend.

\paragraph{Corner Blocker}

Let $B$ be the graph consisting of a 4-cycle together with an
additional \emph{attachment vertex} connected to two non-adjacent
vertices of the 4-cycle.  The graph $B$ is called \emph{corner
  blocker}.  Figure~\ref{fig:corner-blocker}a shows a corner blocker
with attachment vertex $v$.  Let $v$ be a vertex in a planar graph
$G$.  \emph{Blocking a corner of $v$} denotes the process of attaching
a corner blocker to $v$ by identifying the attachment vertex of $B$
with $v$.  Consider a Kandinsky embedding of $G + B$.  The corners of
the box representing the vertex $v$ are also called the \emph{corners}
of $v$.  We say that a corner of $v$ is \emph{blocked} by the corner
blocker $B$ if it lies in the inner face of $B$ incident to $v$.
Figure~\ref{fig:corner-blocker}b shows a vertex $v$ with four corner
blockers attached to it such that all four corners of $v$ are blocked.
Note that the Kandinsky representation of a Kandinsky embedding
already determines which corners are blocked by a corner blocker.

\begin{figure}
  \centering
  \includegraphics[page=1]{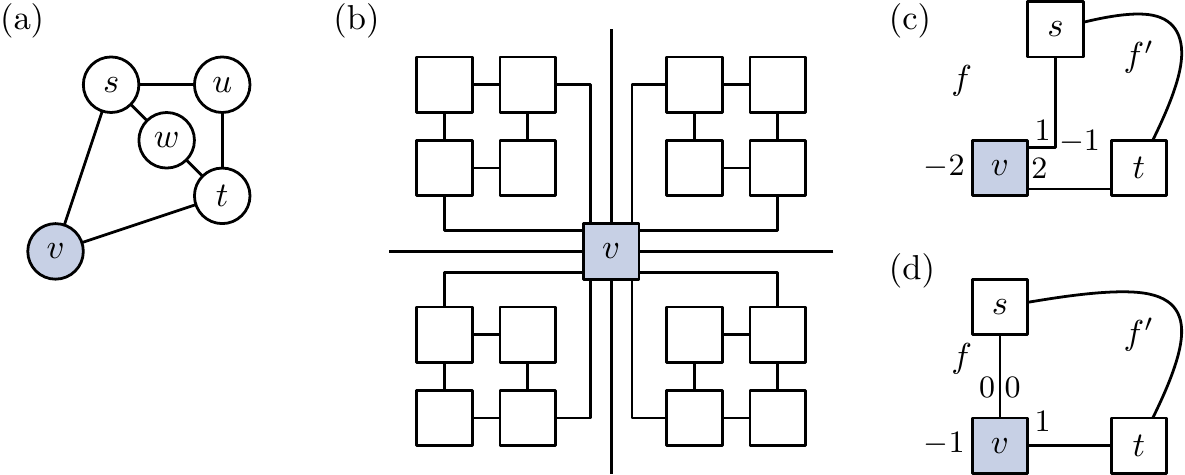}
  \caption{(a)~A corner blocker with attachment vertex $v$.  (b)~A
    Kandinsky representation of a vertex $v$ with four attached corner
    blockers (and four outgoing edges).  (c--d) Illustration of the
    proof of Lemma~\ref{lem:corner-blocker-lower-bound}.}
  \label{fig:corner-blocker}
\end{figure}

The idea behind the corner blocker is to enforce a blocking of all
four corners of a vertex.  Recall that we assume a fixed planar
embedding of the input graph and thus a fixed order of edges around
every vertex.  Thus, blocking all four corners is equivalent to
enforcing edges to leave a vertex at a specific side, as in
Figure~\ref{fig:corner-blocker}b.  The following two lemmas show
that the corner blocker defined above is well suited for this purpose,
as it admits an optimal drawing blocking only a single corner but
blocking no corner causes additional cost.

\begin{lemma}
  \label{lem:corner-blocker-lower-bound}
  Every Kandinsky representation of a corner blocker has at least two
  bends.
\end{lemma}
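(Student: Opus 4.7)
The plan is to rule out Kandinsky representations of the corner blocker with fewer than two bends via a case analysis on the face- and vertex-rotation equations combined with the bend-or-end property.

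I label the $4$-cycle by $a,b,c,d$ and let $v$ be the attachment vertex, joined to $a$ and $c$. The three $4$-faces are $f_1$ (the interior of the $4$-cycle, not incident to $v$), $f_2$ (the inner banana containing $b$), and $f_3$ (the outer banana containing $d$). For each face $f$, let $V_f$ and $B_f$ denote the sums of vertex and edge rotations along $\partial f$; then $V_{f_1}+B_{f_1}=V_{f_2}+B_{f_2}=4$ and $V_{f_3}+B_{f_3}=-4$. The vertex constraints are $\sum_f \rot_f(a)=\sum_f \rot_f(c)=2$ and $\sum_f \rot_f(b)=\sum_f \rot_f(d)=\sum_f \rot_f(v)=0$. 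Since each bend on an edge contributes $\pm 1$ to $B_f$ and $\mp 1$ to $B_{f'}$ for its two incident faces, one obtains $|B_{f_1}|+|B_{f_2}|+|B_{f_3}|\le 2B$, where $B$ is the total number of bends.

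Let $k$ be the number of $0^\circ$ angles in the representation. If $k\ge 2$, each such angle requires a close-bend at its vertex on one of its two incident edges; since a single close-bend at a given edge-end has a fixed sign in each face, no single bend can serve two such angles, so $B\ge 2$. If $k=0$, all vertex rotations lie in $\{-1,0,1\}$, and the degree-$3$ constraint forces the rotations of $a$ and of $c$ to each be a permutation of $(1,1,0)$ across the three faces. Parameterizing the three degree-$2$ rotations by $b_1,d_1,v_2\in\{-1,0,1\}$ and substituting into $|B_{f_1}|+|B_{f_2}|+|B_{f_3}|$, each absolute value resolves to a monotone linear expression in the parameters, and a direct check over the nine permutation combinations yields $|B_{f_1}|+|B_{f_2}|+|B_{f_3}|\ge 4$. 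Combined with the inequality above this forces $B\ge 2$.

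For $k=1$ the total-bend inequality is too weak, and a direct enumeration is needed. The single $0^\circ$ angle forces a single close-bend, which consumes the whole bend budget and contributes $\mp 1$ to $B_f$ for the two faces adjacent to its edge, leaving $B_{f_\mathrm{third}}=0$ for the third face. I enumerate the location of the $0^\circ$ angle (vertex and face) and the choice of edge carrying the close-bend, reducing via the symmetries $a\leftrightarrow c$, $b\leftrightarrow d$, and $f_2\leftrightarrow f_3$. In every subcase, combining the $+2$ contribution of the $0^\circ$ angle to $V_{f^*}$ with the vertex-sum constraints at $b,d,v$ forces $V_{f_\mathrm{third}}$ to miss its required value $\pm 4$ by exactly one, which is the contradiction.

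The main obstacle is the exhaustive bookkeeping for the $k=1$ subcases; each is a short linear check, but several configurations of offending vertex, face, and close-bend-bearing edge have to be tracked, and the listed symmetries are what make the enumeration feasible.
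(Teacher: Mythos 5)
Your overall strategy differs from the paper's and two of your three cases are sound. The case $k\ge 2$ is correct (each $0^\circ$ angle needs its own close bend, since a bend close to a fixed edge-end is concave in only one of the two incident faces), and the case $k=0$ is correct: with all vertex rotations in $\{-1,0,1\}$ one gets $|B_{f_1}|+|B_{f_2}|+|B_{f_3}|=8+2V_{f_3}\ge 4$, hence $B\ge 2$. The problem is the case $k=1$, which you do not actually carry out but instead summarize with the claim that ``in every subcase \dots $V_{f_{\mathrm{third}}}$ misses its required value $\pm 4$ by exactly one.'' That uniform mechanism is false. Take the $0^\circ$ angle at the attachment vertex $v$ in the inner banana $f_2$, with the close bend on $va$. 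The bend-free third face is $f_1$, and $V_{f_1}=4$ forces all four rotations in $f_1$ to be $+1$; this is perfectly consistent with the vertex-sum constraints (for $a$ and $c$ it just fixes which face receives the $0$ of the permutation $(1,1,0)$, and for $b$ and $d$ it sets their other rotation to $-1$). The same happens in several other subcases (e.g.\ the $0^\circ$ angle at $d$ in $f_1$, or at $a$ in $f_1$ with the bend on $ad$). In all of these the contradiction exists but lives elsewhere: the single close bend is concave in the face $f^*$ carrying the $0^\circ$ angle, so $B_{f^*}=-1$ and hence $V_{f^*}=5$ (resp.\ $-3$ for the outer face), and it is this equation, combined with the rotations already forced by $V_{f_{\mathrm{third}}}=\pm4$, that is unsatisfiable. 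So your enumeration can be completed, but not by the check you describe; as written, the $k=1$ argument would terminate without a contradiction in those subcases.

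For comparison, the paper avoids the $k=1$ enumeration entirely by a local modification argument: a $0^\circ$ angle at a degree-2 vertex of the blocker can always be removed together with its close bend, strictly decreasing the bend count, so a bend-minimal representation has no such angles; a similar argument handles the degree-3 vertices. Then every vertex rotation in the outer face is at least $-1$ (at degree-2 vertices) or at least $0$ (at degree-3 vertices), the vertex contribution to the outer face is at least $-2$, and the single face equation $V_{f_3}+B_{f_3}=-4$ already forces two bends. If you want to keep your direct case analysis, you should either write out the $k=1$ subcases using the face equation of $f^*$ as indicated above, or adopt the paper's reduction, which makes the $k=1$ and $k\ge 2$ cases unnecessary.
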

\begin{proof}
  Let $B$ be a corner blocker.  Denote the degree-2 vertices of $B$
  with $u$, $v$, and $w$ and the degree-3 vertices with $s$ and $t$
  and let $B$ be embedded such that the boundary of the outer face $f$
  contains $u$, $v$, $s$, and $t$; see
  Figure~\ref{fig:corner-blocker}a.  In every Kandinsky
  representation, the total rotation around $f$ is~$-4$.  We show that
  this already implies that every Kandinsky representation has at
  least two bends.

  Let $f'$ be the inner face incident to $v$.  If $v$ has rotation $2$
  in $f'$ for a fixed Kandinsky representation, then one of the two
  edges $vs$ or $vt$ has rotation~$-1$ at $v$.  We assume without loss
  of generality that $vs$ has rotation~$-1$ at $v$, thus we get the
  following rotation values (see Figure~\ref{fig:corner-blocker}c):
  $\rot_{f'}(v) = 2$, $\rot_f(v) = -2$, $\rot_{f'}(vs[v]) = -1$, and
  $\rot_f(vs[v]) = 1$.  As $v$ has degree~2, we obtain another
  Kandinsky representation by setting $\rot_{f'}(v) = 1$, $\rot_{f}(v)
  = -1$, $\rot_{f'}(vs[v]) = 0$, and $\rot_f(vs[v]) = 0$; see
  Figure~\ref{fig:corner-blocker}d.  As this new Kandinsky
  representation has fewer bends, we can assume in the following that
  $\rot_{f'}(v) \not= 2$, which shows that the rotation at $v$ in $f$
  is at least~$-1$.  Clearly, the same holds for $u$.

  A similar argument shows that the rotations at $s$ and $t$ in $f$
  are at least~0.  It follows that the total rotation of vertices in
  the outer face is at least~$-2$.  Thus, to get a total rotation
  of~$-4$, there need to be two bends on edges incident to the outer
  face, which shows the claim.
\end{proof}

\begin{lemma}
  \label{lem:corner-blockers-no-corner-three-bends}
  Every Kandinsky representation of a corner blocker that blocks no
  corner of its attachment vertex has at least three bends.
\end{lemma}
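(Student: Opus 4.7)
The plan is to refine the rotation-counting argument used in Lemma~\ref{lem:corner-blocker-lower-bound} so that we gain one additional forced bend from the assumption that no corner of~$v$ is blocked. That assumption is exactly that the inner face $f'$ incident to~$v$ has angle~$0^\circ$ at~$v$, i.e., $\rot_{f'}(v) = 2$. Since $v$ has degree~$2$ in the corner blocker and the rotations in its two incident faces sum to~$0$, this immediately forces $\rot_f(v) = -2$ in the outer face~$f$. Moreover, the bend-or-end property applied to the $0^\circ$ angle at~$v$ in~$f'$ yields a concave bend close to~$v$ on one of $vs$ or $vt$; this bend has rotation~$-1$ on the $f'$ side and hence rotation~$+1$ on the $f$ side, so in particular it is a right bend on the boundary of~$f$.

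I would then reuse the local exchange arguments from the proof of Lemma~\ref{lem:corner-blocker-lower-bound} at the remaining boundary vertices of~$f$, namely $u$, $s$ and $t$. These exchanges only toggle angles and incident bends at the vertex being exchanged, so each of them leaves the angle at~$v$ (and thus $\rot_{f'}(v) = 2$) untouched. Hence we may assume we are working in a representation of minimum bend count among those blocking no corner, in which $\rot_f(u) \ge -1$, $\rot_f(s) \ge 0$ and $\rot_f(t) \ge 0$. Combined with $\rot_f(v) = -2$, the vertex rotations on the boundary of~$f$ sum to at least~$-3$.

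Since the total rotation around~$f$ equals~$-4$, the sum of the edge rotations on the boundary of~$f$ is at most~$-1$. Writing this sum as $b_R - b_L$, where $b_R$ and $b_L$ denote the numbers of right and left bends on the boundary of~$f$, the bend-or-end bend from the first step gives $b_R \ge 1$, and then $b_R - b_L \le -1$ forces $b_L \ge b_R + 1 \ge 2$. Consequently the boundary of~$f$ carries at least $b_R + b_L \ge 3$ bends, which is a lower bound on the total number of bends of the representation.

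The main obstacle I anticipate is cleanly adapting the degree-$3$ version of the exchange argument (for $s$ and $t$) that is only sketched in the proof of Lemma~\ref{lem:corner-blocker-lower-bound}, and verifying that it never alters the angle at~$v$ in~$f'$; once that is nailed down, the rotation bookkeeping around~$f$ is routine.
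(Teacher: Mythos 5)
Your argument is correct, but it takes a different route from the paper's. The paper's proof is a two-line reduction: since no corner is blocked, $\rot_{f'}(v)=2$, so the degree-2 exchange at~$v$ worked out explicitly in the proof of Lemma~\ref{lem:corner-blocker-lower-bound} (Figure~\ref{fig:corner-blocker}c,d) applies verbatim, removes exactly one bend (the bend close to~$v$ forced by the bend-or-end property), and produces a valid Kandinsky representation of the corner blocker; by Lemma~\ref{lem:corner-blocker-lower-bound} that representation still has at least two bends, so the original had at least three. You instead keep the forced configuration at~$v$ and redo the rotation bookkeeping around the outer face: $\rot_f(v)=-2$ plus the normalized bounds at $u,s,t$ give a vertex contribution of at least $-3$, hence an edge contribution of at most $-1$, and the bend-or-end bend close to~$v$ supplies one guaranteed right bend, forcing $b_L \ge b_R+1 \ge 2$ and thus $b_R+b_L\ge 3$. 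Both arguments are sound and rest on the same underlying fact (the $0^\circ$ angle at~$v$ forces a bend that is ``wasted'' with respect to the $-4$ rotation budget of the outer face), but the paper's reduction is cheaper: it only needs the single, fully worked-out degree-2 exchange at~$v$ and then cites the previous lemma as a black box, whereas your version must re-establish the normalizations $\rot_f(s),\rot_f(t)\ge 0$ for representations that are minimal within the restricted class blocking no corner. You correctly identify that this works because the exchanges at $u$, $s$, $t$ never touch the angle at~$v$ in~$f'$ (so they preserve the ``no corner blocked'' property), and you rightly flag the degree-3 exchanges as the delicate part --- but note that this is a burden the paper's own proof avoids entirely, so if you want the shortest path, perform the exchange at~$v$ and invoke Lemma~\ref{lem:corner-blocker-lower-bound} on the result.
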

\begin{proof}
  As shown in the proof of Lemma~\ref{lem:corner-blocker-lower-bound},
  one can reduce the number of bends of a Kandinsky representation of
  a corner blocker if the rotation at the attachment vertex in the
  outer face is~$-2$; see Figure~\ref{fig:corner-blocker}c,d.
\end{proof}

We can make the corner blockers stronger by nesting them.  The
\emph{nested corner blocker $B_d$ of depth $d$} is obtained by taking
$d$ corner blockers and identifying their attachment vertices.  The
nested corner blocker $B_d$ is embedded such that $v$ lies on the
outer face and the innermost face has distance $d$ to the outer face
(in the dual graph); see Figure~\ref{fig:corner-blocker-nested} for an
example.  Clearly, the statements from
Lemma~\ref{lem:corner-blocker-lower-bound} and
Lemma~\ref{lem:corner-blockers-no-corner-three-bends} extend to nested
corner blockers, where all bend numbers have to be multiplied with
$d$.

\begin{figure}
  \centering
  \includegraphics[page=2]{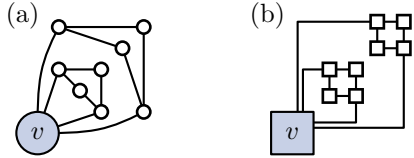}
  \caption{(a) The nested corner blocker $B_2$ of depth~$2$.
    (b)~Kandinsky representation of $B_2$ with~$4$ bends.}
  \label{fig:corner-blocker-nested}
\end{figure}

\paragraph{One-Bend Gadget}

Let $\Gamma$ be the graph consisting of the $2 \times 3$-grid with the
two columns $v_1,v_2,v_3$ and $u_1,u_2,u_3$ (from bottom to top)
together with the vertex $v$ connected to $v_1$, $v_2$, and $v_3$ and
the vertex $u$ connected to $u_2$; see
Figure~\ref{fig:one-bend-gadget}a.  We call $\Gamma$ the
\emph{one-bend gadget} with its two \emph{endvertices} $u$ and~$v$.
The path $\pi = (u, u_2, v_2, v)$ is called the \emph{bending path} of
$\Gamma$.  In the following we show that the bending path of a
one-bend gadget is (more or less) forced to have either rotation $1$
or $-1$ in every Kandinsky representation.  As for the corner blocker,
we say that the one-bend gadget blocks $k$ corners of the vertex $v$
in a given Kandinsky representation if $k$ corners of $v$ lie in the
inner face of $\Gamma$.

\begin{figure}
  \centering
  \includegraphics[page=1]{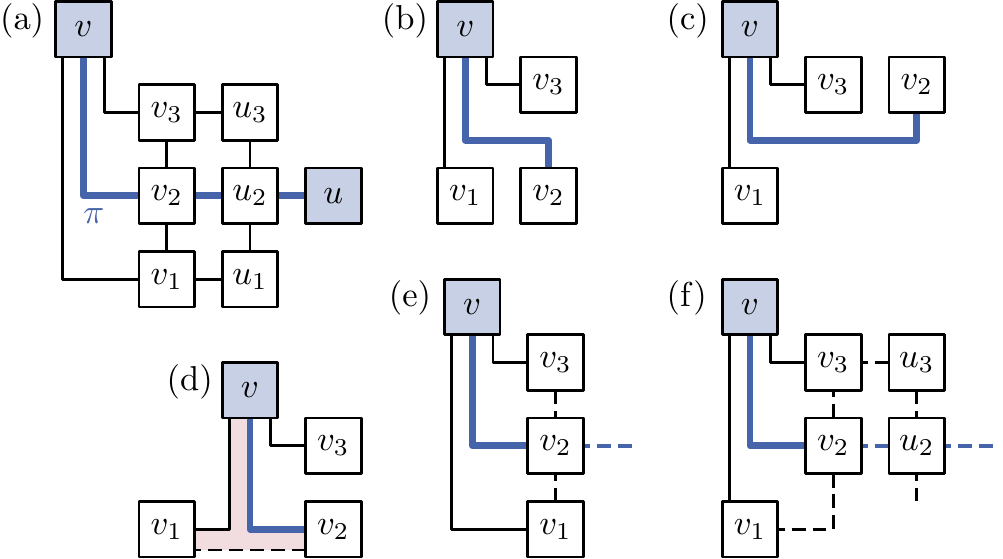}
  \caption{(a)~The one bend gadget $\Gamma$ with three bends without
    blocked corners at $v$ with $|\rot(\pi)| = 1$.  (b--c)~The two
    possible representations of the edges incident to $v$, when $vv_2$
    (blue) has two bends.  Both cannot be extended to $\Gamma$ without
    additional bends.  (d--f)~The possible ways to draw $vv_1$, when
    $vv_2$ (blue) has one bend together with the implied drawing on
    some other edges (dashed).  Either we get no Kandinsky
    representation of $\Gamma$ (d), or the path $\pi$ has absolute
    rotation~$1$ (e--f).}
  \label{fig:one-bend-gadget}
\end{figure}

\begin{lemma}
  \label{lem:one-bend-gadget-3-bends}
  Let $\mathcal K$ be a bend-minimal Kandinsky representation of the
  one-bend gadget $\Gamma$ blocking no corner of its degree-3
  endvertex.  Then $\mathcal K$ has three bends and the rotation of
  the bending path in $\Gamma$ is either~$1$ or~$-1$.
\end{lemma}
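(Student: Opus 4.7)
The plan is to prove the upper bound by exhibiting the three-bend Kandinsky representation displayed in Figure~\ref{fig:one-bend-gadget}a (which has $|\rot(\pi)| = 1$ and no blocked corner at~$v$), and then to prove the lower bound together with the rotation claim by a case analysis on the number of bends of the middle edge $vv_2$, following the sub-figures of Figure~\ref{fig:one-bend-gadget}.

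\emph{Setup.} Let $f_1 = (v, v_1, v_2)$ and $f_2 = (v, v_2, v_3)$ denote the two triangular inner faces of $\Gamma$ incident to $v$, and let $f_o$ be the outer face. Since $v$ has degree~$3$ and no corner of its box is blocked, all four of its corners lie in $f_o$. Using that the number of corners of $v$ contained in a face $f$ equals $2 - \rot_f(v)$, this forces $\rot_{f_1}(v) = \rot_{f_2}(v) = 2$ and $\rot_{f_o}(v) = -2$. Applying the bend-or-end property to the two $0^\circ$-angles at $v$ in $f_1$ and $f_2$ then requires a bend close to $v$ with rotation~$-1$ in $f_1$ on one of $vv_1, vv_2$, and a bend close to $v$ with rotation~$-1$ in $f_2$ on one of $vv_2, vv_3$. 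A single bend close to $v$ on $vv_2$ has opposite rotations in $f_1$ and $f_2$, so it can satisfy at most one of the two constraints; hence at least two bends close to $v$ are forced among $\{vv_1, vv_2, vv_3\}$.

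\emph{Case analysis on bends of $vv_2$.} If $vv_2$ has at least two bends, the two configurations in Figures~\ref{fig:one-bend-gadget}b,c are the only options at $v$; the bend-or-end constraint uncovered by $vv_2$ forces a third bend close to $v$ on $vv_1$ or $vv_3$, and tracking the resulting rotations through the quadrilateral faces $f_3 = (v_1, v_2, u_2, u_1)$ and $f_4 = (v_2, v_3, u_3, u_2)$ shows that at least one further bend on a grid edge is required to reach total rotation $4$ in each, pushing the total to at least four. If $vv_2$ has no bend, then both $vv_1$ and $vv_3$ carry a forced bend close to $v$; the rigid configuration at $v_2$ (Figure~\ref{fig:one-bend-gadget}d) again cannot be extended to a Kandinsky representation of $f_3$ and $f_4$ without an additional bend on the grid, so this case also uses at least four bends. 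Finally, if $vv_2$ has exactly one bend (Figures~\ref{fig:one-bend-gadget}e,f), a single additional bend close to $v$ on $vv_1$ or $vv_3$ suffices to satisfy bend-or-end, and in each of the two surviving sub-configurations the rotations in $f_3, f_4$ close up without further bends, yielding exactly three bends. Direct inspection of the rotations along $\pi = (u, u_2, v_2, v)$ in these two sub-configurations yields $\rot(\pi) \in \{-1, +1\}$.

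\emph{Main obstacle.} The hard part is the rotation-closure argument in the first two cases: one must show that no completion of the locally forced configuration at $v$ yields a valid Kandinsky representation on all of $\Gamma$ with only the already-counted bends. Concretely, after the forced rotations at $v$ and at the bends close to $v$ are fixed, one enumerates the admissible rotations at $v_1, v_2, v_3$ consistent with triangle closure of $f_1, f_2$, and then verifies that the closure of $f_3$ and $f_4$ around the degree-$4$ vertices $v_2$ and $u_2$ is infeasible without further bends on the grid edges. Once this is done, combining the cases gives a bend-minimum of exactly three and $|\rot(\pi)| = 1$, matching the construction in Figure~\ref{fig:one-bend-gadget}a.
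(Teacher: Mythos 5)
Your plan follows the paper's proof almost exactly: exhibit the three-bend representation of Figure~\ref{fig:one-bend-gadget}a for the upper bound, then run a case analysis on the number of bends of $vv_2$, driven by the forced $0^\circ$ angles at $v$ and the bend-or-end property; your setup paragraph is a correct, slightly more explicit version of the paper's two ``simple facts''. However, two of your case counts do not add up, and one of them leaves a real hole. In the case where $vv_2$ has no bend you count two forced bends close to $v$ (on $vv_1$ and $vv_3$) plus ``an additional bend on the grid'' --- that is only three bends, which is within the budget --- yet you conclude ``at least four''. As written this does not exclude the case, and you would then still owe a verification of $|\rot(\pi)|=1$ for it. The clean way out is that each of the triangles $vv_1v_2$ and $vv_2v_3$ needs two bends on its own edges (one close to $v$ by bend-or-end, and a second one because otherwise the face condition forces a rotation-$2$ angle at $v_1$, $v_2$ or $v_3$, which costs another bend), and when $vv_2$ is bend-free these four bends lie on the four distinct edges $vv_1, v_1v_2, vv_3, v_2v_3$, giving four bends outright with no excursion into $f_3,f_4$. (Incidentally, Figure~\ref{fig:one-bend-gadget}d belongs to the one-bend case, not to this one.)

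Similarly, in the one-bend case your account is ``one bend on $vv_2$ plus a single additional bend close to $v$'', i.e.\ two bends, immediately followed by the claim that this ``yields exactly three bends''. The missing third bend is the one forced inside the triangle $vv_1v_2$ (on $vv_1$ or on $v_1v_2$, Figures~\ref{fig:one-bend-gadget}e--f), and determining where it can go is precisely what pins down the two surviving configurations. Finally, the step you defer as the ``main obstacle'' --- rotation closure through $f_3,f_4$ in the first two cases --- is in fact only needed for the subcase where $vv_2$ has exactly two bends (there the $2\times 3$ grid must be drawn bend-free, its bend-free representation is unique, and it cannot be glued to either configuration of Figure~\ref{fig:one-bend-gadget}b--c). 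The genuinely delicate point, which you wave through with ``direct inspection'', sits in the one-bend case: one must argue that there is no $0^\circ$ angle at $v_2$ (using that $vv_2$ has only one bend, already assigned close to $v$, and that the remaining edges around $v_2$ are straight), so that $\pi$ picks up rotation only from the bend on $vv_2$ and $|\rot(\pi)|=1$ follows.
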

\begin{proof}
  Note that the Kandinsky representation of $\Gamma$ in
  Figure~\ref{fig:one-bend-gadget}a does not block a corner of $v$
  and has three bends.  It remains to show that this drawing is
  optimal and that the rotation of the bending path $\pi$ is
  always~$1$ or~$-1$.

  We consider all Kandinsky representations of $\Gamma$ with at most
  three bends blocking no corner of $v$ and show that each of these
  representations has three bends and rotation~$1$ or~$-1$ on $\pi$.
  We start with two simple facts.  First, blocking no corner of $v$
  requires that at least two of the edges $vv_1$, $vv_2$, and $vv_3$
  to have a bend, as they all leave $v$ at the same side.  Second, the
  edges in each of the triangles $vv_1v_2$ and $vv_2v_3$ require at
  least two bends, as they have rotation~$2$ at $v$.  We use these
  facts several times in the following case distinction on the number
  of bends of $vv_2$.

  Assume \textbf{$vv_2$ has three bends}.  As at least two of the
  edges $vv_1$, $vv_2$, and $vv_3$ have bends, we get at least four
  bends in total (but we consider only drawings with at most three
  bends).  Assume that \textbf{$vv_2$ has zero bends}.  Then the two
  bends of the triangle $vv_1v_2$ must be on the edges $vv_1$ and
  $v_1v_2$ and the bends of the triangle $vv_2v_3$ must be on the
  edges $vv_3$ and $v_2v_3$.  Thus, there are at least four bends,
  which again contradicts the restriction to at most three bends.

  If \textbf{$vv_2$ has two bends}, one of the two edges $vv_1$ or
  $vv_3$ has one bend, the other has no bend (as we have more than
  three bends otherwise).  Assume that $vv_3$ has one bends, the other
  case is symmetric.  As $vv_1$ has no bend, the direction of the bend
  of $vv_3$ and of the first bend of $vv_2$ is fixed.  The remaining
  choice is the second bend of $vv_2$; see
  Figure~\ref{fig:one-bend-gadget}b and~c for an illustration of
  the two possible Kandinsky representations.  Since we already used
  three bends, the $2\times3$ grid consisting of the nodes $v_1 \dots
  v_3$ and $u_1 \dots u_3$ must be drawn without any bends.  However,
  the Kandinsky representation without bends of the $2\times3$ grid is
  unique (see Figure~\ref{fig:one-bend-gadget}a) and can obviously
  not be merged with one of the Kandinsky representations of the three
  edges incident to $v$ shown in Figure~\ref{fig:one-bend-gadget}b
  and~c.

  Assume that \textbf{$vv_2$ has one bend}.  Assume without loss of
  generality that the bend on $vv_2$ is a left bend, when traversing
  it from $v$ to $v_2$ (i.e., $vv_2$ has rotation~$1$ in the triangle
  $vv_2v_3$).  This implies that the edge $vv_3$ has at least one bend
  as in Figure~\ref{fig:one-bend-gadget}d--f.  If $vv_1$ has a bend
  but in the other direction then all remaining edges have to be
  straight, which is not possible for $v_1v_2$ without creating an
  empty triangle (Figure~\ref{fig:one-bend-gadget}d).  Thus,
  $v_1v_2$ has either a bend in the same direction as $vv_1$ or no
  bend.  Consider the former case first; see
  Figure~\ref{fig:one-bend-gadget}e.  We split the bending path
  $\pi$ into two parts, the edge $vv_2$ and the path from $v_2$ to
  $u$.  Clearly, the absolute rotation of $vv_2$ is~$1$.  As we
  already used three bends, the Kandinsky representation of $\Gamma -
  v$ is unique.  Thus, the path from $v_2$ to $u$ must have
  rotation~$0$.  To show $|\rot(\pi)| = 1$, it remains to show that
  the rotation of $\pi$ at $v_2$ is~$0$, which is the case if there is
  no~$0^\circ$ angle at~$v_2$.  This angle would have to be adjacent
  to the edge~$vv_2$ as it is the only one having a bend.  However,
  $vv_2$ has only one bend and, since~$\rot_f(vv_2[v]) = 1$, it
  follows that~$\rot_f(vv_2[v_2]) = 0$, where~$f$ is the face bounded
  by~$vv_2v_3$.  But then there can be no~$0^\circ$ bend at~$v_2$.

  It remains to deal with the case that $vv_1$ has no bend; see
  Figure~\ref{fig:one-bend-gadget}f.  As the triangle $vv_1v_2$
  needs two bends, the edge $v_1v_2$ must be drawn with a bend.  All
  remaining edges must have zero bends, as three bends are already
  used.  As before, this shows that the subpath of $\pi$ from $v_2$ to
  $u$ has rotation~$0$ and for $|\rot(\pi)| = 1$ it remains to show
  that the rotation of $\pi$ at $v_2$ is~$0$.  To this end, consider
  the triangle $vv_2v_3$ and the quadrangle $v_2u_2u_3v_3$.  In the
  quadrangle, all edges are straight lines, which ensures that the
  rotation at $v_2$ is~$1$.  In the triangle, the rotation at $v_2$
  must also be~$1$ (otherwise the rotation at $v_3$ would need to
  be~$2$, but there is no edge that can assign its bend to this
  $0^\circ$ angle).  Thus, the rotation of $v_2$ in the path $\pi$
  is~$0$, which shows $|\rot(\pi)| = 1$.

  It follows that the path $\pi$ has rotation $1$ or $-1$ in every
  Kandinsky representation of $\Gamma$ that has only three bends and
  blocks no corner of $v$.  Moreover, we showed that all such
  Kandinsky representations require three bends.
\end{proof}

\subsubsection*{Putting Things Together}

Let $G = (V, E = E_0\cupdot E_1)$ (together with a combinatorial
embedding) be an instance of \textsc{Orthogonal 01-Embeddability}.  We
assume that the angles at vertices that are incident to a~$1$-edge are
fixed.  We construct an embedded graph $G'$ that then serves as
instance of \textsc{Kandinsky Bend Minimization}.  To construct $G'$,
we start with~$G$.  Let $v$ be a vertex incident to the face $f$.  If
the angle of~$v$ in~$f$ is fixed to~$\alpha$, we attach
$\alpha/90^\circ$ nested corner blockers of depth $4$ to $v$ embedded
next to each other into the face $f$; see
Figure~\ref{fig:minimization-hardness}a.  Otherwise, if the angle is
not fixed, we attach a single corner blocker of depth~$4$ at~$v$
in~$f$.  By suitably increasing the depth of some corner blockers we
ensure that each vertex is incident to exactly 16 corner blockers; see
Figure~\ref{fig:minimization-hardness}b.  This is not strictly
necessary but simplifies some of our computations.  Finally, we
replace every edge $uv \in E_1$ (i.e., every edge that requires one
bend) by a copy of the one-bend gadget $\Gamma$, identifying $u$ and
$v$ with the endvertices of $\Gamma$.  Note that, by assumption,
both~$u$ and~$v$ have four corner blockers of depth~4.  To obtain the
following theorem, we show that the resulting graph $G'$ admits a
Kandinsky representation with at most $32|V| + 3|E_1|$ bends if and
only if $G$ admits an orthogonal 01-embedding (note that deciding
whether a planar embedded graph admits a Kandinsky representation with
at most $k$ bends is clearly in NP).

\begin{figure}
  \centering
  \includegraphics[page=1]{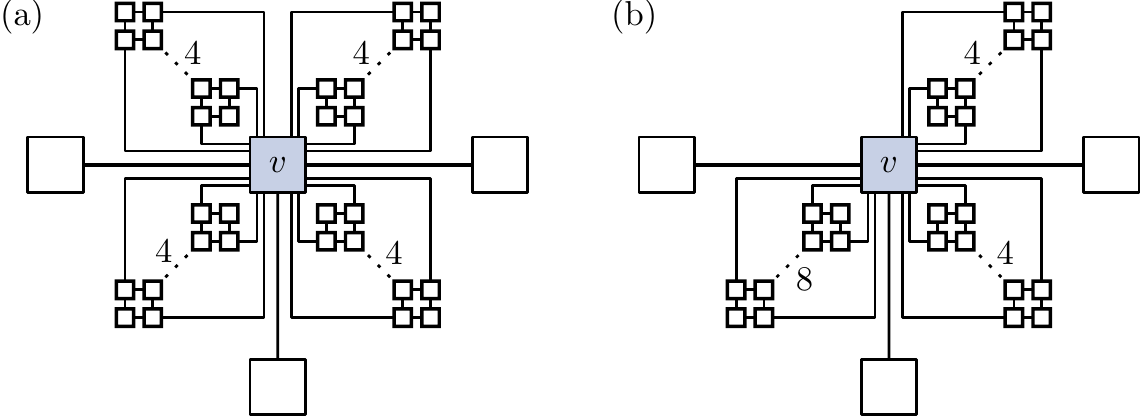}
  \caption{(a--b)~A degree-3 vertex with and without fixed angles and
    attached corner blockers, respectively.}
  \label{fig:minimization-hardness}
\end{figure}

\begin{theorem}
  \textsc{Kandinsky Bend Minimization} is NP-complete.
\end{theorem}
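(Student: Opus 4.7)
Membership in NP is immediate since a Kandinsky representation of polynomial size can be verified efficiently against a bend budget.  For NP-hardness the plan is to show that $G$ admits an orthogonal $01$-embedding if and only if the constructed graph $G'$ admits a Kandinsky representation with at most $32|V| + 3|E_1|$ bends.

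For the ``only if'' direction, given an orthogonal $01$-embedding $\mathcal O$ of $G$, I would build a Kandinsky representation of $G'$ by copying the angles at vertices and the bend direction of every $1$-edge from~$\mathcal O$.  Each of the $16$ atomic corner-blocker layers incident to every vertex is drawn in the optimal configuration that blocks its intended corner, contributing exactly two bends by Lemma~\ref{lem:corner-blocker-lower-bound}; each one-bend gadget is drawn with its bending path having rotation matching the bend direction of the corresponding $1$-edge, contributing exactly three bends by Lemma~\ref{lem:one-bend-gadget-3-bends}; the $0$-edges of $G$ remain straight.  Summing these contributions yields $32|V| + 3|E_1|$ bends.

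For the ``if'' direction, let $\mathcal K$ be a Kandinsky representation of $G'$ of cost at most $32|V| + 3|E_1|$.  The edge sets of the $16$ blocker layers at each vertex, of the one-bend gadgets, and of the $0$-edges of $G$ are pairwise disjoint, so summing the lower bounds of Lemma~\ref{lem:corner-blocker-lower-bound} (two bends per atomic blocker) and Lemma~\ref{lem:one-bend-gadget-3-bends} (three bends per gadget) already accounts for $32|V| + 3|E_1|$ bends.  Equality must therefore hold in every component: every atomic blocker has exactly two bends, every one-bend gadget has exactly three bends, and all $0$-edges of $G$ are drawn without any bend.  Lemma~\ref{lem:corner-blockers-no-corner-three-bends} then forces each atomic blocker to block a corner of its attachment vertex, so all four corners of every vertex of $G$ are covered.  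Since each such corner is already claimed by a corner blocker, no one-bend gadget can block a corner of its endvertex, so Lemma~\ref{lem:one-bend-gadget-3-bends} applies and yields rotation $\pm 1$ along the bending path of each gadget.  Collapsing every bending path to a single $1$-edge bent in the direction indicated by this sign then produces the sought orthogonal $01$-embedding of $G$.

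The main obstacle is the ``if'' direction, specifically the argument that the structural consequences of tightness yield a genuine orthogonal embedding rather than merely a Kandinsky one.  This hinges on verifying that once all four corners of every vertex of $G$ are blocked, the angles between consecutive edges of $G$ at each vertex are forced to be positive multiples of $90^\circ$ consistent with the fixed planar embedding, so that no $0^\circ$ angle survives and the collapsed drawing genuinely respects the $0$-bend and $1$-bend constraints on the edges of $G$.
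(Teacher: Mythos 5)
Your forward direction and your final collapsing step match the paper, but there is a genuine gap in the ``if'' direction: your summation of lower bounds is circular. Lemma~\ref{lem:one-bend-gadget-3-bends} gives the three-bend lower bound only under the hypothesis that the one-bend gadget blocks \emph{no} corner of its degree-3 endvertex; a gadget that does block one or two corners of $v$ can avoid the $0^\circ$ angles among $vv_1,vv_2,vv_3$ and may get away with fewer than three bends. You invoke the three-bend bound unconditionally to conclude that equality holds everywhere, then use that equality to deduce that all corners are blocked and hence that the hypothesis of the lemma is satisfied. Since the lower bound you summed is not valid a priori, a representation in which some gadget blocks a corner and saves a bend could in principle stay within the budget $32|V|+3|E_1|$ while, say, putting a bend on an $E_0$-edge, and your argument cannot exclude this.

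The paper closes this gap with a charging argument that you are missing. If a one-bend gadget blocks a corner of its endvertex, then --- because that endvertex carries exactly four nested corner blockers of depth $4$ and has only four corners --- at least one adjacent nested blocker of depth $d\ge 4$ blocks no corner and therefore has at least $3d$ bends instead of $2d$ (Lemma~\ref{lem:corner-blockers-no-corner-three-bends} scaled to depth $d$). The surplus $3d-(2d+1)=d-1\ge 3$ is charged to that gadget, so \emph{every} gadget is charged at least $3$ and every nested blocker at least $2d$ regardless of configuration; only then does the budget force the three tightness facts (every blocker blocks a corner, every gadget has exactly three bends and blocks no corner, every $E_0$-edge is straight). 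This is also why the blockers are nested to depth at least $4$ rather than being single corner blockers --- a detail your proposal does not use and which is essential precisely here. By contrast, the issue you flag as the ``main obstacle'' (that the collapsed drawing is genuinely orthogonal) is immediate once the first fact is in hand, since blocked corners force the edges of $G$ onto distinct sides of each vertex.
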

\begin{proof}
  Let $G$ be an instance of \textsc{Orthogonal 01-Embeddability} (with
  fixed angles at vertices incident to~$1$-edges) and let $G'$ be the
  corresponding instance of \textsc{Kandinsky Bend Minimization}.
  Assume we have an orthogonal 01-representation $\mathcal O$ of $G$.
  We show how to construct an Kandinsky representation $\mathcal K$ of
  $G'$ with $32|V| + 3|E_1|$ bends.  We interpret $\mathcal O$ as a
  Kandinsky representation.  We first add the nested corner blockers
  to the representation.  Let $v$ be a vertex with incident face $f$.
  By construction, $v$ has at least as many corners in $f$ as there
  are nested corner blockers incident to $v$ embedded in the face $f$.
  Thus, these corner blockers can be added with~2 bends for each
  corner blocker (see the drawing in
  Figure~\ref{fig:corner-blocker}b).  This yields $32|V|$ bends in
  total.

  Moreover, the drawings of the 1-edges can be replaced by drawings of
  one-bend gadgets with three bends (the drawing in
  Figure~\ref{fig:one-bend-gadget}a or the symmetric drawing where
  the bending path $\pi$ is bent to the other direction).  This yields
  $3|E_1|$ bends for all 1-edges.  Hence, we get a Kandinsky
  representation $\mathcal K$ of $G'$ with $32|V| + 3|E_1|$ bends in
  total.

  For the opposite direction, we show that a Kandinsky representation
  $\mathcal K$ of $G'$ with at most $32|V| + 3|E_1|$ bends implies the
  existence of a orthogonal 01-embedding $\mathcal O$ of $G$.  We show
  that the following three facts hold for $\mathcal K$.
  \begin{enumerate}
  \item Every nested corner blocker blocks a corner.
  \item Every one-bend gadget has three bends and blocks no corner of
    its degree-3 endvertex in $\mathcal K$.
  \item All remaining edges (the edges in $E_0$) have~0 bends.
  \end{enumerate}
  We use a charging argument assigning the costs for bends either to
  corner blockers, to one-bend gadgets or to the edges in $E_0$, such
  that the total cost is at most the total number of bends.  By
  Lemma~\ref{lem:corner-blocker-lower-bound}, every corner blocker of
  depth~$d$ requires at least $2d$ bends.  Moreover, if such a nested
  corner blocker does not block a corner, it has at least $3d$ bends
  (Lemma~\ref{lem:corner-blockers-no-corner-three-bends}).  For corner
  blockers that block a corner, we charge cost~$2d$ (which is equal to
  the number of bends).  For corner blockers blocking no corner, we
  charge cost~$2d+1$ (which is~$d-1 \ge 3$ less than the number of
  bends; note that all corner blockers have depth at least~$4$).  A
  one-bend gadget with more than three bends is charged cost~$4$.  A
  one-bend gadget that does not block a corner of its degree-3
  endvertex has at least three bends by
  Lemma~\ref{lem:one-bend-gadget-3-bends} and we charge cost~$3$ for
  it.  If a one-bend gadget blocks a corner of its degree-3 endvertex,
  then at least one of the adjacent nested corner blockers does not
  block a corner.  As we charged cost~$2d+1$ for this corner blocker
  although it has at least $3d$ bends, we can again charge
  cost~$3d-(2d+1) = d-1 \ge 3$ for the one-bend gadget.  For the
  remaining edges in $E_0$ we simply charge cost equal to the number
  of bends.

  Hence, every nested corner blocker of depth~$d$ is charged at least
  cost~$2d$ and every one-bend gadget is charged at least cost~$3$.
  Recall that there are $16|V|$ corner blockers and $|E_1|$ one bend
  gadgets.  To get a total cost of at most $32|V| + 3|E_1|$, every
  corner blocker of depth~$d$ must be charged exactly cost~$2d$, which
  implies that it blocks a corner and thus shows the first fact.
  Since the endvertices of one-bend gadgets are incident to four
  corner blockers, each of which indeed blocks a corner, this also
  implies that no one-bend gadget can block a corner of its degree-3
  endvertex.  Thus, by Lemma~\ref{lem:one-bend-gadget-3-bends}, every
  one-bend gadget has at least three bends.  Moreover, every one-bend
  gadget has no more than three bends as it would otherwise be charged
  cost~$4$, which shows the second fact.  The third fact follows as
  the cost charged to edges in $E_0$ must be~$0$.

  By the second fact and Lemma~\ref{lem:one-bend-gadget-3-bends} the
  bending path of every one-bend gadget has absolute rotation~$1$ in
  $\mathcal K$.  Thus, we can replace each one-bend gadget by an edge
  with exactly one bend. Removing the corner blockers yields a
  representation of $G$ in which no two edges leave a common incident
  vertex on the same side, as every nested corner blocker blocks a
  corner (first fact).  Moreover, the edges in $E_0$ have zero bends.
  Hence, the resulting representation of $G$ is an orthogonal
  representation (and not only a Kandinsky representation) and the
  edges in $E_1$ and $E_0$ have one and zero bends, respectively.
\end{proof}

\begin{theorem}
  \textsc{Kandinsky Bend Minimization} is NP-complete, even if we
  allow empty faces or require every edge to have at most one bend (or
  both).
\end{theorem}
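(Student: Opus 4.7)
The plan is to adapt the reduction used to prove the preceding theorem---which replaces every $1$-edge of an \textsc{Orthogonal 01-Embeddability} instance by a one-bend gadget and surrounds every vertex by four nested corner blockers of depth~$4$---so that it still establishes NP-hardness under each of the three listed variants. The overall cost threshold stays of the form $c_1|V|+c_2|E_1|$; only the gadgets, and the constants, may have to be adjusted.

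For the empty-faces variant, I would first locate the places where the proofs of Lemmas~\ref{lem:corner-blocker-lower-bound}, \ref{lem:corner-blockers-no-corner-three-bends}, and \ref{lem:one-bend-gadget-3-bends} implicitly rely on the prohibition of empty faces---namely the bend-or-end arguments applied to small inner $3$-faces of the corner blocker and to the triangles $vv_1v_2$ and $vv_2v_3$ of the one-bend gadget. To neutralise this dependency, I would subdivide each edge bounding such a $3$-face so that every offending triangle becomes a cycle of length at least~$4$, which cannot be realised as an empty face. Each new subdivision vertex has degree~$2$, so any bend that was originally close to a vertex of a triangle can be pushed onto the subdivision vertex without changing the rotation of any path or the total bend count; the case analyses of Lemmas~\ref{lem:corner-blocker-lower-bound} and~\ref{lem:one-bend-gadget-3-bends} then go through after inflating the constants by a fixed factor.

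For the at-most-one-bend variant, I would first check that the drawings used in the forward direction of the reduction already put each of the $2d$ bends of a depth-$d$ corner blocker and each of the three bends of the one-bend gadget on a distinct edge; any edge that happens to carry two bends in the existing drawing can be subdivided once so that the two bends live on adjacent pieces. The backward direction is free of charge: adding the per-edge constraint only shrinks the set of admissible Kandinsky representations, so the charging argument of the previous proof applies verbatim and still extracts an orthogonal $01$-embedding of the original input graph from any Kandinsky representation achieving the threshold.

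For the combined variant I would perform both modifications in parallel; they do not interfere, since the subdivisions introduced in the empty-face step can themselves be further subdivided to respect the per-edge bound. The main technical obstacle is the first step: verifying that after enlarging the inner triangles of the one-bend gadget, Lemma~\ref{lem:one-bend-gadget-3-bends} still forces the bending path to have absolute rotation~$1$. I expect this to work because a degree-$2$ subdivision vertex contributes a fixed rotation to any path through it, independently of the surrounding drawing, but it requires a careful re-examination of the original case distinction on the number of bends at $vv_2$. Together with membership in NP, which is immediate, this gives NP-completeness for all three variants.
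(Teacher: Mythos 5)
Your handling of the at-most-one-bend variant is sound and coincides with the paper's argument: the drawings used in the forward direction of the reduction already place every bend on a distinct edge, and adding the per-edge constraint only shrinks the set of representations that the backward (charging) direction has to exclude, so nothing needs to change and the subdivision fallback you mention is never actually needed.

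The empty-faces variant is where your proposal has a genuine gap. First, the diagnosis is off: the proofs of Lemma~\ref{lem:corner-blocker-lower-bound} and Lemma~\ref{lem:corner-blockers-no-corner-three-bends} never invoke the prohibition of empty faces (all inner faces of the corner blocker are quadrilaterals); the only place it is used is in Lemma~\ref{lem:one-bend-gadget-3-bends}, to rule out the configuration of Figure~\ref{fig:one-bend-gadget}d. Second, and more seriously, your proposed fix --- subdividing the edges of the triangles $vv_1v_2$ and $vv_2v_3$ --- undermines the very lower bounds the reduction depends on. A degree-2 subdivision vertex does \emph{not} ``contribute a fixed rotation to any path through it'': its rotation in a face can be any value in $[-2,2]$ consistent with the face constraints, so it acts as a free turning point. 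Pushing a bend onto such a vertex removes it from the bend count (a $90^\circ$ angle at a vertex is not a bend on an edge), so the assertions that each triangle requires at least two bends and that the gadget requires at least three bends no longer follow, and the charging argument against the threshold $32|V| + 3|E_1|$ collapses. Worse, enlarging the triangle to a quadrilateral tends to make the configuration of Figure~\ref{fig:one-bend-gadget}d \emph{realizable} --- the new degree-2 vertex can supply the missing $90^\circ$ turn on $v_1v_2$ at no cost --- rather than excluding it, which is the opposite of what the gadget needs. The paper's route is much lighter: keep the construction completely unchanged and observe that even when empty faces are permitted, the bend-or-end property still forces additional bends in the configuration of Figure~\ref{fig:one-bend-gadget}d, so that configuration remains excluded and every lemma survives verbatim.
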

\begin{proof}
  That the problem remains NP-hard when we require each edge to have
  at most one bend is obvious, as all Kandinsky representations
  involved in the construction above have at most one bend per edge.
  In fact, this requirement would even make some arguments simpler.
  The only place where we argued with empty faces is in the proof of
  Lemma~\ref{lem:one-bend-gadget-3-bends} to exclude the situation
  shown in Figure~\ref{fig:one-bend-gadget}d.  It is not hard to see
  that this situation can also be excluded when allowing empty faces,
  as even in this case, it is not possible to complete the embedding
  without additional bends.
\end{proof}

\section{A Subexponential Algorithm}
\label{sec:subexp-algor}

In this section, we give an algorithm for computing optimal
Kandinsky representations of planar graphs with fixed planar embedding
in subexponential running time.  To this end, we use dynamic
programming on sphere cut decompositions, which are special types of
branch decompositions~\cite{dpbf-eeapg-10}.

The basic idea is as follows.  Consider two graphs $G_1$ and $G_2$
with disjoint edge sets that share a set of \emph{attachment
  vertices}.  We assume that the union $G$ of $G_1$ and $G_2$ is
planar and has a fixed planar embedding.  We say that $G_1$ and $G_2$
are \emph{glueable} if both graphs are connected and there is a simple
closed curve in the embedding of $G$ that separates $G_1$ from $G_2$
(note that this curve must contain the attachment vertices); see
Figure~\ref{fig:equivalent-representations-glueable}.  We also say
that $G_1$ ($G_2$) is a \emph{glueable subgraph} of $G$.

\begin{figure}
  \centering
  \includegraphics[page=1]{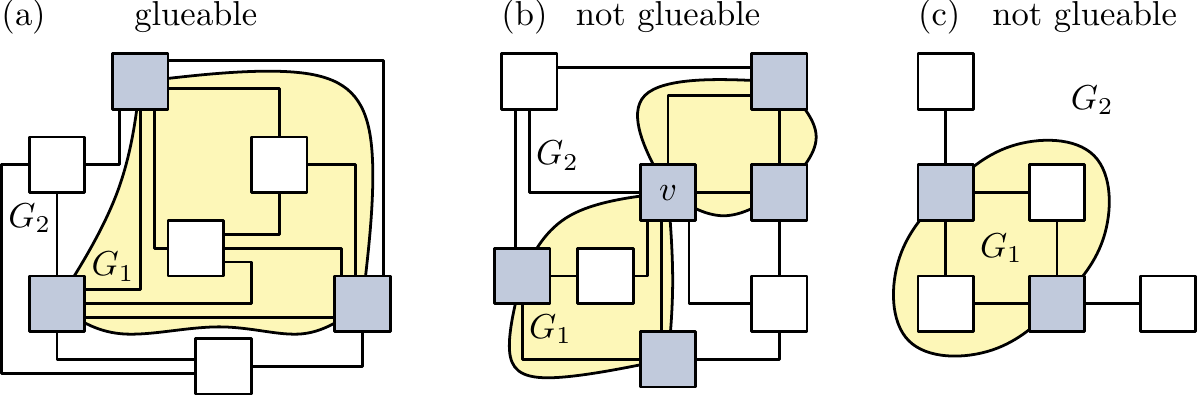}
  \caption{(a)~The decomposition of a graph into two glueable
    subgraphs $G_1$ and $G_2$.  The attachment vertices are shaded
    blue.  (b)~This decomposition is not glueable, as a closed curve
    separating $G_1$ from $G_2$ cannot be simple ($v$ must be visited
    twice).  (c)~The decomposition is not glueable since $G_2$ is
    disconnected.}
  \label{fig:equivalent-representations-glueable}
\end{figure}

Now assume we know two Kandinsky representations $\mathcal K_1$ and
$\mathcal K_2$ of $G_1$ and $G_2$.  Depending on $\mathcal K_1$ and
$\mathcal K_2$ one might be able to merge them into a Kandinsky
representation of the whole graph $G$.  We can generate every
Kandinsky representation of $G$ in this way, by merging every
representation of $G_1$ with every representation of $G_2$.  Clearly,
considering all pairs of representations of $G_1$ and $G_2$ is not
efficient.  Thus, we group Kandinsky representations of $G_1$ that
behave the same with respect to merging them with representations of
$G_2$ into equivalence classes.  If we know an optimal Kandinsky
representation for each equivalence class of $G_1$ and $G_2$, it is
sufficient to merge those optimal representatives of equivalence
classes to obtain an optimal representation of $G$.  If $G$ is
hierarchically decomposed, one can start with optimal Kandinsky
representations of the edges and merge them step by step to obtain
$G$.

In the following we first characterize which Kandinsky representations
of a glueable subgraph are equivalent in the sense that they can be
merged with the same Kandinsky representation of the remaining graph
(Section~\ref{sec:interf-kand-repr}).  Afterwards, we estimate in how
many different ways the Kandinsky representations of subgraphs can be
merged into one (Section~\ref{sec:merg-two-kand}).  Finally, we
conclude with the algorithm and some interesting special cases
(Section~\ref{sec:algorithm}).

\subsection{Interfaces of Kandinsky Representations}
\label{sec:interf-kand-repr}

Let $\mathcal K$ be a Kandinsky representation of $G$ and let
$\mathcal K_1$ be the representation induced on $G_1$.  Let $\mathcal
K_1'$ be another Kandinsky representation of $G_1$.  By
\emph{replacing $\mathcal K_1$ with $\mathcal K_1'$ in $\mathcal K$}
we mean the following.  Every rotation value in $\mathcal K$ involving
only edges belonging to $G_1$ are set to the value specified in
$\mathcal K_1'$ while all other values remain as they are.  In other
words the following rotations in $\mathcal K$ are changed to their
value in $\mathcal K_1'$: $\rot(e)$ if the edge $e$ belongs to $G_1$;
$\rot(uv[u])$ and $\rot(uv[v])$ (the rotation of $uv$ at the vertices
$u$ and $v$) if $uv$ belongs to $G_1$; and $\rot(e_1, e_2)$ (for two
edges $e_1$ and $e_2$ incident to a common vertex) if both edges $e_1$
and $e_2$ belong to $G_1$.  Note that the resulting set of rotation
values is not necessarily a Kandinsky representation, as some
properties of Kandinsky representations might be violated.

We say that the two Kandinsky representations $\mathcal K_1$ and
$\mathcal K_1'$ of $G_1$ have \emph{the same interface} if replacing
$\mathcal K_1$ with $\mathcal K_1'$ (and vice versa) in any Kandinsky
representation of $G$ yields a Kandinsky representation of $G$.  We
will see later (Lemma~\ref{lem:kandinksy-replacement}) that it does
not depend on the remaining graph $G\setminus G_1$, whether two
representations of $G_1$ have the same interface.  The two Kandinsky
representations in Figure~\ref{fig:equivalent-representations}a have
the same interface.  Clearly, having the same interface is an
equivalence relation.  We call the equivalence classes of this
relation the \emph{interface classes}.

\begin{figure}
  \centering
  \includegraphics[page=2]{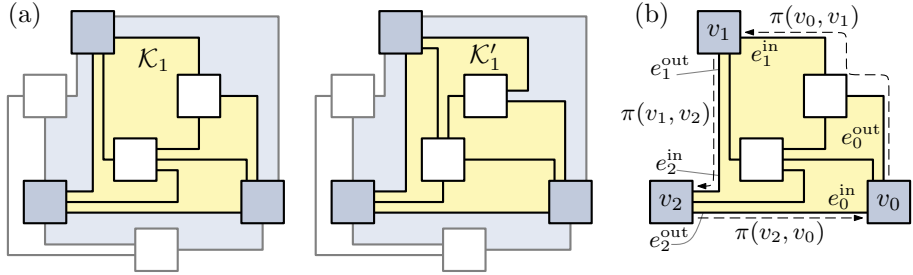}
  \caption{(a)~The graph $G$ with glueable subgraph $G_1$ (shaded
    yellow).  The attachment vertices and the faces shared by $G_1$
    and $G_2$ are blue.  The two Kandinsky representations $\mathcal
    K_1$ (left) and $\mathcal K_1'$ (right) of $G_1$ are
    interchangeable: In any representation of $G$ inducing $\mathcal
    K_1$ on $G_1$, one can replace $\mathcal K_1$ by $\mathcal K_1'$
    and vice versa.  (b)~Illustration of the notation used to define
    the interface paths, the attachment rotations and the $0^\circ$
    flags.}
  \label{fig:equivalent-representations}
\end{figure}

Now consider again two glueable subgraphs $G_1$ and $G_2$ of a plane
graph $G$.  Since $G_1$ and $G_2$ are glueable, we know that $G_2$
lies in a single face $f$ of $G_1$.  Let $C_f$ be the facial cycle of
$f$ and assume for now, that $C_f$ is simple (i.e., $G_1$ contains no
cutvertex incident to $f$).  Let $v_0, \dots, v_\ell$ be the
attachment vertices appearing in that order in $C_f$ (clockwise for
inner, counter-clockwise for outer faces).  This decomposes $C_f$ into
the paths $\pi_f(v_0, v_1), \pi_f(v_1, v_2), \dots, \pi_f(v_\ell,
v_0)$, which we call \emph{interface paths}.  As the face we consider
is unique, we often omit the subscript and simply write $\pi(v_i,
v_{i+1})$.  Moreover, the values for $i$, $i-1$ and $i+1$ are always
meant modulo $\ell+1$.  For an attachment vertex $v_i$, denote the
last edge of the path $\pi(v_{i-1},v_i)$ by $e_i^\inedge$ and the
first edge of the path $\pi(v_i, v_{i+1})$ by $e_i^\outedge$; see
Figure~\ref{fig:equivalent-representations}b.

Let $\mathcal K_1$ and $\mathcal K_1'$ be two Kandinsky
representations of $G_1$.  We say that $\mathcal K_1$ and $\mathcal
K_1'$ have \emph{compatible interface paths} if $\pi(v_i, v_{i+1})$
has the same rotation in $\mathcal K_1$ and $\mathcal K_1'$ (i.e.,
$\rot_{\mathcal K_1}(\pi(v_i, v_{i+1})) = \rot_{\mathcal
  K_1'}(\pi(v_i, v_{i+1}))$) for every $i = 1, \dots, k$.  Moreover,
$\mathcal K_1$ and $\mathcal K_1'$ have \emph{the same attachment
  rotations} if for every attachment vertex $v_i$, the rotation
$\rot(e_i^\inedge, e_i^\outedge)$ is the same in $\mathcal K_1$ and
$\mathcal K_1'$.  In Figure~\ref{fig:equivalent-representations}b,
the interface paths $\pi(v_0, v_1)$, $\pi(v_1, v_2)$, and $\pi(v_2,
v_0)$ have rotations~$-1$,~$1$, and~$0$, respectively, and the
attachment rotations at the vertices $v_0$, $v_1$, and $v_2$
are~$-1$,~$-1$, and~$-2$, respectively.

When considering orthogonal representations (with maximum degree~4)
and not Kandinsky representations, having compatible interface paths
and the same attachment rotations is sufficient for two
representations to have the same interface.  In case of Kandinsky
representations, we have to care about $0^\circ$ angles at the
attachment vertices.  Thus, for an attachment vertex $v_i$, the
rotations at the end $v_i$ of the edges $e_i^\inedge$ and
$e_i^\outedge$ ($\rot(e_i^\inedge[v_i])$ and
$\rot(e_i^\outedge[v_i])$), which can take the values $-1$, $0$, or
$1$ are of importance.  The actual value of these rotations is not
important, we only care about whether they are~$-1$ or something else.
We call these information the \emph{$0^\circ$ flags}, which has the
value \texttt{true} for a rotation of $-1$ and \texttt{false}
otherwise.  We say that $\mathcal K_1$ and $\mathcal K_1'$ have the
\emph{same $0^\circ$ flags} if all their $0^\circ$ flags have the same
values.  Possible values for the $0^\circ$ flags in
Figure~\ref{fig:equivalent-representations}b are \texttt{true} for
$e_0^\outedge[v_0]$ and for $e_1^\inedge[v_1]$ and \texttt{false} for
all other flags.

In case the facial cycle $C_f$ is not simple, it might contain an
attachment vertex $v_i$ several times.  However, since $G_1$ and $G_2$
are glueable, the simple closed curve separating $G_1$ from $G_2$
gives an order of the attachment vertices.  We simply take this order
to define the interface paths.  All remaining definitions work as
before.  

\begin{lemma}
  \label{lem:kandinksy-replacement}
  Two Kandinsky representations have the same interface if and only if
  they have compatible interface paths, the same attachment rotations,
  and the same $0^\circ$ flags.
\end{lemma}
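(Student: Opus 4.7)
The plan is to prove both directions by dissecting how each of the five defining properties of Kandinsky representations behaves under the replacement operation, decomposing each global constraint into a $G_1$-part, a $G_2$-part, and a mixed part at the attachment vertices.

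For the ``if'' direction, let $\mathcal{K}$ be any Kandinsky representation of $G$ that induces $\mathcal{K}_1$ on $G_1$, and let $\mathcal{K}'$ be the representation obtained by replacing $\mathcal{K}_1$ with $\mathcal{K}_1'$. I verify each of the five properties for $\mathcal{K}'$. Properties~(\ref{item:consistent-edges}) and~(\ref{item:rotation-range-at-vertex}) are edge- or vertex-local and follow immediately from the validity of $\mathcal{K}_1'$. For property~(\ref{item:rotation-of-face}), classify the faces of $G$ into faces lying entirely inside $G_1$, faces lying entirely in $G_2$, and the interface faces $f_i'$, each bounded by one interface path $\pi(v_i,v_{i+1})$ and a corresponding $G_2$-path. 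Only the last type combines rotation values from both sides, and its rotation sum decomposes as $\rot_{f_i'}(\pi(v_i,v_{i+1})) + (\text{corner rotations at } v_i,v_{i+1}) + (G_2\text{-path contribution})$; replacement changes only the first summand, which is preserved by \emph{compatible interface paths}. For property~(\ref{item:rotation-around-vertex}), the only nontrivial vertices are the attachment vertices $v_i$: going around $v_i$, only the rotations in faces of $G$ whose two delimiting edges both lie in $G_1$ can change, and by \emph{same attachment rotations} their total remains $2\deg_{G_1}(v_i)-4-\rot_{\mathcal{K}_1,f}(v_i)$, so the overall sum $2\deg_G(v_i)-4$ is preserved. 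Property~(\ref{item:0-deg-bend-assignment}) is the only one that genuinely involves mixed $G_1/G_2$-incidences: at an attachment vertex $v_i$, a $0^\circ$ angle between a $G_1$-edge $e_i^\outedge$ (or $e_i^\inedge$) and the adjacent $G_2$-edge must still be resolved after replacement; the $G_2$-edge's close bend is untouched, and \emph{same $0^\circ$ flags} guarantees that the $G_1$-edge's role as bend-or-end witness is preserved.

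For the ``only if'' direction, I apply the same decomposition to each equation that must simultaneously hold in $\mathcal{K}$ and in $\mathcal{K}'$. The face-sum equation around an interface face $f_i'$ forces $\rot_{f}(\pi(v_i,v_{i+1}))$ to agree between $\mathcal{K}_1$ and $\mathcal{K}_1'$, yielding compatible interface paths. The vertex-sum equation at $v_i$ pins the sum of $G_1$-side rotations at $v_i$ to a value determined by $\rot_{\mathcal{K}_1,f}(v_i)$, and combined with property~(\ref{item:rotation-around-vertex}) in $\mathcal{K}_1'$ this yields equal attachment rotations. To extract same $0^\circ$ flags, I argue that if the flags differed, one can exhibit a Kandinsky representation $\mathcal{K}$ of $G$ in which the bend-or-end constraint at the relevant corner is witnessed only by the $G_1$-edge's close bend, so that replacement destroys the witness.

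The main technical obstacle is this last point: establishing the necessity of preserving a $0^\circ$ flag requires exhibiting a Kandinsky representation of $G$ that actually exercises the flag, i.e., one where the bend-or-end constraint at the corner is carried exclusively by the $G_1$-edge. This may need a small case analysis on the local shape of $G_2$ at the attachment vertex, locally perturbing close bends on $G_2$-edges to expose the necessity of $\mathcal{K}_1$'s flag. A secondary subtlety is the case where the facial cycle $C_f$ of $f$ is not simple, in which an attachment vertex can appear multiple times along $C_f$ and must be treated separately for each occurrence, with the cyclic order coming from the simple separating curve rather than from $C_f$ itself.
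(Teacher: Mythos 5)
Your proposal follows essentially the same route as the paper's proof: both directions are handled by checking the five defining properties of Kandinsky representations against the replacement operation, with Properties~(1), (3), and (5) corresponding respectively to compatible interface paths, attachment rotations, and $0^\circ$ flags, and with the necessity of the flags established by exhibiting a representation of $G$ in which the $0^\circ$ angle at the attachment vertex is witnessed only by the $G_1$-edge's close bend. The paper's argument is the same in substance, including the construction you flag as the main obstacle, so your plan is correct and matches the published proof.
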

\begin{proof}
  We first show the only-if part.  Let $G$ be a plane graph with
  Kandinsky representation $\mathcal K$ with restriction $\mathcal
  K_1$ to the glueable subgraph $G_1$.  Let $\mathcal K_1'$ be another
  Kandinsky representation of $G_1$.  Assume there is an interface
  path $\pi$ that has a different rotation in $\mathcal K_1$ than in
  $\mathcal K_1'$.  Let $f$ be the face incident to $\pi$ shared by
  $G_1$ and the remaining graph $G_2$ (one of the blue faces in
  Figure~\ref{fig:equivalent-representations}a).  By replacing
  $\mathcal K_1$ with $\mathcal K_1'$ the rotation of $\pi$ in $f$
  changes, but all other rotations in $f$ stay the same.  Thus, the
  total rotation around $f$ cannot be~$4$ ($-4$ if $f$ is the outer
  face), which shows that the resulting set of rotations is not a
  Kandinsky representation of $G$ (contradiction to
  Property~(\ref{item:rotation-of-face})).  Hence, $\mathcal K_1$ and
  $\mathcal K_1'$ do not have the same interface.  A similar argument
  shows that having the same attachment rotations is necessary, since
  otherwise the total rotation around a vertex would change by
  replacing $\mathcal K_1$ with $\mathcal K_1'$, which contradicts
  Property~(\ref{item:rotation-around-vertex}).

  Finally, assume $\mathcal K_1$ and $ \mathcal K_1'$ have different
  $0^\circ$ flags.  Thus, there exists an attachment vertex $v$ with
  incident edge $e_1$ (belonging to an interface path) such that
  $\rot(e_1[v])$ is (without loss of generality) $-1$ in $\mathcal
  K_1$ (value \texttt{true}) and $0$ or $1$ in $\mathcal K_1'$ (value
  \texttt{false}).  As $v$ is an attachment vertex, the remaining
  graph $G_2$ contains an edge incident to $v$.  Let $e_2$ be the edge
  of $G_2$ incident to $v$ that shares a face $f$ with $e_1$.  Then
  one might choose the Kandinsky representation $\mathcal K$ of $G$
  such that the rotation $\rot_f(e_1, e_2)$ at $v$ in $f$ is~$2$
  (angle of $0^\circ$) while $\rot_f(e_2[v])$ is $0$ or $1$.  Then
  $\rot(e_1[v])$ must be $-1$ by
  Property~(\ref{item:0-deg-bend-assignment}), which is true for
  $\mathcal K_1$ but not for $\mathcal K_1'$.  Hence, replacing
  $\mathcal K_1$ with $\mathcal K_1'$ does not yield a Kandinsky
  representation of $G$, which shows that having the same $0^\circ$
  flags is also necessary for having the same interface.

  For the other direction, let $G_1$ and $G_2$ be glueable graphs with
  union $G$ and let $\mathcal K$ be a Kandinsky representation of $G$
  with restrictions $\mathcal K_1$ and $\mathcal K_2$ to $G_1$ and
  $G_2$, respectively.  Let $\mathcal K_1'$ be a Kandinsky
  representation of $G_1$ with compatible interface paths, the same
  attachment rotations, and the same $0^\circ$ flags.  We show that
  replacing $\mathcal K_1$ with $\mathcal K_1'$ in $\mathcal K$ yields
  a Kandinsky representation of $G$ by showing that the resulting
  rotation values satisfy properties
  (\ref{item:rotation-of-face})--(\ref{item:0-deg-bend-assignment})
  from Section~\ref{sec:kand-repr}.

  Property~(\ref{item:consistent-edges}) is trivially satisfied, as
  all rotations concerning a single edge come either from $\mathcal
  K_1'$ or from $\mathcal K_2$, which are both Kandinsky
  representations and thus satisfy this property.
  Property~(\ref{item:rotation-range-at-vertex}) is also satisfied, as
  the rotation at a vertex either stays as it is in $\mathcal K$ or it
  is changed to its value in $\mathcal K_1'$ and thus lies in the
  interval $[-2, 2]$.

  For Property~(\ref{item:rotation-of-face}), consider a face $f$ of
  $G$.  If all edges in the boundary of $f$ belong to only one of the
  graphs $G_1$ and $G_2$, then the total rotation in $f$ is equal to
  its total rotation in $\mathcal K_1'$ or $\mathcal K_2$,
  respectively.  As $\mathcal K_1'$ and $\mathcal K_2$ are Kandinsky
  representations, they satisfy
  Property~(\ref{item:rotation-of-face}).  If the boundary of $f$
  contains edges from both graphs $G_1$ and $G_2$ (one of the blue
  faces in Figure~\ref{fig:equivalent-representations}a), it is
  composed of two interface paths $\pi_1$ and $\pi_2$ belonging to
  $G_1$ and $G_2$, respectively, that share their endvertices $u$ and
  $v$.  By replacing $\mathcal K_1$ with $\mathcal K_1'$, the
  representation of $\pi_2$ does not change.  Moreover, the rotations
  at $u$ and $v$ in $f$ remain unchanged.  The representation of
  $\pi_1$ might of course change, however, the rotation remains the
  same as $\mathcal K_1$ and $\mathcal K_1'$ have compatible interface
  paths.

  A similar argument shows that
  Property~(\ref{item:0-deg-bend-assignment}) is satisfied.  Let $v$
  be a vertex with rotation~$2$ (corresponding to an angle of
  $0^\circ$) in a face $f$, i.e., $\rot_f(uv,vw) = 2$.  We only need
  to consider the case where (without loss of generality) $uv$ belongs
  to $G_1$ and $vw$ belongs to $G_2$, as all other cases are trivial.
  Then $\rot_f(uv[v]) = -1$ or $\rot_f(vw[v]) = -1$ holds in $\mathcal
  K$.  In the latter case, $\rot_f(vw[v])$ does not change by
  replacing $\mathcal K_1$ with $\mathcal K_1'$ as $vw$ belongs to
  $G_2$.  In the former case, $\rot_f(uv[v]) = -1$ implies that the
  corresponding $0^\circ$ flag in $\mathcal K_1$ is \texttt{true}.  As
  $\mathcal K_1$ and $\mathcal K_1'$ have the same $0^\circ$ flags,
  this flag is also \texttt{true} in $\mathcal K_1'$, which implies
  that $\rot_f(uv[v]) = -1$ is still true in $\mathcal K_1'$ and thus
  in $\mathcal K'$.

  Finally, to show Property~(\ref{item:rotation-around-vertex}),
  consider a vertex $v$.  If $v$ is not an attachment vertex, all
  rotations at $v$ come either from $\mathcal K_1'$ or from $\mathcal
  K_2$ and thus satisfy Property~(\ref{item:rotation-around-vertex}).
  Let $v$ be an attachment vertex and let $f_1$ be the face of $G_1$
  that completely contains $G_2$.  The only rotations at $v$ that
  might change by replacing $\mathcal K_1$ with $\mathcal K_1'$ are
  the rotations in faces not shared with $G_2$.  These are exactly the
  faces of $G_1$ incident to $v$ except for $f_1$.  As $\mathcal K_1$
  and $\mathcal K_1'$ have the same rotations at attachment vertices,
  the rotation $\rot_{f_1}(v)$ is the same in $\mathcal K_1$ and
  $\mathcal K_1'$.  Thus, by
  Property~(\ref{item:rotation-around-vertex}) the sum of all other
  rotations around $v$ in $G_1$ must also be the same in both
  representations $\mathcal K_1$ and $\mathcal K_1'$.  Hence, the
  total sum of rotations at $v$ does not change by replacing $\mathcal
  K_1$ with $\mathcal K_1'$, which concludes the proof.
\end{proof}

It follows that each interface class is uniquely described by the
rotations of the interface paths, by the rotations at the attachment
vertices, and by the values of the $0^\circ$ flags.  We simply call
this set of information the \emph{interface} of $G_1$ ($G_2$) in $G$.
Note that this redefines what it means for two Kandinsky
representations to have the same interface.  However, the definitions
are consistent due to Lemma~\ref{lem:kandinksy-replacement} and we
will use them interchangeably.

\subsection{Merging two Kandinsky Representations}
\label{sec:merg-two-kand}

So far, we considered the case that there is a Kandinsky
representation $\mathcal K$ of $G$ that can be altered by replacing
the Kandinsky representation of the subgraph $G_1$.  Now we change the
point of view and assume we have Kandinsky representations $\mathcal
K_1$ and $\mathcal K_2$ of $G_1$ and $G_2$, respectively, that we want
to combine to get a Kandinsky representation of $G$.  We say that
$\mathcal K_1$ and $\mathcal K_2$ can be \emph{merged} if there exists
a Kandinsky representation $\mathcal K$ of $G$ whose restriction to
$G_1$ and $G_2$ is $\mathcal K_1$ and $\mathcal K_2$, respectively.
Note that the only rotations in $\mathcal K$ that occur neither in
$\mathcal K_1$ nor in $\mathcal K_2$ are rotations at attachment
vertices between an edge of $G_1$ and an edge of $G_2$.  We call these
rotations the \emph{shared rotations}; see
Figure~\ref{fig:merging-step}a.  Thus, merging $\mathcal K_1$ and
$\mathcal K_2$ is the process of choosing values for the shared
rotation, such that the resulting set of rotations is a Kandinsky
representation of $G$.

In the following, we consider the case where $G$ itself is a glueable
subgraph of a larger graph $H$.  We call this the \emph{merging step}
$G = G_1 \sqcup G_2$.  Note that $G_1$ and $G_2$ are not only glueable
subgraphs of $G$ but also of $H$.  Note further that the interface of
$G_1$ ($G_2$) in $G$ can be deduced from the interface of $G_1$
($G_2$) in~$H$.  When dealing with a merging step, we always consider
the interfaces of $G_1$ and $G_2$ in $H$ (which contain more
information than their interfaces in $G$).  The \emph{width} of a
merging step is the maximum number of attachment vertices of $G_1$,
$G_2$, and $G$ in $H$; see Figure~\ref{fig:merging-step}b for an
example.

\begin{figure}
  \centering
  \includegraphics[page=1]{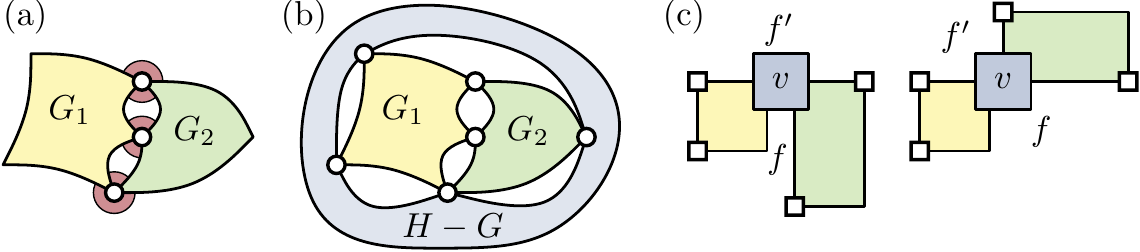}
  \caption{(a)~Merging $G_1$ and $G_2$.  The shared rotations are
    marked red.  (b)~Illustration of a merging step.  The width of
    this merging step is~$5$ ($G_1$ has $5$ attachment vertices).
    (c)~Two ways to choose the shared rotations.}
  \label{fig:merging-step}
\end{figure}

If the Kandinsky representations $\mathcal K_1$ and $\mathcal K_2$ can
be merged, then every Kandinsky representation $\mathcal K_1'$ with
the same interface as $\mathcal K_1$ can be merged in the same way
(i.e., with the same shared rotations) with $\mathcal K_2$ as one can
first merge $\mathcal K_1$ with $\mathcal K_2$ and then replace
$\mathcal K_1$ by $\mathcal K_1'$.  Moreover, the resulting Kandinsky
representations $\mathcal K$ and $\mathcal K'$ of $G$ have the same
interface for the following reason.  In every Kandinsky representation
of $H$ the representation $\mathcal K$ can be replaced by $\mathcal
K'$ as this is equivalent to replacing $\mathcal K_1$ by $\mathcal
K_1'$ (which can be done as $\mathcal K_1$ and $\mathcal K_1'$ have
the same interface).  Thus, the only choices that matter when merging
two Kandinsky representations are to choose shared rotations and
interfaces for $G_1$ and $G_2$.  Thus, the term of merging Kandinsky
representations extends to \emph{merging interfaces}.  We call a
choice of shared rotations and interfaces for $G_1$ and $G_2$
\emph{compatible}, if these interfaces can be merged using the chosen
rotations.

The following lemma bounds the number of compatible combinations.  It
is parametrized with the width $k$ of the merging step and the
\emph{maximum rotation}~$\rho$.  The maximum rotation of a graph $H$
is $\rho$ if $H$ admits an optimal Kandinsky representation such that
the absolute rotations of the interface paths in every glueable
subgraph of $H$ are at most $\rho$.  With the maximum rotation of a
merging step, we mean the maximum rotation of the whole graph~$H$.  We
give bounds for $\rho$ in Lemma~\ref{lem:rotation-interface-paths}.

\begin{lemma}
  \label{lem:merging-step-compatible-choices}
  In a merging step $G = G_1 \sqcup G_2$ of width $k$ with maximum
  rotation $\rho$, there are at most $(2\rho + 1)^{\left\lfloor
      1.5k\right\rfloor - 1}\cdot 330^k$ compatible choices for the
  shared rotations and the interfaces of $G_1$ and~$G_2$.
\end{lemma}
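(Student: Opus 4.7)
The plan is to describe a compatible choice by (i) the rotation of each interface path of $G_1$ and $G_2$ in $H$ and (ii) the local data at each attachment vertex---its attachment rotation, its two $0^\circ$ flags and, at shared vertices, the two shared rotations. Part~(i) will produce the $(2\rho+1)$-factors and part~(ii) the constant $330^k$.

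First I would partition the attachment vertices of $G_1$ and $G_2$ in $H$ into four classes according to membership in $V_G$ and in $V_1\cap V_2$, with cardinalities $a,b_1,b_2,c$. Then $k_1 = a+b_1+c$, $k_2 = a+b_2+c$, $k_G = a+b_1+b_2$, each at most $k$. A priori there are $k_1+k_2$ interface paths, each with $2\rho+1$ possible rotations. The simple closed glueing curve of the merging step passes exactly through the $a+c$ shared attachment vertices and is thereby cut into $a+c$ arcs; each arc bounds a face of $G$ whose boundary is one $G_1$-path, one $G_2$-path and the two attachment rotations where they meet. The rotation-sums-to-$\pm 4$ equation at each of these $a+c$ shared faces is a nontrivial linear relation on the path rotations (plus constant-bounded local data), and these equations are linearly independent up to a small number of Euler-type dependencies, leaving at most $k_1+k_2-(a+c)+O(1) = k_G+c+O(1)$ free path rotations.

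Next I would run a short discrete optimization of $k_G+c$ subject to $k_1,k_2,k_G\le k$. The symmetric extremum $a=0$, $b_1=b_2=c=k/2$ shows that $k_G+c\le \lfloor 1.5 k\rfloor$, and the residual $-1$ in the exponent comes from the global dependency among the shared-face equations just mentioned. For the constant factor I would bound the local data per attachment vertex: each non-shared attachment contributes at most $20$ (five attachment-rotation values times four configurations of the pair of $0^\circ$ flags); each shared attachment contributes a priori $5\cdot 4\cdot 5\cdot 4\cdot 25$ (the two attachment rotations, the two $0^\circ$ flag pairs and the two shared rotations from the two transitions between $G_1$- and $G_2$-edges around $v$), but the rotation-around-vertex equation in $G$ and the bend-or-end property couple these values tightly. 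Combining contributions by a case analysis over $a,b_1,b_2,c\le k$ then yields an overall constant bound of $330^k$.

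The hardest part will be the rank-plus-optimization step: verifying that the $a+c$ shared-face equations really have the claimed rank and that the optimum of $k_G+c$ combined with this rank defect produces exactly the exponent $\lfloor 1.5k\rfloor-1$ rather than $\lfloor 1.5k\rfloor$. A secondary difficulty is the shared-vertex case analysis, where the two shared rotations and the two attachment rotations are coupled by both the rotation-around-vertex sum and the bend-or-end property; tightening the raw $10000$-per-vertex count down to something consistent with the $330^k$ bound requires these local constraints to line up cleanly with the case distinction used in the global count.
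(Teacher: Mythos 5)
Your plan mirrors the paper's proof: the same partition of the attachment vertices (the paper writes $k_1,k_2,k_{12}$ for your $b_1$, $b_2$ and $a+c$), the same summation of the three inequalities $k_1+k_{12}\le k$, $k_2+k_{12}\le k$, $k_1+k_2\le k$ to reach $\lfloor 1.5k\rfloor$, the same use of face-rotation-sum equations to lower the exponent by one, and the same local case analysis at attachment vertices for the constant. However, as stated your constant-factor step has a genuine gap. You keep $5\cdot 4=20$ configurations per \emph{exclusive} attachment vertex and only invoke the tightening at shared vertices. That cannot yield $330^k$: the final product has the shape $C_{\mathrm{excl}}^{\,k_1+k_2}\cdot C_{\mathrm{shared}}^{\,k_{12}}$ with $C_{\mathrm{shared}}=330$, and in the regime $k_1=k_2=k_{12}=k/2$ one needs $C_{\mathrm{excl}}^2\le 330$; with $C_{\mathrm{excl}}=20$ you get $20^{k}\cdot 330^{k/2}\approx 363^k>330^k$. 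The paper therefore also reduces the exclusive count from $20$ to $18$ using Property~(\ref{item:0-deg-bend-assignment}): attachment rotation $2$ forces at least one $0^\circ$ flag to be \texttt{true}, and attachment rotation $-2$ forces at least one to be \texttt{false}, killing exactly two of the twenty configurations; then $18^2=324\le 330$ gives $18^{k_1+k_2}\cdot 330^{k_{12}}\le\sqrt{330}^{\,(k_1+k_{12})+(k_2+k_{12})}\le 330^k$. The number $330$ itself is not derivable by a clean hand argument: the paper enumerates, for each of the $18\times 18$ admissible local states at a shared vertex, the $\max(0,\,-\rho_1-\rho_2+1-x_f-x_{f'})$ admissible pairs of shared rotations and sums these by computer.

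On the exponent, your rank heuristic lands on the right number but the bookkeeping is off: only $k_{12}-1$ of the $a+c$ arcs bound genuine faces of $H$ (one region contains the rest of $H$ and contributes no equation), while two further equations come from the face of $G_1$ containing $G_2$ and the face of $G_2$ containing $G_1$, whose rotation sums are $\pm 4$ already in the representations of $G_1$ and $G_2$ separately. The paper sidesteps rank and independence questions with a constructive count: one interface-path rotation of each of $G_1$ and $G_2$ is determined by its own big face, and once $G_1$'s paths and the shared rotations are fixed, the $k_{12}-1$ between-faces determine $k_{12}-1$ paths of $G_2$, leaving exactly $k_1+k_2+k_{12}-1\le\lfloor 1.5k\rfloor-1$ free path rotations. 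I would recommend adopting that constructive accounting rather than arguing about linear independence of the face equations.
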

\begin{proof}
  Let $k_{12}$ be the number of attachment vertices shared by $G_1$
  and $G_2$ and let $k_1$ and $k_2$ be the number of exclusive
  attachment vertices of $G_1$ and $G_2$, respectively.  In the
  example in Figure~\ref{fig:merging-step}b, $k = 5$, $k_{12} = 3$,
  $k_1 = 2$, and $k_2 = 1$.  As $G_1$ and $G_2$ both have at most $k$
  attachment vertices, we have $k_1 + k_{12} \le k$ and $k_2 + k_{12}
  \le k$.  Moreover, every exclusive attachment vertex is an
  attachment vertex of $G$, thus $k_1 + k_2 \le k$ holds.  By summing
  these three inequalities we directly get $k_1 + k_2 + k_{12} \le
  \left\lfloor 1.5k\right\rfloor$.  We start with a rough estimation
  of the possible combinations and then show how to reduce the number
  by ruling out choices that are not compatible and thus will never
  lead to a Kandinsky representation.

  The graph $G_1$ has $k_1 + k_{12} \le k$ attachment vertices and
  thus also $k_1 + k_{12} \le k$ interface paths.  The absolute
  rotation of each interface path is at most $\rho$, thus there are at
  most $2\rho + 1$ possible values for those rotations.  This leads to
  at most $(2\rho + 1)^k$ combinations.  For every attachment vertex
  there is the attachment rotation that can be any of the five
  integers in $[-2, 2]$.  Moreover, there are two binary $0^\circ$
  flags for each attachment vertex which gives $5\cdot 2 \cdot2 = 20$
  possible configurations for each attachment vertex.  Thus, there are
  up to $20^k$ combinations for the $k_1 + k_{12} \le k$ attachment
  vertices in $G_1$.  We get the same bounds for $G_2$.  Hence, there
  are at most $(2\rho + 1)^{2k}\cdot 400^k$ combinations for choosing
  an interface for $G_1$ and $G_2$.  For every shared attachment
  vertex, there are two shared rotations we need to set, which gives
  $25$ combinations as these rotations can take values in $[-2, 2]$.
  For the $k_{12}$ shared rotations, this gives $25^{k_{12}} \le 25^k$
  combinations, which makes $(2\rho + 1)^{2k}\cdot 10000^k$
  combinations in total.

  We start with the exponent in the factor $(2\rho + 1)^{2k}$.  The
  exponent $2k$ came from the fact that we chose rotations of $k_1 +
  k_{12} + k_2 + k_{12}$ interface paths.  Assume we have fixed the
  interface of $G_1$ except for the rotation of a single interface
  path.  As the total rotation around the face bounded by the
  interface paths is $4$ ($-4$ for the outer face) in every Kandinsky
  representation, there is no choice left for the rotation of this
  path.  Thus, we only have to choose the rotation of $k_1 + k_{12} -
  1$ interface paths in $G_1$.  The same holds for $G_2$, which gives
  $k_1 + k_{12} + k_2 + k_{12} - 2$ interface paths in total.  As
  there are $k_{12}$ shared attachment vertices, the graph $G$ has
  $k_{12} - 1$ faces that are bounded by one interface path of $G_1$
  and one interface path of $G_2$.  Assume the rotation of the
  interface paths of $G_1$ is fixed and the shared rotations are
  fixed.  Then the rotations of these $k_{12} - 1$ interface paths of
  $G_2$ are also fixed as the rotation around these faces must sum to
  $4$ ($-4$).  Thus, there are $k_{12} - 1$ additional interface paths
  whose rotation is automatically fixed.  Hence, we get the exponent
  down to $k_1 + k_2 + k_{12} - 1$ which is at most $\left\lfloor
    1.5k\right\rfloor - 1$.

  To reduce the basis of the $10000^k$ factor, first note that some
  configurations of choosing attachment rotations and $0^\circ$ flags
  are not possible.  Let $f$ be the face of $G_1$ containing all
  attachment vertices and let $v$ be an attachment vertex.  Let
  $e^\inedge$ and $e^\outedge$ be the two edges incident to $v$ and
  $f$, i.e., $\rot_f(e^\inedge, e^\outedge)$ is the attachment
  rotation at $v$.  Assume $\rot_f(e^\inedge, e^\outedge) = 2$, i.e.,
  there is an angle of $0^\circ$ at $v$.  Due to
  Property~(\ref{item:0-deg-bend-assignment}), $e^\inedge$ or
  $e^\outedge$ must have a rotation of $-1$ at the vertex $v$ in $f$
  ($\rot_f(e^\inedge[v]) = -1$ or $\rot_f(e^\outedge[v]) = -1$).  Thus
  if the attachment rotation at $v$ is $2$, at least one of the two
  $0^\circ$ flags at $v$ must be \texttt{true}.  A similar argument
  shows that an attachment rotation of $-2$ at $v$ implies that at
  least one of the $0^\circ$ flags at $v$ is \texttt{false}.  Thus,
  there are only $18$ (instead of $20$) possibilities for choosing the
  attachment rotation and the $0^\circ$ flags at an attachment vertex.
  Thus, for the exclusive attachment vertices in $G_1$ and $G_2$ we
  get $18^{k_1 + k_2}$ combinations.  Moreover, we have $18^{2k_{12}}$
  combinations for the shared attachment vertices in $G_1$ and $G_2$
  and $25^{k_{12}}$ combinations for the shared rotations.

  We show that not all these $18^{2k_{12}}\cdot 25^{k_{12}}$ need to
  be considered.  Let $v$ be a shared attachment vertex and let
  $\rot_1$ and $\rot_2$ be the attachment rotations for $v$ in $G_1$
  and $G_2$, respectively.  Let further $f$ and $f'$ be the two faces
  incident to $v$ shared by $G_1$ and $G_2$ and let $\rot_f$ and
  $\rot_{f'}$ be the corresponding shared rotations at $v$ in $f$ and
  $f'$.  Finally, let $x_f$ ($x_{f'}$) be a variable with the
  value~$1$ if the $0^\circ$ flags do not allow a $0^\circ$ angle in
  $f$ ($f'$) and the value~$0$ if they allow a~$0^\circ$ angle, which
  is the case if and only if at least one of the corresponding flags
  is \texttt{true}.  It is not hard to see, that fixing the attachment
  rotations $\rot_1$ and $\rot_2$ and the $0^\circ$ flags leaves
  $-\rot_1 - \rot_2 + 1 - x_f - x_{f'}$ possible combinations (or~$0$
  if this value is negative) to set the shared rotations when the
  result must obey the properties of a Kandinsky representation.  In
  Figure~\ref{fig:merging-step}c, $\rot_1 = -1$ and $\rot_2 = -1$
  holds.  The $0^\circ$ flags allow for a $0^\circ$ angle in $f$ but
  not in $f'$, thus $x_f = 0$ and $x_{f'} = 1$.  This leaves only two
  ways to fix the shared rotations, namely $\rot_f = 2$, $\rot_{f'} =
  0$ and $\rot_f = 1$, $\rot_{f'} = 1$.  Counting those combinations
  for each of the $18$ ways to fix the interface rotations and
  $0^\circ$ flags of $v$ in $G_1$ and $G_2$ (which can be done with a
  simple computer program) results in~$330$ combinations.  Thus, the
  $18^{2k_{12}}\cdot 25^{k_{12}}$ combinations for the shared
  attachment vertices reduce to $330^{k_{12}}$.  Hence, there are at
  most $18^{k_1 + k_2} \cdot 330^{k_{12}}$ combinations for choosing
  attachment rotations, $0^\circ$ flags, and shared rotations.  Note
  that $18^2 = 324 \le 330$ and thus we get the following.
  \[18^{k_1 + k_2}\cdot 330^{k_{12}} \le \sqrt{330}^{k_1 + k_2} \cdot
  \sqrt{330}^{2k_{12}} = \sqrt{330}^{k_1 + k_{12} + k_2 + k_{12}} \le
  \sqrt{330}^{2k} = 330^k \]

  To conclude, we get at most $330^k$ possibilities to choose all
  attachment rotations, all $0^\circ$ flags, and all shared rotations.
  Once those are chosen, at most $(2\rho + 1)^{\left\lfloor
      1.5k\right\rfloor - 1}$ ways to choose rotations of the
  interface paths remain.  Note that it is easy to list these
  combinations efficiently (without considering unnecessary
  combinations).
\end{proof}

Let $G$ be a glueable subgraph of $H$.  The \emph{cost} of an
interface class is the minimum cost of the Kandinsky representations
it contains (recall that an interface class is a set of Kandinsky
representations that have the same interface).  The \emph{cost table}
of $G$ is a table containing the cost of each interface class of~$G$.

\begin{lemma}
  \label{lem:merging-step-compute-cost-table}
  Let $G = G_1 \sqcup G_2$ be a merging step of width $k$ with maximum
  rotation $\rho$.  Given the cost tables of $G_1$ and $G_2$, the cost
  table of $G$ can be computed on $O(k\cdot (2\rho + 1)^{\left\lfloor
      1.5k\right\rfloor - 1}\cdot 330^k)$ time.
\end{lemma}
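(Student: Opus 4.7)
The plan is to compute the cost table of $G$ by enumerating exactly the compatible choices of shared rotations and interfaces for $G_1$ and $G_2$ identified in Lemma~\ref{lem:merging-step-compatible-choices}, and for each such choice, looking up the optimal costs stored in the cost tables of $G_1$ and $G_2$ and updating the corresponding entry of the cost table of $G$.

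More concretely, I would iterate over compatible combinations in the following order, mirroring the counting argument in the proof of Lemma~\ref{lem:merging-step-compatible-choices}. First, enumerate the attachment rotations, the $0^\circ$ flags, and the shared rotations at all attachment vertices, skipping combinations that are already ruled out locally (e.g., by Property~(\ref{item:0-deg-bend-assignment}) or by incompatible shared rotations at a shared attachment vertex). This yields at most $330^k$ possibilities and takes $O(k)$ time per possibility to assemble. Second, for each such choice, enumerate the rotations of the interface paths of $G_1$ and $G_2$; here at most $\lfloor 1.5k\rfloor -1$ paths need to be chosen freely from $\{-\rho,\dots,\rho\}$, because one interface path in $G_1$, one in $G_2$, and the $k_{12}-1$ interface paths of $G_2$ bounding shared faces are forced by the face-rotation Property~(\ref{item:rotation-of-face}). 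Deriving these forced rotations and checking feasibility also takes $O(k)$ time. Thus, the enumeration runs in $O(k\cdot (2\rho+1)^{\lfloor 1.5k\rfloor-1}\cdot 330^k)$ time overall.

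For each compatible combination, I would (i)~read the cost $c_1$ of the interface class of $G_1$ and the cost $c_2$ of the interface class of $G_2$ from their respective cost tables, (ii)~compute the interface of the resulting graph $G$ in~$H$ by combining the interface data of $G_1$ and $G_2$ (the attachment rotations at exclusive attachment vertices and the $0^\circ$ flags in the outer face of $G$ are inherited; at a shared vertex, the new attachment rotation and $0^\circ$ flags are determined by the chosen shared rotations; the interface paths of $G$ are obtained by concatenating interface paths of $G_1$ and $G_2$ along shared faces, and their rotations are the sums of rotations of their subpaths together with the attachment rotations at internal shared vertices, with one unit subtracted at each such vertex because it contributes differently to the inner face of $G$), and (iii)~update the entry for this interface of $G$ with the minimum of its current value and $c_1+c_2$. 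All three operations take $O(k)$ time since the interface data has size $O(k)$.

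The correctness follows from Lemma~\ref{lem:kandinksy-replacement}: every Kandinsky representation of $G$ decomposes into representations of $G_1$ and $G_2$ whose interfaces, together with the induced shared rotations, form a compatible combination in our enumeration; conversely, every compatible combination yields a valid Kandinsky representation of $G$ by merging optimal representatives from the two interface classes. Taking the minimum over all combinations therefore produces exactly the cost of each interface class of $G$. The main obstacle is to make sure that the enumeration visits only compatible combinations (otherwise the factor $330^k$ becomes $10000^k$); this is handled by iterating in the structured order described above and rejecting partial choices as early as the local constraints from Lemma~\ref{lem:merging-step-compatible-choices} are violated.
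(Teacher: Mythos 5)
Your proposal is correct and follows essentially the same route as the paper: initialize all entries to $\infty$, enumerate exactly the compatible combinations from Lemma~\ref{lem:merging-step-compatible-choices}, look up $c_1$ and $c_2$, compute the resulting interface class of $G$ in $O(k)$ time, and take the minimum of $c_1+c_2$ over all combinations, with the same two-directional correctness argument (merged representatives witness the upper bound, and restrictions of any representation of $G$ witness that the true cost is eventually reached). The only quibble is your passing formula for the rotation of a concatenated interface path of $G$ — at an internal shared attachment vertex one should add the shared rotation in the face facing $H\setminus G$, not the attachment rotation minus one — but this is part of the $O(k)$ bookkeeping that the paper itself leaves implicit and does not affect the argument.
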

\begin{proof}
  Start with a cost table for $G$ with cost $\infty$ for every
  interface class.  We iterate over all $(2\rho + 1)^{\left\lfloor
      1.5k\right\rfloor - 1}\cdot 330^k$ compatible choices for the
  shared rotations and the interfaces of $G_1$ and $G_2$
  (Lemma~\ref{lem:merging-step-compatible-choices}).  Consider a fixed
  choice and let $[\mathcal K_1]$ and $[\mathcal K_2]$ be the chosen
  interface classes of $G_1$ and $G_2$ with cost $c_1$ and $c_2$.  In
  $O(k)$ time we can compute the interface class $[\mathcal K]$ we get
  for $G$.  If $c_1 + c_2$ is less than the current cost $[\mathcal
  K]$, we set it to $c_1 + c_2$.

  Note that the cost $c_1$ and $c_2$ imply the existence of Kandinsky
  representations $\mathcal K_1' \in [\mathcal K_1]$ and $\mathcal
  K_1' \in [\mathcal K_2]$ with cost $c_1$ and $c_2$.  These two
  Kandinsky representations can be merged (using the fixed shared
  rotations) to a Kandinsky representation $\mathcal K' \in [\mathcal
  K]$.  This representation clearly has cost $c_1 + c_2$ and thus the
  cost of $[\mathcal K]$ is at most $c_1 + c_2$.

  On the other hand, assume that there exists a Kandinsky
  representation $\mathcal K$ of $G$ with cost $c$.  Let $\mathcal
  K_1$ and $\mathcal K_2$ be the restrictions of $\mathcal K$ to $G_1$
  and $G_2$, respectively.  Let further $c_1$ and $c_2$ be the costs
  of $\mathcal K_1$ and $\mathcal K_2$, respectively.  Then $c = c_1 +
  c_2$ holds.  Moreover, the costs of the equivalence classes
  $[\mathcal K_1]$ and $[\mathcal K_2]$ are $c_1' \le c_1$ and $c_2'
  \le c_2$, respectively.  As $[\mathcal K_1]$ and $[\mathcal K_2]$
  can be merged to $[\mathcal K]$, at some point we set the cost of
  $[\mathcal K]$ $c_1' + c_2' \le c_1 + c_2 = c$.  Thus, on one hand,
  the cost of each equivalence class $[\mathcal K]$ of $G$ is never
  set to something below its actual cost, and on the other hand it is
  at some point set to a value that is at most its actual cost.
  Hence, this procedure yields the cost table of $G$.
\end{proof}

\subsection{The Algorithm}
\label{sec:algorithm}

The previous three lemmas together with a dynamic program on a sphere
cut decomposition (which is a special type of branch decomposition)
yield the following theorem.

\begin{theorem}
  \label{thm:kandinsky-general-running-time}
  An optimal Kandinsky representation of a plane graph $G$ can be
  computed in $O(n^3 + n\cdot k\cdot (2\rho + 1)^{\left\lfloor
      1.5k\right\rfloor - 1}\cdot 330^k)$ time, where $k$ is the
  branch width and $\rho$ the maximum rotation of $G$.
\end{theorem}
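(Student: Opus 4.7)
The plan is to apply bottom-up dynamic programming along a sphere cut decomposition of $G$. As a first step, I would compute a sphere cut decomposition of width equal to the branch width $k$; this can be done in $O(n^3)$ time by the algorithm of Dorn, Penninkx, Bodlaender, and Fomin~\cite{dpbf-eeapg-10}. The defining feature of a sphere cut decomposition is that each edge of the decomposition tree corresponds to a noose (a simple closed curve in the sphere of the embedding), which cleanly partitions the edges of $G$ into two sides whose shared vertices all lie on the noose and number at most $k$. This is exactly the situation required by the notion of glueable subgraphs used in Sections~\ref{sec:interf-kand-repr} and~\ref{sec:merg-two-kand}: at every node of the decomposition we obtain a pair of subgraphs $G_1$ and $G_2$ with at most $k$ attachment vertices that may be combined through a merging step of width at most $k$.

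Next, I would process the decomposition from the leaves towards the root, maintaining a cost table for the glueable subgraph associated with each node. At a leaf (a single edge), the cost table can be initialised directly by enumerating the finitely many possible interfaces for an edge together with their bend cost. At each internal node, the situation is precisely a merging step $G = G_1 \sqcup G_2$ as defined in Section~\ref{sec:merg-two-kand}, so Lemma~\ref{lem:merging-step-compute-cost-table} applies and lets us compute the cost table of the combined subgraph in $O(k\cdot (2\rho+1)^{\lfloor 1.5k\rfloor - 1}\cdot 330^k)$ time. Since any branch decomposition has $O(n)$ nodes, the total time spent on merging steps is $O(n\cdot k\cdot (2\rho+1)^{\lfloor 1.5k\rfloor - 1}\cdot 330^k)$. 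At the root, the whole graph $G$ has no attachment vertices, so its cost table contains a single entry, namely the minimum number of bends over all Kandinsky representations of $G$; a concrete optimal representation can then be recovered by standard traceback through the decomposition.

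Adding the $O(n^3)$ cost of constructing the sphere cut decomposition to the $O(n\cdot k\cdot (2\rho+1)^{\lfloor 1.5k\rfloor - 1}\cdot 330^k)$ cost of the dynamic program gives the claimed running time. The main technical subtlety I expect is the mild mismatch between the definition of glueable subgraphs (which requires both sides to be connected) and what a generic noose in a sphere cut decomposition delivers: the two sides may well be disconnected, and the facial cycle used to define the interface paths may be more complicated than a simple cycle. I would handle this either by working in a slightly more general setting where the attachment vertices are ordered along the noose itself (as already prepared for non-simple facial cycles in Section~\ref{sec:interf-kand-repr}), or by preprocessing each cut with virtual "zero-cost" edges along the noose to enforce connectivity without changing the optimum. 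Beyond this bookkeeping, the algorithm and its analysis are immediate from Lemmas~\ref{lem:kandinksy-replacement}, \ref{lem:merging-step-compatible-choices}, and~\ref{lem:merging-step-compute-cost-table}.
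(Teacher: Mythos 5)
Your proposal matches the paper's proof essentially step for step: compute a minimum-width branch decomposition and convert it to a sphere cut decomposition in $O(n^3)$ time, then run a bottom-up dynamic program over the $O(n)$ nodes using Lemma~\ref{lem:merging-step-compute-cost-table} at each merging step, with traceback to recover the representation. The only detail the paper adds that you omit is a preprocessing step attaching a $4$-cycle to each degree-1 vertex (required for the sphere cut decomposition construction of Dorn et al.), and conversely your worry about disconnected sides of a noose is dispatched in the paper simply by the cited property that each node of a sphere cut decomposition induces a glueable subgraph.
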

\begin{proof}
  Let $H$ be the plane graph.  If $H$ contains a degree-1 vertex, we
  can attach a cycle of length~$4$ to it.  Computing an optimal
  Kandinsky representation of the resulting graph and removing this
  cycle from it obviously gives an optimal Kandinsky representation of
  $H$.  Thus, we can assume without loss of generality that $H$ does
  not contain degree-1 vertices.

  In planar graphs, a branch decomposition with minimum width can be
  computed in polynomial~\cite{st-crr-94} and even
  $O(n^3)$~\cite{gt-obdpgot-08} time.  Moreover, Dorn et
  al. \cite[Theorem~1]{dpbf-eeapg-10} show that one can compute a
  sphere cut decomposition of width $k$ from a given branch
  decomposition of width $k$ in $O(n^3)$ time, if $G$ does not contain
  degree-1 vertices.  Without defining sphere cut decomposition
  precisely, it is essentially a rooted binary tree $\mathcal T$
  (every node has two children or is a leaf) with a bijection between
  the edges of $H$ and the leaves of $\mathcal T$ such that the
  following property holds.  For every node $\mu$ of $\mathcal T$, the
  edges of $H$ corresponding to leaves that are ancestors of $\mu$
  induce a glueable subgraph of $H$.  Denote this subgraph by $G_\mu$.

  Clearly, this implies that for an inner node $\mu$ with children
  $\mu_1$ and $\mu_2$, we get a merging step $G_\mu = G_{\mu_1}\sqcup
  G_{\mu_2}$.  We process the inner nodes of $\mathcal T$ bottom up to
  compute the cost table of $G_\mu$ for every node $\mu$.  If a child
  $\mu_i$ (for $i = 1, 2$) of $\mu$ is a leaf, it corresponds to a
  single edge for which the cost tables are trivially known.
  Otherwise, we already processed the child $\mu_i$ and thus know the
  cost table of $G_{\mu_i}$.  Hence, by
  Lemma~\ref{lem:merging-step-compute-cost-table}, we can compute the
  cost table of $G_\mu$ in $O(k\cdot (2\rho + 1)^{\left\lfloor
      1.5k\right\rfloor - 1}\cdot 330^k)$ time.  Doing this for every
  inner node of $\mathcal T$ gives the claimed running time, since
  $\mathcal T$ contains $O(n)$ inner nodes.  Moreover, for the root
  $\tau$, we have $G_\tau = H$.  Thus, after processing the root
  $\tau$, we know the cost of an optimal Kandinsky representation of
  $H$.  To actually compute an optimal Kandinsky representation of $H$
  (and not only its cost) one simply has to track the interface
  classes that lead to the optimal solution through the dynamic
  program.
\end{proof}

We get the following bounds for the maximum rotation $\rho$ of a
graph.

\begin{lemma}
  \label{lem:rotation-interface-paths}
  Let $G$ be a graph with Kandinsky representation $\mathcal K$.  Let
  $\Delta_F$ be the maximum face degree of $G$ and let $\rho$ be the
  maximum absolute rotation of interface paths of glueable subgraphs
  of $G$.  The following holds.
  \begin{compactitem}
  \item $\rho \le m + \Delta_F - 2$, if $\mathcal K$ is an optimal
    Kandinsky representation.
  \item $\rho \le (b+1)\cdot \Delta_F - b - 2$, if $\mathcal K$ is a
    $b$-bend Kandinsky representation.
  \end{compactitem}
\end{lemma}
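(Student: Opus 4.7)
The plan is to bound $|\rot_f(\pi)|$ from its decomposition $\rot_f(\pi) = \sum_{e \in \pi} \rot_f(e) + \sum_{w} \rot_f(w)$ over the $\ell$ edges and $\ell - 1$ inner vertices of $\pi$. The naive bounds $|\rot_f(e)| \le b$ and $|\rot_f(w)| \le 2$ combine to give roughly $(b+2)\ell$, which is off by about a factor of two; the improvement will come from property~(\ref{item:0-deg-bend-assignment}), which forces a compensating bend whenever an inner vertex has rotation $\pm 2$.

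In detail, let $k^+$ and $k^-$ denote the number of inner vertices $w$ of $\pi$ with $\rot_f(w) = +2$ and $-2$, respectively. Property~(\ref{item:0-deg-bend-assignment}) applied in $f$ yields, for each $+2$ vertex, a close bend on an adjacent edge of $\pi$ with $\rot_f(e[w]) = -1$; applied in the opposite face and translated via property~(\ref{item:consistent-edges}), it yields for each $-2$ vertex a close bend on an adjacent edge of $\pi$ with $\rot_f(e[w]) = +1$. These $B_c := k^+ + k^-$ close bends are distinct, since each is located at a specific edge-end adjacent to a unique inner vertex. Their combined contribution to $\sum_e \rot_f(e)$ is exactly $k^- - k^+$, while the combined contribution of the $\pm 2$ vertices to $\sum_w \rot_f(w)$ is $2(k^+ - k^-)$, so the net contribution of these vertex--bend pairs to $\rot_f(\pi)$ is $k^+ - k^-$, of magnitude at most $B_c$. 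The remaining $(\ell - 1) - B_c$ inner vertices satisfy $|\rot_f(w)| \le 1$, contributing magnitude at most $(\ell - 1) - B_c$, and the remaining $B - B_c$ bends on $\pi$ (where $B$ is the total number of bends on $\pi$) contribute magnitude at most $B - B_c$. Summing the three bounds gives the key estimate
\[
|\rot_f(\pi)| \le B_c + \bigl((\ell - 1) - B_c\bigr) + (B - B_c) \le \ell - 1 + B.
\]

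To finish, I use that $\pi$ forms one side of a face of $G$ whose degree is at most $\Delta_F$ and whose opposite side has at least one edge, so $\ell \le \Delta_F - 1$. For a $b$-bend representation, $B \le b\ell$ and therefore $|\rot_f(\pi)| \le (b+1)\ell - 1 \le (b+1)\Delta_F - b - 2$. For an optimal Kandinsky representation, the result of Fößmeier et al.~\cite{fkk-2dpg-97} that every planar graph admits a $1$-embedding implies $B \le m$, whence $|\rot_f(\pi)| \le \ell - 1 + m \le m + \Delta_F - 2$. The main delicate point is the clean partition of $B$ into the $B_c$ mandatory close bends (each at a distinct edge-end of $\pi$) and the $B - B_c$ remaining bends, so that both contributions to $\sum_e \rot_f(e)$ can be bounded without double-counting; after that the arithmetic is routine.
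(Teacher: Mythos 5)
Your proof is correct and follows essentially the same route as the paper: decompose $\rot_f(\pi)$ into edge and vertex contributions, charge each inner vertex with rotation $\pm 2$ to a distinct close bend on an adjacent edge of $\pi$ via Property~(\ref{item:0-deg-bend-assignment}), bound the path length by $\Delta_F-1$, and bound the number of bends on $\pi$ by $m$ (via the existence of a $1$-bend representation) or by $b\ell$. The one point where your argument is thinner than it should be is the treatment of $-2$ vertices, which the paper avoids by proving only the upper bound and invoking mirror symmetry for the minimum: for an inner vertex $w$ of degree at least~$3$ with $\rot_f(w)=-2$, all other angles at $w$ are $0^\circ$, and Property~(\ref{item:0-deg-bend-assignment}) applied to a single ``opposite'' face only guarantees a close bend on one of the two edges bounding \emph{that} face, which need not be an edge of $\pi$ --- you need a short chain argument around $w$ using Property~(\ref{item:consistent-edges}) to force the close bend onto an edge of $\pi$, or you can simply fall back on the reflection argument.
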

\begin{proof}
  First note that every interface path of a glueable subgraph is a
  subpath of a face of $G$.  Thus, interface paths have length at most
  $\Delta_F - 1$.  We show that the maximum rotation of a path of this
  length satisfies the claimed bounds.  Proving that the absolute
  value of the minimum rotation also satisfies these bounds is
  symmetric.  

  Consider the case that $\mathcal K$ is an optimal Kandinsky
  representation.  As $G$ admits a 1-bend
  representation~\cite{fkk-2dpg-97}, there exists a representation
  with $m$ bends and an optimal representation $\mathcal K$ has at
  most $m$ bends.  Thus, the edges on the interface path have at most
  $m$ bends contributing rotation at most $m$.  An interface path of
  length (in terms of number of edges) at most $\Delta_F - 1$ has at
  most $\Delta_F - 2$ inner vertices.  If the rotation of each inner
  vertex is at most $1$ we get the claimed inequality $\rho \le m +
  \Delta_F - 2$.  Consider a vertex $v$ with rotation $2$.  Due to
  Property~(\ref{item:0-deg-bend-assignment}), at least one of the two
  edges in the path incident to $v$ must have rotation $-1$.  Thus, we
  can account rotation~1 even for vertices with rotation~$2$, yielding
  $\rho \le m + \Delta_F - 2$.

  In case $\mathcal K$ is a $b$-bend Kandinsky representation, the
  rotation contributed by the edges is at most $b\cdot (\Delta_F -
  1)$.  Together with the $\Delta_F - 2$ upper bound for the vertices,
  this gives $\rho \le b\cdot \Delta_F - b + \Delta_F - 2 = (b +
  1)\cdot \Delta_F - b - 2$.
\end{proof}

We get the following corollaries by plugging the bounds of
Lemma~\ref{lem:rotation-interface-paths} into
Theorem~\ref{thm:kandinsky-general-running-time}, using that the
branch width of series-parallel graphs is~$2$, and that the branch
width of planar graphs is in $O(\sqrt{n})$ (in fact, the branch width
of a planar graph is at most $2.122\sqrt{n}$~\cite{ft-nubdpg-06}).

\begin{corollary}
  Let $G$ be a plane graph with maximum face-degree $\Delta_F$, and
  branch width $k$.  An optimal Kandinsky representation can be
  computed in $O(n^3 + n\cdot k\cdot (2m + 2\Delta_F -
  3)^{\left\lfloor 1.5k\right\rfloor - 1}\cdot 330^k)$ time.  An
  optimal $b$-bend Kandinsky representation can be computed in $O(n^3
  + n\cdot k\cdot ((2b + 2)\cdot \Delta_F - 2b - 3)^{\left\lfloor
      1.5k\right\rfloor - 1}\cdot 330^k)$ time.
\end{corollary}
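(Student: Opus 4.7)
The proof is essentially a direct plug-in of the bounds from Lemma~\ref{lem:rotation-interface-paths} into the running time of Theorem~\ref{thm:kandinsky-general-running-time}, so the plan is to handle the two cases in turn after verifying that both invocations are legitimate.

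For the first statement, I would simply note that Theorem~\ref{thm:kandinsky-general-running-time} computes a bend-optimal Kandinsky representation in $O(n^3 + n \cdot k \cdot (2\rho+1)^{\lfloor 1.5k\rfloor - 1} \cdot 330^k)$ time and that the base $2\rho+1$ is monotone in $\rho$. The first bullet of Lemma~\ref{lem:rotation-interface-paths} gives $\rho \le m + \Delta_F - 2$ for any optimal representation, hence $2\rho + 1 \le 2m + 2\Delta_F - 3$, and the claimed bound follows by substitution. The $O(n^3)$ term coming from the computation of the sphere cut decomposition is unaffected.

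For the second statement, about optimal $b$-bend Kandinsky representations, I would first observe that the dynamic program underlying Theorem~\ref{thm:kandinsky-general-running-time} adapts without asymptotic overhead: at the leaves of the sphere cut decomposition, which correspond to individual edges, it suffices to assign cost $\infty$ to any interface class that encodes more than $b$ bends, so that every interface class considered in a merging step corresponds to a $b$-bend representation of a glueable subgraph. The second bullet of Lemma~\ref{lem:rotation-interface-paths} then applies and gives $\rho \le (b+1)\Delta_F - b - 2$, so $2\rho + 1 \le (2b+2)\Delta_F - 2b - 3$, yielding the second running time after substitution.

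There is no real obstacle here; the only minor point worth being explicit about is that Lemma~\ref{lem:rotation-interface-paths} bounds $\rho$ only in representations of the type the algorithm considers, which is why the restriction to $b$-bend representations at the leaves must be in place before invoking the second bound. Once that is noted, the corollary is proved by plain arithmetic substitution.
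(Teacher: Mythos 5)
Your proposal is correct and follows exactly the paper's route: the corollary is obtained by substituting the two bounds on $\rho$ from Lemma~\ref{lem:rotation-interface-paths} into the running time of Theorem~\ref{thm:kandinsky-general-running-time}, which is all the paper does. Your extra remark about enforcing the $b$-bend restriction at the leaves of the decomposition is a sensible clarification of a point the paper leaves implicit, but it does not change the argument.
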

\begin{corollary}
  For series-parallel graphs an optimal Kandinsky representation can
  be computed in $O(n^3)$ time.
\end{corollary}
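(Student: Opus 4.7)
The plan is to invoke the preceding corollary with $k=2$, and show that in this setting the exponential factors collapse to polynomial ones in a way that is dominated by the $O(n^3)$ cost of computing the sphere cut decomposition.

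First I would cite (or recall) the standard fact that every series-parallel graph has branch width at most~$2$; this is the only structural input we need, and it lets us plug $k=2$ into the running time bound from the previous corollary. With $\lfloor 1.5 \cdot 2 \rfloor - 1 = 2$, the per-node merging cost becomes $O\!\left((2m + 2\Delta_F - 3)^{2}\cdot 330^{2}\right)$, which is $O((m+\Delta_F)^2)$ up to constants.

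Next, I would use the trivial bounds for plane graphs, namely $m = O(n)$ (by Euler's formula) and $\Delta_F = O(n)$, so the merging cost per node is $O(n^2)$. Summed over the $O(n)$ inner nodes of the sphere cut decomposition, this contributes $O(n^3)$ to the running time, which matches the $O(n^3)$ cost of computing the sphere cut decomposition itself. Hence the total running time is $O(n^3)$.

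The main (and essentially only) obstacle is to make sure that the branch width bound of~$2$ for series-parallel graphs is genuinely applicable to the plane case used by Theorem~\ref{thm:kandinsky-general-running-time} (in particular, that the preprocessing which attaches a 4-cycle to each degree-1 vertex does not blow up the branch width beyond a constant). This is easy to check since attaching a constant-size subgraph at a single vertex preserves the $O(1)$ branch width. Everything else is substitution and arithmetic.
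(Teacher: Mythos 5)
Your proposal is correct and follows exactly the paper's route: plug the branch width bound $k=2$ for series-parallel graphs into the preceding corollary, observe that $(2m+2\Delta_F-3)^{\lfloor 1.5k\rfloor-1}\cdot 330^k = O(n^2)$, and conclude that the total cost is dominated by the $O(n^3)$ term. Your extra check that the degree-1 preprocessing preserves constant branch width is a sensible (if unstated in the paper) detail, but does not change the argument.
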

\begin{corollary}
  For plane graphs an optimal Kandinsky representation can be computed
  in $2^{O(\sqrt{n} \log n)}$ time.
\end{corollary}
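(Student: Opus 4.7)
The plan is simply to plug the known bounds into Theorem~\ref{thm:kandinsky-general-running-time}. That theorem gives a running time of $O(n^3 + n\cdot k\cdot (2\rho+1)^{\lfloor 1.5k\rfloor -1}\cdot 330^k)$, where $k$ is the branch width of the input graph and $\rho$ is the maximum rotation. I need to bound both $k$ and $\rho$ for an arbitrary plane graph on $n$ vertices.

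For the branch width, I would invoke the result of Fomin and Thilikos~\cite{ft-nubdpg-06} (already cited in the paragraph preceding the corollary), which states that a planar graph has branch width at most $2.122\sqrt{n}$, so $k \in O(\sqrt n)$. For the maximum rotation, I would apply the first bound of Lemma~\ref{lem:rotation-interface-paths}, namely $\rho \le m+\Delta_F-2$ for an optimal Kandinsky representation. Since $G$ is planar we have $m \le 3n-6$, and trivially $\Delta_F \le 2m \le 6n-12$, so $\rho \in O(n)$ and therefore $2\rho+1 \in O(n)$.

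Plugging these into the theorem gives
\[
O\bigl(n^3 + n\cdot \sqrt{n}\cdot O(n)^{\lfloor 1.5\cdot 2.122\sqrt{n}\rfloor-1}\cdot 330^{2.122\sqrt{n}}\bigr).
\]
The dominant factor is $O(n)^{O(\sqrt n)} = 2^{O(\sqrt n \log n)}$; the constant-base exponential $330^{O(\sqrt n)} = 2^{O(\sqrt n)}$ is absorbed, as are the polynomial prefactors $n^{3}$ and $n^{1.5}$ (both are $2^{O(\log n)} = 2^{o(\sqrt n \log n)}$). Hence the overall running time is $2^{O(\sqrt n \log n)}$, as claimed.

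There is essentially no obstacle here; the statement is a direct plug-in of the planar branch-width bound and the first case of Lemma~\ref{lem:rotation-interface-paths}. The only point that warrants a sentence of justification is that it is legitimate to use $\rho \le m+\Delta_F-2$ even when $\Delta_F$ is as large as $\Theta(n)$ — this is fine because the bound on $\rho$ enters only through $\log(2\rho+1) = O(\log n)$ in the exponent, which is exactly what is needed to absorb into the $2^{O(\sqrt n \log n)}$ factor.
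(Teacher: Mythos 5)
Your proposal is correct and follows exactly the route the paper intends: the paper justifies this corollary only by the sentence preceding it, which says to plug the branch-width bound $k \le 2.122\sqrt{n}$ of Fomin and Thilikos and the bound $\rho \le m + \Delta_F - 2 = O(n)$ from Lemma~\ref{lem:rotation-interface-paths} into Theorem~\ref{thm:kandinsky-general-running-time}. Your bookkeeping of the resulting factors ($O(n)^{O(\sqrt{n})} = 2^{O(\sqrt{n}\log n)}$ dominating $330^{O(\sqrt{n})}$ and the polynomial terms) is exactly what is needed and matches the paper's intent.
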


\section{Conclusion}
\label{sec:conclusion}

In this paper we have shown that bend minimization in the Kandinsky
model is NP-complete, thus answering a question that was open for
almost two decades.  The proof also extends to the case that every
edge may have at most one bend and for the case that empty faces are
allowed.

On the positive side, we gave an algorithm with running time $2^{O(k
  \log n)}$ for graphs of bounded branch width.  In fact, the problem
is FPT with respect to $k + b + \Delta_F$, where $k$ is the branch
width, $b$ is the maximum number of bends on a single edge in the
drawing and $\Delta_F$ is the size of the largest face in the
combinatorial embedding.  For general planar graphs this gives a
subexponential exact algorithm with running time $2^{O(\sqrt{n}\log
  n)}$.

We leave open the question whether the number of parameters used to
obtain an FPT algorithm can be decreased.  Is the problem $W[1]$-hard
when parameterized by branch width only?

\bigskip

\noindent\textbf{Acknowledgments.}  We thank Therese Biedl for
discussions.

\bibliographystyle{abbrv}
\bibliography{kandinsky}

\newcommand{\bibsoda}[2]{Proceedings of the #1 Annual ACM-SIAM Symposium on
  Discrete Algorithms (SODA'#2)} \newcommand{\bibgd}[2]{Proceedings of the #1
  International Symposium on Graph Drawing (GD'#2)}
  \newcommand{\bibfocs}[2]{Proceedings of the IEEE #1 Annual Symposium on
  Foundations of Computer Science (FOCS'#2)}
  \newcommand{\bibinfovis}[1]{Proceedings of the IEEE Symposium on Information
  Visualization (InfoVis'#1)} \newcommand{\bibvis}[1]{Proceedings of the IEEE
  Conference on Visualization (Vis'#1)} \newcommand{\bibpvis}[1]{Proceedings of
  the IEEE Pacific Visualisation Symposium (PacificVis'#1)}
  \newcommand{\bibsoftvis}[2]{Proceedings of the #1 ACM Symposium on Software
  Visualization (SoftVis'#2)} \newcommand{\bibeurocg}[2]{Proceedings of the #1
  European Workshop on Computational Geometry (EuroCG'#2)}
  \newcommand{\bibsocg}[2]{Proceedings of the #1 Annual Symposium on
  Computational Geometry (SoCG'#2)} \newcommand{\bibwads}[2]{Proceedings of the
  #1 International Symposium on Algorithms and Data Structures (WADS'#2)}
  \newcommand{\bibwg}[2]{Proceedings of the #1 Workshop on Graph-Theoretic
  Concepts in Computer Science (WG'#2)} \newcommand{\bibgta}{Proceedings of the
  Conference at Graph Theory and Applications}
  \newcommand{\bibisaac}[2]{Proceedings of the #1 International Symposium on
  Algorithms and Computation (ISAAC'#2)} \newcommand{\bibcocoon}[2]{Proceedings
  of the #1 Annual International Conference on Computing and Combinatorics
  (COCOON'#2)} \newcommand{\bibtamc}[2]{Proceedings of the #1 Annual Conference
  on Theory and Applications of Models of Computation (TAMC'#2)}
  \newcommand{\bibicalp}[2]{Proceedings of the #1 International Colloquium on
  Automata, Languages and Programming (ICALP'#2)}
\begin{thebibliography}{10}

\bibitem{bdd-codmnb-00}
P.~Bertolazzi, G.~Di~Battista, and W.~Didimo.
\newblock Computing orthogonal drawings with the minimum number of bends.
\newblock {\em IEEE Transactions on Computers}, 49(8):826--840, 2000.

\bibitem{bk-bhod-98}
T.~Biedl and G.~Kant.
\newblock A better heuristic for orthogonal graph drawings.
\newblock {\em Computational Geometry: Theory and Applications}, 9:159--180,
  1998.

\bibitem{bkrw-odfc-14}
T.~Bl{\"a}sius, M.~Krug, I.~Rutter, and D.~Wagner.
\newblock Orthogonal graph drawing with flexibility constraints.
\newblock {\em Algorithmica}, 68(4):859--885, 2014.

\bibitem{blr-ogdie-14}
T.~Bl{\"a}sius, S.~Lehmann, and I.~Rutter.
\newblock Orthogonal graph drawing with inflexible edges.
\newblock {\em CoRR}, abs/1404.2943, 2014.

\bibitem{brw-oodc-13}
T.~Bl{\"a}sius, I.~Rutter, and D.~Wagner.
\newblock Optimal orthogonal graph drawing with convex bend costs.
\newblock In F.~V. Fomin, R.~Freivalds, M.~Kwiatkowsak, and D.~Peleg, editors,
  {\em \bibicalp{40th}{13}}, volume 7965 of {\em Lecture Notes in Computer
  Science}, pages 184--195. Springer Berlin/Heidelberg, 2013.

\bibitem{ck-am-12}
S.~Cornelsen and A.~Karrenbauer.
\newblock Accelerated bend minimization.
\newblock {\em Journal of Graph Algorithms and Applications}, 16(3):635--650,
  2012.

\bibitem{dk-obsps-12}
M.~{de Berg} and A.~Khosravi.
\newblock Optimal binary space partitions for segments in the plane.
\newblock {\em International Journal of Computational Geometry \&
  Applications}, 22(3):187--206, 2012.

\bibitem{dpbf-eeapg-10}
F.~Dorn, E.~Penninkx, H.~Bodlaender, and F.~Fomin.
\newblock Efficient exact algorithms on planar graphs: Exploiting sphere cut
  decompositions.
\newblock {\em Algorithmica}, 58(3):790--810, 2010.

\bibitem{e-aldta-03}
M.~Eiglsperger.
\newblock {\em Automatic Layout of {UML} Class Diagrams: A
  Topology-Shape-Metrics Approach}.
\newblock PhD thesis, Universit{\"a}t T{\"u}bingen, 2003.

\bibitem{egkkj-aldo-04}
M.~Eiglsperger, C.~Gutwenger, M.~Kaufmann, J.~Kupke, M.~J{\"u}nger, S.~Leipert,
  K.~Klein, P.~Mutzel, and M.~Siebenhaller.
\newblock Automatic layout of {UML} class diagrams in orthogonal style.
\newblock {\em Information Visualization}, 3(3):189--208, 2004.

\bibitem{ft-nubdpg-06}
F.~V. Fomin and D.~M. Thilikos.
\newblock New upper bounds on the decomposability of planar graphs.
\newblock {\em Journal of Graph Theory}, 51(1):53--81, 2006.

\bibitem{fkk-2dpg-97}
U.~F{\"o}{\ss}meier, G.~Kant, and M.~Kaufmann.
\newblock 2-visibility drawings of planar graphs.
\newblock In S.~North, editor, {\em \bibgd{5th}{96}}, volume 1190 of {\em
  Lecture Notes in Computer Science}, pages 155--168. Springer
  Berlin/Heidelberg, 1997.

\bibitem{fk-ddgn-95}
U.~F{\"o}{\ss}meier and M.~Kaufmann.
\newblock Drawing high degree graphs with low bend numbers.
\newblock In F.~J. Brandenburg, editor, {\em \bibgd{4th}{95}}, volume 1027 of
  {\em Lecture Notes in Computer Science}, pages 254--266. Springer
  Berlin/Heidelberg, 1995.

\bibitem{gt-ccurpt-01}
A.~Garg and R.~Tamassia.
\newblock On the computational complexity of upward and rectilinear planarity
  testing.
\newblock {\em SIAM Journal on Computing}, 31(2):601--625, 2001.

\bibitem{gt-obdpgot-08}
Q.-P. Gu and H.~Tamaki.
\newblock Optimal branch-decomposition of planar graphs in {$O(n^3)$} time.
\newblock {\em ACM Transactions on Algorithms}, 4(3):30:1--30:13, 2008.

\bibitem{km-qodpg-98}
G.~W. Klau and P.~Mutzel.
\newblock Quasi-orthogonal drawing of planar graphs.
\newblock Research Report MPI-I-98-1-013, Max-Planck-Institut f{\"u}r
  Informatik, 1998.

\bibitem{l-aeglv-80}
C.~E. Leiserson.
\newblock Area-efficient graph layouts (for {VLSI}).
\newblock In {\em \bibfocs{21st}{80}}, pages 270--281, 1980.

\bibitem{st-crr-94}
P.~Seymour and R.~Thomas.
\newblock Call routing and the ratcatcher.
\newblock {\em Combinatorica}, 14(2):217--241, 1994.

\bibitem{t-emn-87}
R.~Tamassia.
\newblock On embedding a graph in the grid with the minimum number of bends.
\newblock {\em SIAM Journal on Computing}, 16(3):421--444, 1987.

\bibitem{t-hgdv-13}
R.~Tamassia, editor.
\newblock {\em Handbook of Graph Drawing and Visualization}.
\newblock Number~81 in Discrete Mathematics and Its Applications. Chapman and
  Hall/CRC, 2013.

\bibitem{tbb-agdrd-88}
R.~Tamassia, G.~D. Battista, and C.~Batini.
\newblock Automatic graph drawing and readability of diagrams.
\newblock {\em IEEE Transactions on Systems, Man, and Cybernetics}, 18:61--79,
  1988.

\bibitem{v-ucvc-81}
L.~G. Valiant.
\newblock Universality considerations in {VLSI} circuits.
\newblock {\em IEEE Transactions on Computers}, 30(2):135--140, 1981.

\end{thebibliography}

\end{document}